\begin{document}
\theoremstyle{plain}
\newtheorem*{ithm}{Theorem}
\newtheorem*{iprop}{Proposition}
\newtheorem*{idefn}{Definition}
\newtheorem{thm}{Theorem}[section]
\newtheorem{lem}[thm]{Lemma}
\newtheorem{dlem}[thm]{Lemma/Definition}
\newtheorem{prop}[thm]{Proposition}
\newtheorem{set}[thm]{Setting}
\newtheorem{cor}[thm]{Corollary}
\newtheorem*{icor}{Corollary}
\theoremstyle{definition}
\newtheorem{assum}[thm]{Assumption}
\newtheorem{notation}[thm]{Notation}
\newtheorem{setting}[thm]{Setting}
\newtheorem{defn}[thm]{Definition}
\newtheorem{clm}[thm]{Claim}
\newtheorem{ex}[thm]{Example}
\theoremstyle{remark}
\newtheorem{rem}[thm]{Remark}
\newcommand{\unit}{\mathbb I}
\newcommand{\ali}[1]{{\caB}_{[ #1 ,\infty)}}
\newcommand{\alm}[1]{{\caB}_{(-\infty, #1 ]}}
\newcommand{\nn}[1]{\lV #1 \rV}
\newcommand{\br}{{\mathbb R}}
\newcommand{\dm}{{\rm dom}\mu}
\newcommand{\lb}{l_{\bb}(n,n_0,k_R,k_L,\lal,\bbD,\bbG,Y)}
\newcommand{\Ad}{\mathop{\mathrm{Ad}}\nolimits}
\newcommand{\Proj}{\mathop{\mathrm{Proj}}\nolimits}
\newcommand{\RRe}{\mathop{\mathrm{Re}}\nolimits}
\newcommand{\RIm}{\mathop{\mathrm{Im}}\nolimits}
\newcommand{\Wo}{\mathop{\mathrm{Wo}}\nolimits}
\newcommand{\Prim}{\mathop{\mathrm{Prim}_1}\nolimits}
\newcommand{\Primz}{\mathop{\mathrm{Prim}}\nolimits}
\newcommand{\ClassA}{\mathop{\mathrm{ClassA}}\nolimits}
\newcommand{\Class}{\mathop{\mathrm{Class}}\nolimits}
\newcommand{\diam}{\mathop{\mathrm{diam}}\nolimits}
\def\qed{{\unskip\nobreak\hfil\penalty50
\hskip2em\hbox{}\nobreak\hfil$\square$
\parfillskip=0pt \finalhyphendemerits=0\par}\medskip}
\def\proof{\trivlist \item[\hskip \labelsep{\bf Proof.\ }]}
\def\endproof{\null\hfill\qed\endtrivlist\noindent}
\def\proofof[#1]{\trivlist \item[\hskip \labelsep{\bf Proof of #1.\ }]}
\def\endproofof{\null\hfill\qed\endtrivlist\noindent}
\numberwithin{equation}{section}

\newcommand{\kakunin}[1]{{\color{red}}}

\newcommand{\oo}{{\boldsymbol\omega}}
\newcommand{\ctv}{\caC_{(\theta,\varphi)}}
\newcommand{\btv}{\caB_{(\theta,\varphi)}}
\newcommand{\amf}{\caB}
\newcommand{\at}{\caA_{\bbZ^2}}
\newcommand{\oz}{\caO_0}
\newcommand{\caA}{{\mathcal A}}
\newcommand{\caB}{{\mathcal B}}
\newcommand{\caC}{{\mathcal C}}
\newcommand{\caD}{{\mathcal D}}
\newcommand{\caE}{{\mathcal E}}
\newcommand{\caF}{{\mathcal F}}
\newcommand{\caG}{{\mathcal G}}
\newcommand{\caH}{{\mathcal H}}
\newcommand{\caI}{{\mathcal I}}
\newcommand{\caJ}{{\mathcal J}}
\newcommand{\caK}{{\mathcal K}}
\newcommand{\caL}{{\mathcal L}}
\newcommand{\caM}{{\mathcal M}}
\newcommand{\caN}{{\mathcal N}}
\newcommand{\caO}{{\mathcal O}}
\newcommand{\caP}{{\mathcal P}}
\newcommand{\caQ}{{\mathcal Q}}
\newcommand{\caR}{{\mathcal R}}
\newcommand{\caS}{{\mathcal S}}
\newcommand{\caT}{{\mathcal T}}
\newcommand{\caU}{{\mathcal U}}
\newcommand{\caV}{{\mathcal V}}
\newcommand{\caW}{{\mathcal W}}
\newcommand{\caX}{{\mathcal X}}
\newcommand{\caY}{{\mathcal Y}}
\newcommand{\caZ}{{\mathcal Z}}
\newcommand{\bba}{{\mathbb a}}
\newcommand{\bbA}{{\mathbb A}}
\newcommand{\bbB}{{\mathbb B}}
\newcommand{\bbC}{{\mathbb C}}
\newcommand{\bbD}{{\mathbb D}}
\newcommand{\bbE}{{\mathbb E}}
\newcommand{\bbF}{{\mathbb F}}
\newcommand{\bbG}{{\mathbb G}}
\newcommand{\bbH}{{\mathbb H}}
\newcommand{\bbI}{{\mathbb I}}
\newcommand{\bbJ}{{\mathbb J}}
\newcommand{\bbK}{{\mathbb K}}
\newcommand{\bbL}{{\mathbb L}}
\newcommand{\bbM}{{\mathbb M}}
\newcommand{\bbN}{{\mathbb N}}
\newcommand{\bbO}{{\mathbb O}}
\newcommand{\bbP}{{\mathbb P}}
\newcommand{\bbQ}{{\mathbb Q}}
\newcommand{\bbR}{{\mathbb R}}
\newcommand{\bbS}{{\mathbb S}}
\newcommand{\bbT}{{\mathbb T}}
\newcommand{\bbU}{{\mathbb U}}
\newcommand{\bbV}{{\mathbb V}}
\newcommand{\bbW}{{\mathbb W}}
\newcommand{\bbX}{{\mathbb X}}
\newcommand{\bbY}{{\mathbb Y}}
\newcommand{\bbZ}{{\mathbb Z}}
\newcommand{\str}{^*}
\newcommand{\lv}{\left \vert}
\newcommand{\rv}{\right \vert}
\newcommand{\lV}{\left \Vert}
\newcommand{\rV}{\right \Vert}
\newcommand{\la}{\left \langle}
\newcommand{\ra}{\right \rangle}
\newcommand{\ltm}{\left \{}
\newcommand{\rtm}{\right \}}
\newcommand{\lcm}{\left [}
\newcommand{\rcm}{\right ]}
\newcommand{\ket}[1]{\lv #1 \ra}
\newcommand{\bra}[1]{\la #1 \rv}
\newcommand{\lmk}{\left (}
\newcommand{\rmk}{\right )}
\newcommand{\al}{{\mathcal A}}
\newcommand{\md}{M_d({\mathbb C})}
\newcommand{\ainn}{\mathop{\mathrm{AInn}}\nolimits}
\newcommand{\id}{\mathop{\mathrm{id}}\nolimits}
\newcommand{\Tr}{\mathop{\mathrm{Tr}}\nolimits}
\newcommand{\Ran}{\mathop{\mathrm{Ran}}\nolimits}
\newcommand{\Ker}{\mathop{\mathrm{Ker}}\nolimits}
\newcommand{\Aut}{\mathop{\mathrm{Aut}}\nolimits}
\newcommand{\spn}{\mathop{\mathrm{span}}\nolimits}
\newcommand{\Mat}{\mathop{\mathrm{M}}\nolimits}
\newcommand{\UT}{\mathop{\mathrm{UT}}\nolimits}
\newcommand{\DT}{\mathop{\mathrm{DT}}\nolimits}
\newcommand{\GL}{\mathop{\mathrm{GL}}\nolimits}
\newcommand{\spa}{\mathop{\mathrm{span}}\nolimits}
\newcommand{\supp}{\mathop{\mathrm{supp}}\nolimits}
\newcommand{\rank}{\mathop{\mathrm{rank}}\nolimits}
\newcommand{\idd}{\mathop{\mathrm{id}}\nolimits}
\newcommand{\ran}{\mathop{\mathrm{Ran}}\nolimits}
\newcommand{\dr}{ \mathop{\mathrm{d}_{{\mathbb R}^k}}\nolimits} 
\newcommand{\dc}{ \mathop{\mathrm{d}_{\cc}}\nolimits} \newcommand{\drr}{ \mathop{\mathrm{d}_{\rr}}\nolimits} 
\newcommand{\zin}{\mathbb{Z}}
\newcommand{\rr}{\mathbb{R}}
\newcommand{\cc}{\mathbb{C}}
\newcommand{\ww}{\mathbb{W}}
\newcommand{\nan}{\mathbb{N}}\newcommand{\bb}{\mathbb{B}}
\newcommand{\aaa}{\mathbb{A}}\newcommand{\ee}{\mathbb{E}}
\newcommand{\pp}{\mathbb{P}}
\newcommand{\wks}{\mathop{\mathrm{wk^*-}}\nolimits}
\newcommand{\mk}{{\Mat_k}}
\newcommand{\mnz}{\Mat_{n_0}}
\newcommand{\mn}{\Mat_{n}}
\newcommand{\dist}{\dc}
\newcommand{\braket}[2]{\left\langle#1,#2\right\rangle}
\newcommand{\ketbra}[2]{\left\vert #1\right \rangle \left\langle #2\right\vert}
\newcommand{\abs}[1]{\left\vert#1\right\vert}
\newcommand{\trl}[2]
{T_{#1}^{(\theta,\varphi), \Lambda_{#2},\bar V_{#1,\Lambda_{#2}}}}
\newcommand{\trlz}[1]
{T_{#1}^{(\theta,\varphi), \Lambda_{0},\unit}}
\newcommand{\trlt}[2]
{T_{#1}^{(\theta,\varphi), \Lambda_{#2}+t_{#2}\bm e_{\Lambda_{#2}},\bar V_{#1,\Lambda_{#2}+t_{#2}\bm e_{\Lambda_{#2}}}}}
\newcommand{\trltj}[4]
{T_{#1}^{(\theta,\varphi), \Lambda_{#2}^{(#3)}+t_{#4}\bm e_{\Lambda_{#2}^{(#3)}},\bar V_{#1,\Lambda_{#2}^{(#3)}+t_{#4}\bm e_{\Lambda_{#2}^{(#3)}}}}}
\newcommand{\trltjp}[4]
{T_{#1}^{(\theta,\varphi), {\Lambda'}_{#2}^{(#3)}+t_{#4}'\bm e_{{\Lambda'}_{#2}^{(#3)}},\bar V_{#1,{\Lambda'}_{#2}^{(#3)}+t_{#4}'\bm e_{{\Lambda'}_{#2}^{(#3)}}}}}
\newcommand{\trlta}[2]
{T_{#1}^{(\theta,\varphi), \Lambda_{#2}^{t_{#2}},\bar V_{#1,\Lambda_{#2}^{t_{#2}}}}}
\newcommand{\trltb}[2]
{T_{#1}^{(\theta,\varphi), \Lambda_{#2}+t\bm e_{\Lambda_{#2}},\bar V_{#1,\Lambda_{#2}+t\bm e_{\Lambda_{#2}}}}}

\newcommand{\trlpt}[2]
{T_{#1}^{(\theta,\varphi), \Lambda_{#2}'+t_{#2}'\bm e_{\Lambda_{#2}'},\bar V_{#1,\Lambda_{#2}'+t_{#2}'\bm e_{\Lambda_{#2}'}}}}

\newcommand{\trll}[3]
{T_{#1, #3}^{(\theta,\varphi), \Lambda_{#2},\bar V_{#1,\Lambda_{#2}}}}
\newcommand{\trlp}[2]
{T_{#1}^{(\theta,\varphi), \Lambda_{#2}',\bar V_{#1,\Lambda_{#2}'}}}
\newcommand{\trlpp}[2]
{T_{#1}^{(\theta,\varphi), \Lambda_{#2}'',\bar V_{#1,\Lambda_{#2}''}}}
\newcommand{\trlj}[3]
{T_{\rho_{#1}}^{(\theta,\varphi), \Lambda_{#2}^{(#3)}, V_{\rho_{#1},\Lambda_{#2}^{(#3)}}}}
\newcommand{\trljp}[3]
{T_{{\rho'}_{#1}}^{(\theta,\varphi), {\Lambda'}_{#2}^{(#3)},V_{\rho'_{#1},{\Lambda'}_{#2}^{(#3)}}}}
\newcommand{\wod}[3]
{W_{#1\Lambda_{#2}\Lambda_{#3}}}
\newcommand{\wodt}[3]
{{W^{\bm t}}_{#1\Lambda_{#2}\Lambda_{#3}}}
\newcommand{\comp}[2]
{{(\theta_{#1},\varphi_{#1}), \Lambda_{#2},\{\bar V_{\eta,\Lambda_{#2}}\}_\eta}}
\newcommand{\ltj}[2]{\Lambda_{#1}+{#2} \bm e_{\Lambda_{#1}} }
\newcommand{\ltjp}[2]{{\Lambda'}_{#1}+{#2} \bm e_{\Lambda_{#1}} }
\newtheorem{nota}{Notation}[section]
\def\qed{{\unskip\nobreak\hfil\penalty50
\hskip2em\hbox{}\nobreak\hfil$\square$
\parfillskip=0pt \finalhyphendemerits=0\par}\medskip}
\def\proof{\trivlist \item[\hskip \labelsep{\bf Proof.\ }]}
\def\endproof{\null\hfill\qed\endtrivlist\noindent}
\def\proofof[#1]{\trivlist \item[\hskip \labelsep{\bf Proof of #1.\ }]}
\def\endproofof{\null\hfill\qed\endtrivlist\noindent}
\newcommand{\wrl}[2]{Y_{#1}^{\Lambda_0^{(#2)}}}
\newcommand{\wrlt}[2]{\tilde Y_{#1}^{\Lambda_0^{(#2)}}}
\newcommand{\ZZ}{\bbZ_2\times\bbZ_2}
\newcommand{\SSS}{\mathcal{S}}
\newcommand{\cs}{S}
\newcommand{\ct}{t}
\newcommand{\hS}{S}
\newcommand{\vv}{{\boldsymbol v}}
\newcommand{\ala}{a}
\newcommand{\bet}{b}
\newcommand{\gam}{c}
\newcommand{\alphas}{\alpha}
\newcommand{\alphai}{\alpha^{(\sigma_{1})}}
\newcommand{\alphan}{\alpha^{(\sigma_{2})}}
\newcommand{\betas}{\beta}
\newcommand{\betai}{\beta^{(\sigma_{1})}}
\newcommand{\betan}{\beta^{(\sigma_{2})}}
\newcommand{\alphass}{\alpha^{{(\sigma)}}}
\newcommand{\uu}{V}
\newcommand{\vp}{\varsigma}
\newcommand{\vpr}{R}
\newcommand{\tg}{\tau_{\Gamma}}
\newcommand{\sgg}{\Sigma_{\Gamma}}
\newcommand{\nh}{t28}
\newcommand{\rk}{6}
\newcommand{\nii}{2}
\newcommand{\nhh}{28}
\newcommand{\sjt}{30}
\newcommand{\sjtg}{30}
\newcommand{\bcg}{\caB(\caH_{\alpha})\otimes  C^{*}(\Sigma)}
\newcommand{\pza}[1]{\pi_0\lmk\caA_{\Lambda_{#1}}\rmk''}
\newcommand{\pzac}[1]{\pi_0\lmk\caA_{\Lambda_{#1}}\rmk'}
\newcommand{\pzacc}[1]{\pi_0\lmk\caA_{\Lambda_{#1}^c}\rmk'}
\newcommand{\trlzi}[2]{T_{#1}^{(\theta,\varphi) \Lambda_0^{(#2)}\unit}}

\newcommand{\obk}{\omega_{\mathop{\mathrm{bk}}}}
\newcommand{\obd}{\omega_{\mathop{\mathrm{bd}}}}
\newcommand{\obdm}{\omega_{\mathop{\mathrm{bd}(-)}}}
\newcommand{\abk}{\caA_{\mathbb Z^2}}
\newcommand{\hu}{\mathop {\mathrm H_{U}}}
\newcommand{\hd}{\mathop {\mathrm H_{D}}}

\newcommand{\abd}{\caA_{\hu}}
\newcommand{\aloch}{\caA_{\hu,\mathop{\mathrm {loc}}}}
\newcommand{\alocg}[1]{\caA_{#1,\mathop{\mathrm {loc}}}}
\newcommand{\hbk}{\caH_{\mathop{\mathrm{bk}}}}
\newcommand{\hbd}{\caH_{\mathop{\mathrm{bd}}}}
\newcommand{\hbdm}{\caH_{\mathop{\mathrm{bd}(-)}}}
\newcommand{\pbk}{\pi_{\mathop{\mathrm{bk}}}}
\newcommand{\pbd}{\pi_{\mathop{\mathrm{bd}}}}
\newcommand{\pbdm}{\pi_{\mathop{\mathrm{bd}(-)}}}
\newcommand{\mopbk}{{\mathop{\mathrm{bk}}}}
\newcommand{\mopbd}{{\mathop{\mathrm{bd}}}}
\newcommand{\Obk}{O_{\mathop{\mathrm{bk}}}}
\newcommand{\OUbk}{O_{\mathop{\mathrm{bk}}}^U}
\newcommand{\Orbd}{O^r_{\mathop{\mathrm{bd}}}}
\newcommand{\OrUbd}{O^{rU}_{\mathop{\mathrm{bd}}}}
\newcommand{\Obkl}{O_{\mathop{\mathrm{bk},\Lambda_0}}}
\newcommand{\OUbkl}{O_{\mathop{\mathrm{bk},\Lambda_0}}^U}
\newcommand{\Orbdl}{O^r_{\mathop{\mathrm{bd},\Lambda_{r0}}}}
\newcommand{\OrUbdl}{O^{rU}_{\mathop{\mathrm{bd},\Lambda_{r0}}}}
\newcommand{\Obu}{O_{\mathop{\mathrm{bd}}}^{\mathop{\mathrm{bu}}}}
\newcommand{\Obul}{O_{\mathop{\mathrm{bd}},\lz}^{\mathop{\mathrm{bu}}}}
\newcommand{\Odl}{O_{\lzr}^{\caD}}
\newcommand{\fd}{F^{\caD}}
\newcommand{\hilb}{\mathop{\mathrm {Hilb}}_f}
\newcommand{\Obun}[1]{O_{\mathop{\mathrm{bd}#1}}^{\mathop{\mathrm{bu}}}}
\newcommand{\OrUbdn}[1]{O^{rU}_{\mathop{\mathrm{bd}#1 }}}
\newcommand{\OUbkm}{O_{\mathop{\mathrm{bk}}}^{U(-)}}
\newcommand{\Orbdm}{O^{r(-)_{\mathop{\mathrm{bd} }}}}
\newcommand{\OrUbdm}{O^{rU(-)}_{\mathop{\mathrm{bd}}}}
\newcommand{\OUbklm}{O_{\mathop{\mathrm{bk},\Lambda_{0(-)}}}^{U(-)}}
\newcommand{\Orbdlm}{O^{r(-)}_{\mathop{\mathrm{bd},\Lambda_{r0(-)}}}}
\newcommand{\OrUbdlm}{O^{rU(-)}_{\mathop{\mathrm{bd},\Lambda_{r0(-)}}}}
\newcommand{\Obum}{O_{\mathop{\mathrm{bd}}}^{\mathop{\mathrm{bu}(-)}}}
\newcommand{\Obulm}{O_{\mathop{\mathrm{bd}},{\lm {0(-)}}}^{\mathop{\mathrm{bu}(-)}}}
\newcommand{\Otot}{O_{\mathop{\mathrm{total}}}}
\newcommand{\Ototl}{O_{\mathop{\mathrm{total}}}^{\lz,\lzm}}

\newcommand{\Obj}{{\mathop{\mathrm{Obj}}}}
\newcommand{\Mor}{{\mathop{\mathrm{Mor}}}}
\newcommand{\ti}[4]{\tilde\iota\lmk(#1,#2), (#3,#4)\rmk}

\newcommand{\Cabkl}{C_{\mathop{\mathrm{bk},\lz}}}
\newcommand{\Cabk}{C_{\mathop{\mathrm{bk}}}}
\newcommand{\CaUbk}{C_{\mathop{\mathrm{bk}}}^U}
\newcommand{\Carbd}{C^r_{\mathop{\mathrm{bd}}}}
\newcommand{\CarUbd}{C^{rU}_{\mathop{\mathrm{bd}}}}
\newcommand{\CaUbkl}{C_{\mathop{\mathrm{bk},\Lambda_0}}^U}
\newcommand{\Carbdl}{C^r_{\mathop{\mathrm{bd},\Lambda_{r0}}}}
\newcommand{\CarUbdl}{C^{rU}_{\mathop{\mathrm{bd},\Lambda_{r0}}}}
\newcommand{\Cabu}{C_{\mathop{\mathrm{bd}}}^{\mathop{\mathrm{bu}}}}
\newcommand{\Cabul}{C_{\mathop{\mathrm{bd}},\lz}^{\mathop{\mathrm{bu}}}}
\newcommand{\Cadl}{C_{\lzr}^{\caD}}
\newcommand{\Hom}{\mathop{\mathrm{Hom}}}
\newcommand{\Catotl}{C_{\mathop{\mathrm{total}}}^{\lz,\lzm}}
\newcommand{\Cafin}{C_{\mathop{\mathrm{bd}},\lz}^{\mathop{\mathrm{fin}}}}

\newcommand{\mm}{\Mor_{\tilde\caM}}
\newcommand{\om}{\Obj\lmk {\tilde\caM}\rmk}
\newcommand{\omt}{\otimes_{\tilde\caM}}

\newcommand{\CaUbkm}{C_{\mathop{\mathrm{bk}}}^{U(-)}}
\newcommand{\Carbdm}{C^{r(-)}_{\mathop{\mathrm{bd}}}}
\newcommand{\CarUbdm}{C^{rU(m)}_{\mathop{\mathrm{bd}}}}
\newcommand{\CaUbklm}{C_{\mathop{\mathrm{bk},\Lambda_{0(-)}}}^{U(-)}}
\newcommand{\Carbdlm}{C^{r(-)}_{\mathop{\mathrm{bd},\Lambda_{r0(-)}}}}
\newcommand{\CarUbdlm}{C^{rU(-)}_{\mathop{\mathrm{bd},\Lambda_{r0(-)}}}}
\newcommand{\Cabum}{C_{\mathop{\mathrm{bd}}}^{\mathop{\mathrm{bu}(-)}}}
\newcommand{\Cabuln}{C_{\mathop{\mathrm{bd}},\lm{0(-)}}^{\mathop{\mathrm{bu}(-)}}}
\newcommand{\Irr}{\mathop{\mathrm{Irr}}}


\newcommand{\Tbk}[3]{T_{#1}^{(\frac{3\pi}2,\frac\pi 2), {#2},{#3}}}
\newcommand{\Vrl}[2]{V_{#1,#2}}
\newcommand{\Tbkv}[2]{T_{#1}^{(\frac{3\pi}2,\frac\pi 2), #2, V_{#1,#2}}}
\newcommand{\Tbd}[3]{T_{#1}^{\mathrm{(l)} #2 #3}}
\newcommand{\Tbdv}[2]{T_{#1}^{\mathrm{(l)} #2\Vrl{{#1}}{#2}}}

\newcommand{\zc}{\caZ\lmk\Cabul\rmk}
\newcommand{\hi}[2]{\hat\iota\lmk #1: #2\rmk}

\newcommand{\lz}{\Lambda_0}
\newcommand{\lzr}{\Lambda_{r0}}
\newcommand{\lm}[1]{{\Lambda_{#1}}}
\newcommand{\llz}{(\lz,\lzr)}
\newcommand{\lmr}[1]{{\Lambda_{r#1}}}
\newcommand{\hlm}[1]{{\hat\Lambda_{#1}}}
\newcommand{\tlm}[1]{{\tilde\Lambda_{#1}}}
\newcommand{\ld}{\Lambda}
\newcommand{\pc}{\mathcal{PC}}
\newcommand{\Cbk}{\caC_{\mopbk}}
\newcommand{\CUbk}{\caC_{\mopbk}^U}
\newcommand{\Crbd}{\caC_{\mopbd}^r}
\newcommand{\Clbd}{\caC_{\mopbd}^l}
\newcommand{\Vbk}[2]{\caV_{#1#2}^{\mopbk}}
\newcommand{\Vbd}[2]{\caV_{#1#2}^{\mopbd}}
\newcommand{\VUbd}[2]{\caV_{#1#2}^{\mopbd U} }
\newcommand{\Vbu}[2]{\caV_{#1#2}^{\mathop{\mathrm{bu}}}}
\newcommand{\bl}{\caB_l}
\newcommand{\fbk}{\caF_{\mopbk}^U}
\newcommand{\fbd}{\caF_{\mopbd}^U}
\newcommand{\gu}{\caG^U}
\newcommand{\hb}[2]{\iota^{(\lz,\lzr)}\lmk#1: #2\rmk}
\newcommand{\hfc}{\hat F^{\llz}}
\newcommand{\ffc}{F^{\llz}}

\newcommand{\lzm}{\Lambda_{0(-)}}
\newcommand{\lzrm}{\Lambda_{r0(-)}}
\newcommand{\pcm}{\mathcal{PC(-)}}
\newcommand{\CUbkm}{\caC_{\mopbk}^{U(-)}}
\newcommand{\Crbdm}{\caC_{\mopbd}^{r(-)}}
\newcommand{\Clbdm}{\caC_{\mopbd}^{l(-)}}
\newcommand{\Vbkm}[2]{\caV_{#1#2}^{\mopbk (-)}}
\newcommand{\Vbdm}[2]{\caV_{#1#2}^{\mopbd(-)}}
\newcommand{\VUbdm}[2]{\caV_{#1#2}^{\mopbd U(-)} }
\newcommand{\Vbum}[2]{\caV_{#1#2}^{\mathop{\mathrm{bu}(-)}}}
\newcommand{\fbkm}{\caF_{\mopbk}^{U(-)}}
\newcommand{\fbdm}{\caF_{\mopbd}^{U(-)}}
\newcommand{\vrdh}[2]{\caV_{#1\ld_{#2}}}
\newcommand{\vrd}{\caV_{\rho\ld}}
\newcommand{\Vrd}{V_{\rho\ld}}
\newcommand{\tvrd}{\tilde\caV_{\rho\ld}}

\newcommand{\lr}[2]{\Lambda_{(#1,0),#2,#2}}
\newcommand{\lef}[2]{\overline{\Lambda_{(#1,0),\pi-#2,#2}}}
\newcommand{\lrhu}{(\Lambda_r)^c\cap \hu}
\newcommand{\lhuc}{\Lambda^c\cap \hu}
\newcommand{\lhu}{\Lambda\cap \hu}
\newcommand{\lrhuz}{(\Lambda_{r0})^c\cap \hu}
\newcommand{\lhucz}{\Lambda_{0}^c\cap \hu}
\newcommand{\lc}[1]{(\lm #1)^c\cap\hu}

\newcommand{\zam}{\caZ_a\lmk\tilde\caM\rmk }
\newcommand{\ozam}{\Obj\lmk \zam\rmk}
\newcommand{\mzam}{\Mor_{\zam}}
\newcommand{\rpc}[2]{\lmk(#1,#2), C_{(#1,#2)}\rmk}
\newcommand{\ozt}{\otimes_{\zam}}
\newcommand{\ool}{\caO_{\omega,\ld_0}}
\newcommand{\ools}[1]{\caO_{#1,\ld_0}}
\newcommand{\col}{C_{\omega,\ld_0}}
\newcommand{\ctvz}{\caC_{(\theta_0,\varphi_0)}}
\newcommand{\cols}[1]{C_{#1,\ld_0}}
\newcommand{\coll}[1]{C_{\omega,#1}}
\newcommand{\tool}{\tilde \caO_{\omega,\ld_0}}
\newcommand{\tcols}[1]{\tilde C_{#1,\ld_0}}
\newcommand{\tcol}{\tilde C_{\omega,\ld_0}}
\newcommand{\folz}{F_{\omega,\varphi,\ld_0}}
\newcommand{\fol}[2]{F_{#1,#2,\ld_0}}

\newcommand{\change}[1]{{\color{red}{#1}}}
\title{Mixed state topological order: operator algebraic approach}
\author{Yoshiko Ogata \\Research Institute for Mathematical Sciences\\
 Kyoto University, Kyoto 606-8502 JAPAN}
\maketitle
\begin{abstract}
We study the classification problem of mixed states in two-dimensional quantum spin systems in the operator algebraic framework of quantum statistical mechanics.
We associate a braided $C^*$-tensor category to each state satisfying a mixed-state version of the approximate Haag duality.
We study how this category behaves under decoherence: suppose the state is acted by a finite depth quantum channel. We prove that the braided $C^*$-tensor category of the final state  is a braided $C^*$-tensor subcategory of  the initial state.

\end{abstract}
\section{Introduction}
The classification of topological phases of matter has attracted a lot of attention in the last two decades. As a result, many things are now known about  2-dimensional gapped ground states, i.e., on isolated zero-temperature pure states \cite{wen2004quantum}\cite{kitaev2006anyons}\cite{wen2016theory}\cite{wen2017colloquium}. Some of them are even proven mathematically rigorously. In contrast, although there is good progress \cite{dennis2002topological}\cite{konig2014generating}\cite{coser2019classification}\cite{lu2020detecting}\cite{sohal2024noisy}\cite{ellison2024towards}, our knowledge of the topological phases of open systems is still quite limited, compared to that of  isolated systems. Decoherence due to interactions with the environment is inevitable, and it is important to study the classification problem of mixed states.

In this paper, we consider this problem from the operator algebraic point of view. The operator algebraic approach to topological order using algebraic quantum field theory machinery was initiated by Naaijkens in his seminal study of Toric code and abelian quantum double models \cite{N2}\cite{N1}\cite{naaijkens2013kosaki}\cite{FN}. The advantage of this natural approach is that in this framework, we can show that anyons are invariant of the classification problem of 2-d gapped ground state phases under the assumption called approximate Haag duality\cite{MTC}. 

We introduce a mixed-state version of \cite{MTC} and derive a braided $C^*$-tensor category $\cols{\omega\otimes\psi}$ out of each state $\omega$ satisfying a mixed-state version of the approximate Haag duality. We study how this category behaves under decoherence: suppose that a state $\omega_1$ interacts with its environment (which is set to be a trivial state at time zero) for a finite time or acted by a finite depth quantum channel. Suppose that the initial state $\omega_1$ and the final state $\omega_2$ both satisfy the mixed version approximate Haag duality. We prove that the category of the final state $\cols{\omega_2\otimes\psi}$ is a subcategory of the category of the initial state $\cols{\omega_1\otimes\psi}$. Note that 
$\cols{\omega_2\otimes\psi}$ may not be a {\it full} subcategory of $\cols{\omega_1\otimes\psi}$. It means there can be two quasi-particles that are isomorphic to each other in $\cols{\omega_1\otimes\psi}$ but look distinct in $\cols{\omega_2\otimes\psi}$.

\subsection{$2$-dimensional quantum spin systems}
Now, we introduce our concrete setting. For basic notation, see Appendix \ref{notation}.
By a $2$-dimensional quantum spin system, we mean a $C^*$-algebra constructed as follows.
We denote the algebra of $d\times d$ matrices by $\Mat_{d}$.
For each $z\in\bbZ^2$,  let $\caA_{\{z\}}$ be an isomorphic copy of $\Mat_{d}$, and for any finite subset $\Lambda\subset\bbZ^2$, we set $\caA_{\Lambda} = \bigotimes_{z\in\Lambda}\caA_{\{z\}}$.
For finite $\Lambda$, the algebra $\caA_{\Lambda} $ can be regarded as the set of all bounded operators acting on
the Hilbert space $\bigotimes_{z\in\Lambda}{\bbC}^{d}$.
We use this identification freely.
If $\Lambda_1\subset\Lambda_2$, the algebra $\caA_{\Lambda_1}$ is naturally embedded in $\caA_{\Lambda_2}$ by tensoring its elements with the identity. 
For an infinite subset $\Gamma\subset \bbZ^{2}$,
$\caA_{\Gamma}$
is given as the inductive limit of the algebras $\caA_{\Lambda}$ with $\Lambda$, finite subsets of $\Gamma$.
We call $\caA_{\Gamma}$ the quantum spin system on $\Gamma$.
The two-dimensional quantum spin system is the algebra $\caA_{\bbZ^2}$,
which is denoted by $\caA$ as well.

For a subset $\Gamma_1$ of $\Gamma\subset\bbZ^{2}$,
the algebra $\caA_{\Gamma_1}$ can be regarded as a subalgebra of $\caA_{\Gamma}$. 
For $\Gamma\subset \bbR^2$, with a bit of abuse of notation, we write $\caA_{\Gamma}$
to denote $\caA_{\Gamma\cap \bbZ^2}$.
Also, $\Gamma^c$ denotes the complement of $\Gamma$ in $\bbR^2$.
A representation $\rho$ of $\caA$ on a Hilbert space $\caH$ is a nonzero $*$-homomorphism from
$\caA$ to $\caB(\caH)$, which is not necessarily unital, in this paper. 

In our framework, regions called cones play an important role.
For
each $\bm a\in \bbR^2$, $\theta\in\bbR$ and $\varphi\in (0,\pi)$,
we set 
\begin{align*}
\Lambda_{\bm a, \theta,\varphi}
:
=&\lmk \bm a+\left\{
t\bm e_{\beta}\mid t>0,\quad \beta\in (\theta-\varphi,\theta+\varphi)\right\}\rmk
\end{align*}
where  $\bm e_{\beta}=(\cos\beta,\sin\beta)$.
We call a set of this form a cone.
We set $\arg\Lambda_{\bm a, \theta,\varphi}:=[\theta-\varphi,\theta+\varphi]$, where the
 right hand side should be understood $\mod 2\pi$.
We also set $|\arg\Lambda|=2\varphi$.
and $\bm e_\Lambda:=\bm e_{\theta}$ for $\Lambda=\Lambda_{\bm a, \theta,\varphi}$.
For $\varepsilon>0$, $t\in \bbR$ and $\Lambda=\Lambda_{\bm a, \theta,\varphi}$, $\Lambda_\varepsilon$ 
denotes $\Lambda_\varepsilon=\Lambda_{\bm a, \theta,\varphi+\varepsilon}$,
$\Lambda(t):=\ld +t {\bm e}_\ld$.
For
$\theta_0\in\bbR$ and $\varphi_0\in (0,\pi)$ we consider the following set of cones:
\begin{align}
\begin{split}
&\caC_{(\theta_0,\varphi_0)}:=\left\{
\Lambda : \text{cone}\mid \arg\Lambda\cap{[\theta_0-\varphi_0,\theta_0+\varphi_0]}=\emptyset
\right\}.
\end{split}
\end{align}
Again, the equality in the parentheses should be understood $\mod 2\pi$.
Note that $\caC_{(\theta_0,\varphi_0)}$ is an upward-filtering set with respect to the inclusion relation.
For each $\Lambda_i\in \caC_{(\theta_0,\varphi_0)}$ 
$i=1,2$,  we write $\Lambda_2\leftarrow_{(\theta_0,\varphi_0)}\Lambda_1$
if $\Lambda_i=\Lambda_{\bm {a}_i,\theta_i,\varphi_i}$ with
\[
\theta_0+\varphi_0<
\theta_1-\varphi_1<\theta_1+\varphi_1<\theta_2-\varphi_2<\theta_2+\varphi_2
<\theta_0-\varphi_0+2\pi.
\]
\subsection{Mixed state Approximate Haag duality}
We introduce an approximate version of the relative Haag duality introduced in \cite{camassa2007relative}.
This corresponds to a mixed state state version of the approximate Haag duality in \cite{MTC}.
\begin{defn}\label{AHdef}
Let $\caA$ be a $2$-dimensional quantum spin system.
Let $\omega$ be a state on $\caA$
with a GNS representation $(\caH,\pi)$.
We say $\omega$ satisfies the approximate Haag duality 
if, for any $\zeta\in (0,\pi)$, $0<\varepsilon<\frac14(\pi-\zeta)$,
there exists an $R_{\zeta,\varepsilon}\ge 0$
and a decreasing function $f_{\zeta, \varepsilon}: \bbR_{\ge 0}\to \bbR_{\ge 0}$
with $\lim_{t\to\infty} f_{\zeta, \varepsilon}(t)=0$
satisfying the following : 
for any cone $\Lambda$ with $\lv\arg \Lambda\rv=2\zeta$,
there exists a unitary $U_{\Lambda,\varepsilon}\in \caU\lmk\pi\lmk\caA\rmk''\rmk$
such that
\begin{description}
\item[(i)]
\[
\pi(\caA_{\Lambda^c})'\cap \pi\lmk \caA_{\bbZ^2}\rmk''
\subset \Ad \lmk U_{\Lambda,\varepsilon}\rmk\lmk
\pi\lmk\caA_{\Lambda_{\varepsilon}(-R_{\zeta,\varepsilon})}\rmk''
\rmk,\]
 and
\item[(ii)]
for any $t\ge 0$, there exists a unitary $U_{\Lambda,\varepsilon, t}\in \caU\lmk
\pi\lmk \caA_{\Lambda_{2\varepsilon}(-t)}\rmk''\rmk$
such that
\begin{align}
\lV U_{\Lambda,\varepsilon, t}-U_{\Lambda,\varepsilon}\rV
\le f_{\zeta,\varepsilon}(t).
\end{align}
\end{description}
\end{defn}
 \subsection{Cone von Neumann algebras}\label{uni}
Let $\omega$ be a state on a $2$-dimensional quantum spin system $\caA$ with a GNS representation $(\caH,\pi)$.
For each cone $\ld$ in $\bbZ^2$, we consider the local von Neumann algebra
$\pi(\caA_{\ld})''$.
With a bit of abuse of notation, we say that $\omega$ has properly infinite cone algebras
if $\pi(\caA_{\ld})''$ is properly infinite for any cone $\ld$.
By the stabilization procedure, namely, by tensoring a pure infinite tensor product state,
we can always move to a state with properly infinite cone algebras:
\begin{lem}\label{yuuyake}
Let $\omega$ be a state on a $2$-dimensional quantum spin system $\caA$.
Let $\caB$ be another two-dimensional quantum spin system and
$\psi$ a pure infinite tensor product state on $\caB$.
Then the state $\omega\otimes \psi$ on $\caA\otimes\caB$ has properly infinite cone algebras.
\end{lem}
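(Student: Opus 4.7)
The plan is to exploit the tensor product structure of the GNS representation of $\omega\otimes\psi$ and the fact that for a pure infinite product state the cone von Neumann algebra is already a type I factor of infinite type.

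First I would identify the GNS triple of $\omega\otimes\psi$ with $\lmk\caH_\omega\otimes\caH_\psi,\; \pi_\omega\otimes\pi_\psi,\; \Omega_\omega\otimes\Omega_\psi\rmk$. Under this identification, for any cone $\ld\subset\bbZ^2$,
\[
\pi_{\omega\otimes\psi}\bigl((\caA\otimes\caB)_{\ld}\bigr)''
=\pi_\omega(\caA_{\ld})''\,\bar\otimes\,\pi_\psi(\caB_{\ld})'',
\]
since $(\caA\otimes\caB)_{\ld}=\caA_{\ld}\otimes\caB_{\ld}$ as spatially independent $C^*$-subalgebras.

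Next I would analyze the second tensor factor. Writing $\psi=\bigotimes_{z\in\bbZ^2}\psi_z$ with each $\psi_z$ a pure state on $\caB_{\{z\}}$, both $\psi|_{\caB_{\ld}}$ and $\psi|_{\caB_{\ld^c}}$ are again pure product states, so their GNS representations are irreducible. Standard results on GNS representations of product states of pure states then give a canonical unitary identification $\caH_\psi\cong \caH_{\psi_{\ld}}\otimes \caH_{\psi_{\ld^c}}$ carrying $\pi_\psi|_{\caB_{\ld}}$ to $\pi_{\psi_{\ld}}\otimes \unit$, and hence $\pi_\psi(\caB_{\ld})''\cong \caB(\caH_{\psi_{\ld}})\otimes \bbC\unit$. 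Because $\ld$ contains infinitely many lattice sites, $\caH_{\psi_{\ld}}$ is infinite-dimensional, so $\pi_\psi(\caB_{\ld})''$ is an infinite type I factor. In particular, it contains a sequence of isometries $v_1,v_2,\ldots$ satisfying $v_i^*v_j=\delta_{ij}\unit$ and $\sum_{i\ge 1}v_iv_i^*=\unit$ in the strong operator topology.

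Finally, the elements $\unit\,\bar\otimes\,v_i\in \pi_\omega(\caA_{\ld})''\,\bar\otimes\,\pi_\psi(\caB_{\ld})''$ inherit the same orthogonality and completeness relations and witness the proper infiniteness of the unit of this cone algebra. Since $\ld$ was an arbitrary cone, this proves the lemma. The only point requiring care is the canonical factorization $\caH_\psi\cong \caH_{\psi_{\ld}}\otimes \caH_{\psi_{\ld^c}}$ compatible with the action of $\caB_{\ld}$; this is standard because $\psi$ splits as a tensor product of pure states along $\ld$ and $\ld^c$, so no subtle ITPFI issues arise, and no other step is a real obstacle.
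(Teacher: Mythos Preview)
Your proof is correct and follows essentially the same route as the paper: identify the GNS representation as the tensor product, recognize $\pi_\psi(\caB_\ld)''$ as a type $I_\infty$ factor, and use an isometry (or family of isometries) in the second tensor leg to exhibit proper infiniteness. The paper verifies directly that each central projection $z\otimes\unit$ is infinite using a single non-unitary isometry $v\in\pi_\psi(\caB_\ld)''$, whereas you invoke a full sequence $(v_i)$ with orthogonal ranges; both arguments are equivalent and standard.
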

\begin{proof}
If $(\caH,\pi)$, $(\caH_\psi,\pi_\psi)$ are GNS representations of $\omega$, $\psi$ respectively,
then $(\caH\otimes \caH_\psi, \pi\otimes \pi_\psi )$ is a GNS representation of
$\omega\otimes \psi$.
Because $\psi$ is a pure infinite tensor product state, the cone von Neumann algebra associated to
a cone $\ld$ is
of the form
$
\pi(\caA_\ld)''\bar\otimes \pi_\psi(\caB_\ld)''
$
with $\pi_\psi(\caB_\ld)''$ a type ${\mathrm I}_\infty$-von Neumann algebra.
The center of this algebra is 
$
\caZ\lmk \pi(\caA_\ld)''\rmk \bar\otimes \bbC\unit
$
(see IV Corollary 5.11 \cite{takesaki}).
Because $\pi_\psi(\caB_\ld)''$ is an infinite factor,
there exists an isometry $v\in\pi_\psi(\caB_\ld)''$ with $vv^*\lneq\unit$.
Therefore, for any nonzero central projection $z\otimes \unit \in \caZ\lmk \pi(\caA_\ld)''\rmk \bar\otimes \bbC\unit$,
we have
\begin{align}
\begin{split}
(z\otimes v)^*(z\otimes v)=z\otimes \unit,\quad 
(z\otimes v)(z\otimes v)^*=z\otimes vv^*\lneq z\otimes\unit.
\end{split}
\end{align}
Because $z\otimes v\in \pi(\caA_\ld)''\bar\otimes \pi_{\psi}(\caB_\ld)''$, this means $z\otimes \unit$ is infinite
in $\pi(\caA_\ld)''\bar\otimes \pi_{\psi}(\caB_\ld)''$. Hence $\pi(\caA_\ld)''\bar\otimes \pi_\psi(\caB_\ld)''$
is properly infinite.
\end{proof}
Therefore, by tensoring a pure infinite tensor product state, we can always make a
state with properly infinite cone algebras.

\subsection{Mixed state super selection sector}
We consider the following notion of superselection criterion.
\begin{defn}\label{iwashi}
Let $\caA$ be a $2$-dimensional quantum spin system.
Let $\omega$ be a state on $\caA$ with properly infinite cone algebras.
Let $(\caH,\pi)$ be a GNS representation.
For each representation $\rho$ of $\caA$ on $\caH$ and a cone $\ld$,
we denote by $\caV_{\rho\ld}$ the set of all $V_{\rho\ld}\in \pi\lmk\caA\rmk''$
satisfying $\Vrd^*\Vrd\in\pi\lmk \caA_{\ld^c}\rmk'$ and
\begin{align}\label{sss}
\begin{split}
\left. \Ad \Vrd \circ\pi\right\vert_{\caA_{\ld^c}}=\left. \rho\right\vert_{\caA_{\ld^c}}.
\end{split}
\end{align}
We denote by $\caO_\omega$ the set of all representations $\rho$ of $\caA$ on $\caH$,
which have a nonempty $\vrd$ for all cones $\ld$.
For a cone $\ld_0$, the set of all $\rho\in \caO_\omega$ satisfying
\begin{align}\label{take}
\begin{split}
\rho(A)=\pi(A)\rho(\unit),\quad A\in \caA_{\ld_0^c}
\end{split}
\end{align}
is denoted by $\ool$
 \end{defn}
 \begin{rem}\label{panda}
 By the definition, we have
 $
\rho(\caA)\subset \pi(\caA)''
$
for any $\rho\in \caO_\omega$.
 If $\rho\in\ool$, we have 
\begin{align}
\begin{split}
\vrd\subset \pi\lmk\caA_{(\ld\cup\ld_0)^c}\rmk' \cap \pi\lmk\caA\rmk'',
\end{split}
\end{align}
for any cone $\ld$.
In particular, we have $\rho(\unit)\in \pi(\caA_{\ld_0^c})'$.
\end{rem}
\begin{rem}\label{namako}
The superselection criterion (\ref{sss}) is not suitable when cone algebras are not properly infinite (see subsection \ref{uni}).
In this paper, it is not a problem because we will always consider states with properly infinite cone algebras.
One important observation is that 
any state can be stabilized into a state with properly infinite cone algebras
(Lemma \ref{yuuyake}) and the category $\cols{\omega\otimes\psi}$ that we consider is defined for states after stabilization.
\end{rem}
For the latter use, we introduce 
a variation of Definition \ref{iwashi}:
\begin{defn}Let $\caA$ be a two-dimensional quantum spin system.
Let $\omega$ be a state on $\caA$ with a GNS representation $(\caH,\pi)$.
For each representation $\rho$ of $\caA$ on $\caH$ and a cone $\ld$,
we denote by $\tilde \caV_{\rho\ld}$ the set of all $V_{\rho\ld}\in  \caV_{\rho\ld}$
satisfying $\Vrd^*\Vrd=\Vrd\Vrd^*=\rho(\unit)$. 
We denote by $\tilde \caO_\omega$ the set of all representations $\rho$ of $\caA$ on $\caH$
with a nonempty $\tvrd$ for any cone $\ld$, satisfying
$\rho(\unit)\in Z\lmk {\pi(\caA)''}\rmk$.
We also set $\tool:=\tilde \caO_\omega\cap \ool$ for a cone $\ld_0$.
 \end{defn}
 \begin{rem}
 Note by definition we have $\tool\subset \ool$ and $\tvrd\subset\vrd$.
 \end{rem}

\subsection{Main Result}
By the analogous argument as in \cite{MTC} which follows the AQFT recipe \cite{DHRI},
\cite{frohlich1990braid},\cite{FRS},
\cite{longo1990index},\cite{BF}\cite{BDMRS},\cite{longo1990index} we obtain a braided $C^*$-tensor category.
\begin{thm}\label{colder}
Let $\caA$ be a $2$-dimensional quantum spin system.
Let $\omega$ be a state on  $\caA$ with properly infinite cone algebras,
 satisfying the approximate Haag duality.
Let $\ld_0$ be a cone.
Then there exists a braided $C^*$-tensor category $\col$ with objects $\ool$.
If $\ld_0'$ is another cone, then $\col$ and $\coll{\ld_0'}$
are equivalent as braided $C^*$-tensor categories.
\end{thm}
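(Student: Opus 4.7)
The plan is to adapt the DHR-style construction used in \cite{MTC} to the mixed-state setting, using the mixed-state approximate Haag duality of Definition \ref{AHdef} and the charge transporters $\Vrd\in\vrd$ from Definition \ref{iwashi} in place of strict AQFT intertwiners. I would define $\col$ to have objects $\ool$ and, for $\rho,\sigma\in\ool$, hom-sets
\[
\Hom(\rho,\sigma):=\ltm T\in\pi(\caA)''\mid T\rho(A)=\sigma(A)T \text{ for all } A\in\caA,\ T=\sigma(\unit)T=T\rho(\unit)\rtm.
\]
Composition and the $*$-operation are inherited from $\pi(\caA)''$, giving a $C^*$-category. Since $\rho(\unit)\in\pi(\caA_{\ld_0^c})'$ by Remark \ref{panda} and $\pi(\caA_{\ld_0^c})''$ is properly infinite, there are enough isometries in $\pi(\caA)''$ intertwining $\rho$ with itself to realize subobjects and direct sums in $\ool$.

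For the tensor product I would use $\Vrd$ as a charge transporter, together with the approximate Haag duality of Definition \ref{AHdef}, to produce an approximate extension $\hat\rho$ of each $\rho\in\ool$ to an endomorphism of $\pi(\caA)''$, then set $\rho\otimes\sigma:=\hat\rho\circ\sigma$ and, for $S:\rho\to\rho'$ and $T:\sigma\to\sigma'$, $S\otimes T:=S\,\hat\rho(T)$. Verifying that $\rho\otimes\sigma\in\ool$, and checking independence of the choices of $\Vrd$ and of the approximating cone up to canonical natural isomorphism, as well as associativity/unit/pentagon, would proceed by cone-geometric calculations in $\ctvz$ with error terms controlled by the decaying function $f_{\zeta,\varepsilon}(t)\to 0$ from the approximate Haag duality.

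The braiding $c_{\rho,\sigma}:\rho\otimes\sigma\to\sigma\otimes\rho$ would be built by picking cones $\ld_\rho,\ld_\sigma\in\ctvz$ with $\ld_\rho\leftarrow_{(\theta_0,\varphi_0)}\ld_\sigma$, transporters $V_{\rho\ld_\rho}$ and $V_{\sigma\ld_\sigma}$, and combining them (with appropriate normalization by $\rho(\unit),\sigma(\unit)$) into an intertwiner; spatial separation plus the approximate Haag duality forces operators at the two cones to commute up to a vanishing error, yielding the intertwining property and reducing the hexagon axioms to comparisons of three-cone configurations in $\ctvz$. For the equivalence of $\col$ and $\coll{\ld_0'}$ under a change of cone, I would fix, for each $\rho\in\ool$, a cone $\ld$ disjoint from both $\ld_0$ and $\ld_0'$ and some $\Vrd\in\vrd$, and set $F(\rho):=\Ad(\Vrd^*)\circ\rho\in\caO_{\omega,\ld_0'}$, acting on morphisms by conjugation; a symmetric construction in the opposite direction yields a quasi-inverse, and naturality of the charge transport gives compatibility with the tensor product and the braiding.

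The main obstacle is the tensor product construction. In classical DHR/AQFT strict Haag duality provides a true endomorphism of the quasi-local algebra, so composition of localized endomorphisms is immediate; here Haag duality is only approximate, so the extension $\hat\rho$ is only defined up to an error that decays with the size of the approximating cone. Turning this into a strict bifunctor, verifying associators and hexagons as \emph{exact} identities, and showing that different choices of $\Vrd$ yield canonically isomorphic data, all require carefully combining the decay $f_{\zeta,\varepsilon}(t)\to 0$ with the proper infiniteness of the cone algebras to pass to honest limits, along the lines of \cite{MTC}.
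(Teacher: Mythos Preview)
Your overall plan---follow the DHR recipe as in \cite{MTC}, with the transporters $\Vrd$ playing the role of intertwiners---is exactly what the paper does, and your hom-sets agree with the paper's $\Mor_{\col}(\rho,\sigma)$. But there is a real misconception in how you set up the tensor product, and it is precisely the point you flag as the ``main obstacle''.

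You propose to build an \emph{approximate} extension $\hat\rho$ of $\rho$ to an endomorphism of $\pi(\caA)''$, carrying along error terms controlled by $f_{\zeta,\varepsilon}(t)$, and then to pass to limits to get exact associators and hexagons. That is not how the construction works, and attempting it that way would be painful. The paper (following \cite{MTC}) instead introduces the auxiliary $C^*$-algebra
\[
\caF:=\overline{\bigcup_{\ld\in\ctvz}\pi(\caA_\ld)''}^{\,n}
\]
and shows that for each $\rho\in\ool$ there is a \emph{unique, exact} endomorphism $T_\rho$ of $\caF$ with $T_\rho\pi=\rho$, $\sigma$-weakly continuous on each $\pi(\caA_\ld)''$. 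On $\pi(\caA_\ld)''$ one simply sets $T_\rho=\Ad V_{\rho\ld'}$ for any cone $\ld'\in\ctvz$ disjoint from $\ld$; these agree on overlaps, so no error terms appear in $T_\rho$ at all. The approximate Haag duality is used not to approximate $T_\rho$, but to guarantee that the morphisms and transporters you need to feed into $T_\rho$ actually lie in $\caF$ (this is where $U_{\Lambda,\varepsilon}$ and $f_{\zeta,\varepsilon}$ enter). With $T_\rho$ in hand, $\rho\otimes\sigma:=T_\rho T_\sigma\pi$ and $R\otimes S:=R\,T_\rho(S)$ are strict, and associativity and the tensor axioms are identities, not limits. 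Similarly, the braiding is the honest norm limit
\[
\epsilon(\rho,\sigma)=\lim_{t\to\infty}V_{\sigma\ld_2(t)}\,T_\rho\!\left(V_{\sigma\ld_2(t)}^*\right),
\]
for $\ld_0\leftarrow_{(\theta_0,\varphi_0)}\ld_2$, whose existence and independence of choices follow from the approximate Haag duality; one does not combine two transporters by hand.

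Two further points. First, your sketch of subobjects (``enough isometries because $\pi(\caA_{\ld_0^c})''$ is properly infinite'') is too quick in the mixed-state setting: here $\pi(\caA)''$ need not be a factor, so for a projection $p\in\Mor_{\col}(\rho,\rho)$ one must compare $p$ with its central carrier $Z_{\pi(\caA)''}(p)$ and use proper infiniteness in $\pi(\caA_{\ld^c})'\cap\pi(\caA)''$ together with the identity $Z(\pi(\caA_{\ld^c})'\cap\pi(\caA)'')=Z(\pi(\caA)'')$; this is the one place where the argument genuinely differs from \cite{MTC}. Second, for the change of cone the paper does essentially what you suggest, but routes it through the full subcategory $\tcol$ on objects with $\rho(\unit)\in Z(\pi(\caA)'')$, which is equivalent to $\col$ and on which the \cite{MTC} argument applies verbatim.
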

A more detailed statement will be given in section \ref{derisec} Theorem \ref{same}.
Under the stabilization procedure, the categories get saturated:
\begin{thm}\label{stab}
Let $\caA$ be a $2$-dimensional quantum spin system.
Let $\omega$ be a state on $\caA$ with properly infinite cone algebras
 satisfying the approximate Haag duality.
Let $\ld_0$ be a cone.
Let $\caB_1$, $\caB_2$ be two-dimensional quantum spin systems and $\psi_1$, $\psi_2$
pure infinite tensor product states on $\caB_1$, $\caB_2$, respectively.
Then $\cols{\omega\otimes\psi_1}$ and $\cols{\omega\otimes\psi_1\otimes \psi_2}$
are equivalent as braided $C^*$-tensor categories.
\end{thm}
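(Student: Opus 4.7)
The plan is to build a braided $C^*$-tensor equivalence $F\colon \cols{\omega\otimes\psi_1}\to\cols{\omega\otimes\psi_1\otimes\psi_2}$ by tensoring with the GNS representation of $\psi_2$. Fix GNS representations $(\caH_1,\pi_1)$ of $\omega\otimes\psi_1$ and $(\caH_{\psi_2},\pi_{\psi_2})$ of $\psi_2$, so that $(\caH_1\otimes\caH_{\psi_2},\pi_1\otimes\pi_{\psi_2})$ is a GNS representation of $\omega\otimes\psi_1\otimes\psi_2$. Because $\psi_2$ is a pure infinite tensor product state, $\pi_{\psi_2}$ is irreducible, giving $\pi_{\psi_2}(\caB_2)''=\caB(\caH_{\psi_2})$, while each cone algebra $\pi_{\psi_2}(\caB_{2,\ld})''$ is a type $\mathrm{I}_\infty$ factor whose commutant in $\caB(\caH_{\psi_2})$ is $\pi_{\psi_2}(\caB_{2,\ld^c})''$. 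I define $F$ on objects by $F(\rho):=\rho\otimes\pi_{\psi_2}$, with superselection operators $V_{F(\rho)\ld}:=V_{\rho\ld}\otimes\unit$, and on morphisms by $F(T):=T\otimes\unit$. A direct check then confirms $F(\rho)\in\ools{\omega\otimes\psi_1\otimes\psi_2}$ and that $F$ is compatible with condition (\ref{take}).

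Faithfulness of $F$ is immediate. For fullness, any intertwiner $S\colon F(\rho)\to F(\rho')$ lies in $(\pi_1\otimes\pi_{\psi_2})(\caA\otimes\caB_1\otimes\caB_2)''=\pi_1(\caA\otimes\caB_1)''\bar\otimes\caB(\caH_{\psi_2})$ and must commute with $\unit_{\caH_1}\otimes\caB(\caH_{\psi_2})$; this forces $S=T\otimes\unit$ for some $T\in\pi_1(\caA\otimes\caB_1)''$, which then intertwines $\rho$ with $\rho'$. The monoidal product and braiding on $\col$ are assembled from compositions of the chosen unitaries $V_{\rho\ld}$ together with cone-disjointness data in $\ctv$; since $F$ only tensors with the identity on $\caH_{\psi_2}$, this structure transports verbatim along $F$, so once essential surjectivity is established $F$ is automatically a braided $C^*$-tensor functor.

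The main obstacle is essential surjectivity. Given $\sigma\in\ools{\omega\otimes\psi_1\otimes\psi_2}$, I aim to produce $\rho\in\ools{\omega\otimes\psi_1}$ and a unitary $W\in\caB(\caH_1\otimes\caH_{\psi_2})$ with $\Ad W\circ\sigma=\rho\otimes\pi_{\psi_2}$. The strategy is to absorb the $\caB_2$-factor: applying the superselection criterion to $\sigma$ on a suitably large cone $\ld\supset\ld_0$, together with condition (\ref{take}) on $\ld_0^c$ and the fact that $\pi_{\psi_2}(\caB_{2,\ld})''$ and $\pi_{\psi_2}(\caB_{2,\ld^c})''$ are commuting type $\mathrm{I}_\infty$ factors generating $\caB(\caH_{\psi_2})$, should yield a unitary $W$ for which $(\Ad W\circ\sigma)|_{\caB_2}$ takes the form $P\otimes\pi_{\psi_2}$ for an appropriate projection $P\in\caB(\caH_1)$. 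Conjugating by this $W$ then forces the full representation to factor as $\rho\otimes\pi_{\psi_2}$, and $\rho$ inherits its superselection structure from $WV_{\sigma\ld}W^*=V_{\rho\ld}\otimes\unit$. The delicate technical step is the coherent construction of $W$, which rests essentially on the purity and infinite-tensor-product structure of $\psi_2$.
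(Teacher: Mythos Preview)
Your functor $F$ and the faithfulness/fullness arguments are correct and coincide with the paper's approach (the paper packages them as Theorem~\ref{cofe} and Lemma~\ref{washi}). The gap is in essential surjectivity, where you have an outline rather than a proof, and one step of the outline is wrong as stated.

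The claim that ``$\rho$ inherits its superselection structure from $WV_{\sigma\ld}W^*=V_{\rho\ld}\otimes\unit$'' is unjustified: $V_{\sigma\ld}$ lives in the full algebra $\pi_1(\caA\otimes\caB_1)''\bar\otimes\caB(\caH_{\psi_2})$, and conjugating by a single $W$ gives no reason for the result to be of the form $(\text{something})\otimes\unit$. Having $\Ad W^*\circ\sigma=\rho\otimes\pi_{\psi_2}$ tells you only that $(V_{\rho\ld}\otimes\unit)^*WV_{\sigma\ld}$ commutes with $\pi((\caA\otimes\caB_1\otimes\caB_2)_{\ld^c})$; it does not produce an element of $\pi_1(\caA\otimes\caB_1)''$ implementing the equivalence of $\rho$ with $\pi_1$ on $\ld^c$. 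Thus even granting your $W$, you have not shown $\rho\in\ools{\omega\otimes\psi_1}$. A secondary issue is that $\sigma(\unit)$ need not equal $\unit$ (representations here are not assumed unital), so you cannot in general ask for $W$ to be unitary; the paper first reduces via Lemma~\ref{shachi} to the case $\sigma(\unit)\in Z(\pi(\caA\otimes\caB_1\otimes\caB_2)'')$ and then works with partial isometries whose support is this central projection.

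The paper resolves both issues with a substantial technical apparatus (Lemmas~\ref{orange}, \ref{yamabuki}, \ref{aiiro}): for each cone separately one runs a Zorn's-lemma argument to build a maximal family of partial isometries in $\pi(\caA\otimes\caB_1)''$ intertwining $\pi$ and $\sigma$ on the complementary cone, uses a conditional expectation onto $\pi(\caA\otimes\caB_1)''$ to show the family is exhaustive, and then invokes comparison theory for properly infinite projections (Kadison--Ringrose 6.3.4--6.3.5) to assemble a single partial isometry in $\pi_1(\caA\otimes\caB_1)''$ with the correct source and range. This is exactly the ``coherent construction of $W$'' you flag as delicate, together with the missing step of producing cone-local intertwiners for $\rho$; neither follows from the soft facts about type $\mathrm{I}_\infty$ factors you cite.
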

This Theorem also tells us that the choice of $\psi$ doesn't matter for 
$\cols{\omega\otimes\psi}$:
\begin{cor}\label{yofuke}
Let $\caA$ be a $2$-dimensional quantum spin system.
Let $\omega$ be a state on   $\caA$ with properly infinite cone algebras
 satisfying the approximate Haag duality.
Let $\ld_0$ be a cone.
Let $\caB$ be a two-dimensional quantum spin system and $\psi_1$, $\psi_2$
pure infinite tensor product states on $\caB$.
Then $\cols{\omega\otimes\psi_1}$ and $\cols{\omega\otimes \psi_2}$
are equivalent as braided $C^*$-tensor categories.
\end{cor}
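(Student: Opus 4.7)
The plan is to chain two applications of Theorem \ref{stab} through a canonical flip of the auxiliary tensor factors. First, apply Theorem \ref{stab} with $\caB_1=\caB_2=\caB$ and the given $\psi_1, \psi_2$ to obtain a braided equivalence
\[
\cols{\omega\otimes\psi_1}\simeq \cols{\omega\otimes\psi_1\otimes\psi_2}.
\]
Then apply Theorem \ref{stab} once more, this time stabilising $\omega$ first by $\psi_2$ and afterwards by $\psi_1$, to obtain
\[
\cols{\omega\otimes\psi_2}\simeq \cols{\omega\otimes\psi_2\otimes\psi_1}.
\]

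The bridge between these two chains is the canonical $*$-isomorphism
\[
\Phi: \caA\otimes\caB\otimes\caB\to \caA\otimes\caB\otimes\caB
\]
that is the identity on the $\caA$-factor and swaps the two copies of $\caB$. A direct calculation on elementary tensors shows $(\omega\otimes\psi_2\otimes\psi_1)\circ\Phi = \omega\otimes\psi_1\otimes\psi_2$. Since both copies of $\caB$ live on the same lattice $\bbZ^2$, $\Phi$ carries the local algebra on every cone $\Lambda$ to itself. The construction of $\coll{\ld_0}$ in Theorem \ref{colder} depends only on the state together with its cone-indexed local algebras, so $\Phi$ lifts to a braided equivalence
\[
\cols{\omega\otimes\psi_1\otimes\psi_2}\simeq \cols{\omega\otimes\psi_2\otimes\psi_1}.
\]
Chaining the three equivalences yields the corollary.

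The only nontrivial step is the last one: one must verify that the assignment of a braided $C^*$-tensor category to a state is natural with respect to $*$-isomorphisms of the underlying algebra that preserve the cone filtration. This should be immediate from the definitions of $\ools{\cdot}$ and $\coll{\ld_0}$ in terms of cone von Neumann algebras in the GNS representation, since $\Phi$ intertwines the GNS data of $\omega\otimes\psi_1\otimes\psi_2$ and $\omega\otimes\psi_2\otimes\psi_1$ and hence sends the objects, morphisms, tensor structure, and braiding of one category to those of the other. Nevertheless, this is the one piece not supplied directly by Theorem \ref{stab}, so it is where the proof requires a brief bookkeeping check rather than a citation.
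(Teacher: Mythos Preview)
Your proposal is correct and follows essentially the same route as the paper: two applications of Theorem \ref{stab} bridged by the flip isomorphism on the two auxiliary factors. The paper carries out your ``bookkeeping check'' explicitly by using GNS uniqueness to produce a unitary $U$ with $\Ad U\,(\pi_\omega\otimes\pi_{\psi_1}\otimes\pi_{\psi_2})=(\pi_\omega\otimes\pi_{\psi_2}\otimes\pi_{\psi_1})\tau$, defining the functor $H(\rho)=\Ad U\circ\rho\circ\tau^{-1}$, $H(R)=URU^*$ on the tilde categories $\tcols{\cdot}$, and then invoking Proposition \ref{sava} to pass to $\cols{\cdot}$; this is exactly the naturality-under-cone-preserving-$*$-isomorphisms that you flagged as the one nontrivial step.
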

Corollary \ref{yofuke} says that we can associate a braided $C^*$-tensor category
$\cols{\omega\otimes\psi}$
to each $\omega$ uniquely, modulo equivalence of the braided $C^*$-tensor categories.

When $\omega$ is pure, the stabilization doesn't change the category:
\begin{prop}\label{kapibara}
Let $\caA$ be a $2$-dimensional quantum spin system.
Let $\omega$ be a pure state on   $\caA$ with properly infinite cone algebras
 satisfying the approximate Haag duality.
Let $\ld_0$ be a cone.
Let $\caB$ be a two-dimensional quantum spin system and $\psi$ a 
pure infinite tensor product states on $\caB$.
Then $\cols{\omega}$ and $\cols{\omega\otimes\psi}$
are equivalent as braided $C^*$-tensor categories.
\end{prop}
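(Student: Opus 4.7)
The plan is to build an explicit braided $C^*$-tensor functor $F\colon \cols{\omega}\to\cols{\omega\otimes\psi}$ by setting $F(\rho):=\rho\otimes\pi_\psi$ on objects and $F(T):=T\otimes\unit$ on morphisms, and then verify it is an equivalence. Purity of $\omega$ yields $\pi(\caA)''=\caB(\caH)$, and purity of $\psi$ yields $\pi_\psi(\caB)''=\caB(\caH_\psi)$, so $(\pi\otimes\pi_\psi)(\caA\otimes\caB)''=\caB(\caH\otimes\caH_\psi)$; this is the common ambient algebra in which intertwiners for both categories live.

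First I would check $F$ is well-defined. The localisation condition~(\ref{take}) for $F(\rho)$ is inherited from that for $\rho$ by tensoring with $\pi_\psi$. Given $V\in\caV_{\rho\Lambda}$, the element $V\otimes\unit$ lies in the ambient $(\pi\otimes\pi_\psi)(\caA\otimes\caB)''$, the tensor commutant theorem gives $V^*V\otimes\unit\in\pi(\caA_{\Lambda^c})'\bar\otimes\pi_\psi(\caB_{\Lambda^c})'=((\pi\otimes\pi_\psi)(\caA_{\Lambda^c}\otimes\caB_{\Lambda^c}))'$, and $\Ad(V\otimes\unit)$ implements $F(\rho)$ on $(\caA\otimes\caB)_{\Lambda^c}$. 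The DHR tensor product and braiding are built from elements of $\caV_{\rho\Lambda}$ via charge transporters and intertwiners; these constructions lift naturally to $V\otimes\unit$, so $F$ is a braided monoidal functor.

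Next I would establish fullness and faithfulness. Faithfulness of $T\mapsto T\otimes\unit$ is immediate. For fullness, any $T\in\Hom(F(\rho_1),F(\rho_2))$ intertwines $\rho_i(\unit)\otimes\pi_\psi(B)$ for every $B\in\caB$, and since $\pi_\psi(\caB)''=\caB(\caH_\psi)$ compressing to the supports $F(\rho_i)(\unit)$ and applying the tensor commutant forces $T=T_0\otimes\unit$ on those supports; the remaining intertwining relation then gives $T_0\in\Hom(\rho_1,\rho_2)$.

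The main obstacle is essential surjectivity: given $\tilde\rho\in\ools{\omega\otimes\psi}$, produce $\rho\in\ool$ and an isomorphism $\tilde\rho\cong F(\rho)$. The plan is to disentangle the $\caB$-factor using that the DHR-type category of the pure product state $\psi$ is trivial. For a cone $\Lambda$ with $\Lambda\cap\ld_0=\emptyset$, an intertwiner $V\in\caV_{\tilde\rho,\Lambda}$ exhibits $\tilde\rho|_{\unit\otimes\caB_{\Lambda^c}}$ as spatially equivalent to $\unit\otimes\pi_\psi|_{\caB_{\Lambda^c}}$. Combining approximate Haag duality for $\psi$ with the triviality of $\psi$-sectors, I would patch these local equivalences into a global unitary $W\in\caB(\caH\otimes\caH_\psi)$ such that $\Ad W\circ\tilde\rho$ splits as $\rho\otimes\pi_\psi$ for some representation $\rho$ of $\caA$ on $\caH$; transporting $\caV_{\tilde\rho,\Lambda}$ through $W$ then yields intertwiners of the form $V_0\otimes\unit$, showing $\rho\in\ool$. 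The delicate point is this global disentangling; once $W$ is produced, the remaining braided/monoidal compatibilities are routine, and $F$ is an equivalence of braided $C^*$-tensor categories.
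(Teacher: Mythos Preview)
Your functor $F(\rho)=\rho\otimes\pi_\psi$, $F(T)=T\otimes\unit$ is exactly the functor $\fol{\omega\otimes\psi}{\omega}$ the paper constructs in Theorem~\ref{cofe}, and your fullness/faithfulness argument is the content of Lemma~\ref{washi}. So on that part you match the paper.

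The essential surjectivity argument, however, has two genuine gaps. First, the ``patching local equivalences into a global unitary $W$'' is precisely the hard step, and invoking ``triviality of $\psi$-sectors'' does not address it: the representation $\tilde\rho$ acts on $\caH_\omega\otimes\caH_\psi$, not on $\caH_\psi$ alone, so the superselection theory of $\psi$ gives no direct handle on $\tilde\rho\vert_{\caB}$. The paper produces $W$ by an entirely different mechanism: it applies the Zorn-type Lemma~\ref{orange} (with $\caA_1=\caA_{\ld_0}$, $\caB_1=\caB_{\ld_0}$, $\caB_2=\caB_{\ld_0^c}$) to obtain a partial isometry $V$ in $\pi(\caA\otimes\caB_{\ld_0})''$ intertwining $\pi$ and $\tilde\rho$ on $\caA_{\ld_0}\otimes\caB_{\ld_0}$, and then upgrades $V$ to a unitary by comparison of projections in infinite factors (Corollary~6.3.5 of \cite{KR}). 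No patching over varying cones is involved.

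Second, your claim that ``transporting $\caV_{\tilde\rho,\Lambda}$ through $W$ then yields intertwiners of the form $V_0\otimes\unit$'' is unjustified: $W^*V_{\tilde\rho\Lambda}$ lies in $\caB(\caH_\omega\otimes\caH_\psi)$ with no reason to be a simple tensor. The paper does not transport the given intertwiners at all. Instead, once $\Ad W^*\circ\tilde\rho=\rho_1\otimes\pi_\psi$, it observes that $\rho_1\vert_{\caA_{\Lambda^c}}$ and $\pi_\omega\vert_{\caA_{\Lambda^c}}$ are $*$-isomorphic, that both $\pi_\omega(\caA_{\Lambda^c})'$ and $\rho_1(\caA_{\Lambda^c})'$ are infinite factors (using purity of $\omega$ and proper infiniteness of cone algebras), and then invokes the spatial isomorphism criterion (Corollary~8.12 of \cite{struatilua2019lectures}) to produce a unitary $U_\Lambda\in\caB(\caH_\omega)$ with $\Ad U_\Lambda\circ\pi_\omega=\rho_1$ on $\caA_{\Lambda^c}$. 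These $U_\Lambda$ are the intertwiners for $\rho_\omega:=\Ad U_{\ld_0}^*\circ\rho_1$.
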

The main result of this paper is that when the system interacts with 
environments (which is set to be a trivial state at time zero) for a finite time interval, or acted by finite depth quantum channels,
the category $\cols{\omega_2\otimes\psi}$ of the final state $\omega_2$
 becomes a subcategory of the category $\cols{\omega_1\otimes\psi}$ of the initial state.
\begin{thm}\label{main}
Let $\caA$ be a $2$-dimensional quantum spin system.
Let $\omega_1$, $\omega_2$ be 
states on  $\caA$ with properly infinite cone algebras, satisfying the approximate Haag duality.
Let $\ld_0$ be a cone.
Let $\caB_1$, $\caB_2$ be two-dimensional quantum spin systems and $\psi_1$, $\psi_2$ be
pure infinite tensor product states on $\caB_1$, $\caB_2$ respectively.
Let $\alpha$ be an approximately-factorizable 
automorphism on $\caA\otimes \caB$ in the sense of Definition 1.2 of \cite{MTC}
(See Definition \ref{qfdef} for the precise definition).
Suppose that $\omega_2=\lmk \omega_1\otimes\psi_1\rmk\circ\alpha \vert_{\caA}$.
Then there is a faithful braided tensor functor from
$\cols{\omega_2\otimes \psi_2}$ to $\cols{\omega_1\otimes \psi_2}$.
\end{thm}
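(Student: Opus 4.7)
The plan is to construct the required functor as the composition of two braided equivalences with one genuine (non-full) embedding.

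\textbf{Stage 1 (Common stabilization).} By Theorem~\ref{stab} and Corollary~\ref{yofuke}, there are braided equivalences $\cols{\omega_i\otimes \psi_2}\simeq \cols{\omega_i\otimes \psi_1\otimes \psi_2}$ for $i=1,2$, so it suffices to produce a faithful braided tensor functor
\[
\cols{\omega_2\otimes \psi_1\otimes \psi_2}\longrightarrow \cols{\omega_1\otimes \psi_1\otimes \psi_2},
\]
with both sides defined on the common algebra $\caA\otimes \caB_1\otimes \caB_2$.

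\textbf{Stage 2 (Pullback along $\tilde\alpha$).} Set $\tilde\alpha:=\alpha\otimes \id_{\caB_2}$; this is again approximately factorizable. Let $\nu:=(\omega_1\otimes \psi_1\otimes \psi_2)\circ \tilde\alpha$. At the GNS level, $\tilde\alpha$ implements a unitary equivalence between the representations of $\nu$ and of $\omega_1\otimes \psi_1\otimes \psi_2$, and pulling back along $\tilde\alpha$ transports cone algebras, approximate Haag duality, the sector sets $\caO$ and $\caV_{\rho\ld}$, tensor products, and braidings. This gives a braided equivalence $\cols{\nu}\simeq \cols{\omega_1\otimes \psi_1\otimes \psi_2}$, by the same sort of argument that underlies Theorem~\ref{colder}. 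It therefore suffices to construct a faithful braided tensor functor $F:\cols{\omega_2\otimes \psi_1\otimes \psi_2}\to \cols{\nu}$.

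\textbf{Stage 3 (Embedding via common restriction).} The key identity is
\[
\nu\big|_{\caA\otimes \unit\otimes\caB_2}=\omega_2\otimes \psi_2=(\omega_2\otimes \psi_1\otimes \psi_2)\big|_{\caA\otimes \unit\otimes\caB_2},
\]
which uses nothing more than $\omega_2=(\omega_1\otimes \psi_1)\circ \alpha|_{\caA}$. Hence the cyclic subspace for the $\caA\otimes \caB_2$-action is, in each of $\caH_{\nu}$ and $\caH_{\omega_2\otimes \psi_1\otimes \psi_2}$, a canonical copy of $\caH_{\omega_2\otimes \psi_2}$ on which both GNS representations restrict to $\pi_{\omega_2\otimes \psi_2}$. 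Given $\rho\in \caO_{\omega_2\otimes\psi_1\otimes\psi_2,\Lambda_0}$ with cone intertwiner $V_{\rho\Lambda}\in \caV_{\rho\Lambda}$, I define $F(\rho)$ so that its cone intertwiner $V_{F(\rho)\Lambda}$ implements, up to approximate Haag duality for $\nu$, the same $*$-isomorphism on the common cone algebra localized in $\caA_{\Lambda^c}\otimes \caB_{2,\Lambda^c}$. Such a lift exists because the cone algebras of $\nu$ are properly infinite and $\psi_1$ is a pure infinite tensor-product state, so the $\caB_1$-direction contributes only type $\mathrm{I}_\infty$ factors with trivial sector content. Morphisms, tensor products, and braidings are transferred by the same identification of the common $\caA\otimes \caB_2$-cone algebras.

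\textbf{Main obstacle.} Stage~3 carries almost all the difficulty: one has to verify that the cone-wise transfer of intertwiners is well-defined up to the inner equivalence intrinsic to $\caV_{F(\rho)\Lambda}$, that it respects composition, tensor product, and braiding, and that the resulting functor is faithful. The approximate Haag duality for $\nu$ supplies the unitaries in $\pi_\nu(\caA)''$ needed to move intertwining data between cone algebras of the two states, while the approximate factorization of $\tilde\alpha$ (Definition~\ref{qfdef}) must be used to control the error terms produced across cone boundaries, in the style of the automorphism case of \cite{MTC}. The anticipated non-fullness of $F$ reflects precisely the extra intertwiners present in $\pi_\nu(\caA)''$ that are absent in $\pi_{\omega_2\otimes \psi_1\otimes \psi_2}(\caA)''$, and should fall out of the construction without further work.
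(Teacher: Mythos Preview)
Your Stages~1 and~2 match the paper's strategy (stabilize, then pull back along the approximately-factorizable automorphism), but Stage~3 contains a self-inflicted difficulty that the paper avoids entirely.

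The problem is that in Stage~1 you stabilize \emph{both} sides. For the target $\cols{\omega_1\otimes\psi_2}$ this is necessary and matches the paper. For the source $\cols{\omega_2\otimes\psi_2}$ it is not only unnecessary, it creates the whole ``main obstacle'' you describe: after stabilizing the source to $\cols{\omega_2\otimes\psi_1\otimes\psi_2}$ you are forced to build a functor between two categories on $\caA\otimes\caB_1\otimes\caB_2$ that merely share a common restriction to $\caA\otimes\caB_2$. Theorem~\ref{cofe} does not give you that; it gives a functor from the category of a \emph{restriction} into the category of the ambient state. Your proposed ``cone-wise transfer of intertwiners'' between the two distinct GNS Hilbert spaces $\caH_{\omega_2\otimes\psi_1\otimes\psi_2}$ and $\caH_\nu$ is not a precise construction: an object $\rho\in\ools{\omega_2\otimes\psi_1\otimes\psi_2}$ is a representation on the wrong Hilbert space, and you give no mechanism producing a representation on $\caH_\nu$ beyond the hope that ``the $\caB_1$-direction contributes only type $\mathrm{I}_\infty$ factors.''

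The paper's fix is simply to \emph{not} stabilize the source. Since $\nu\vert_{\caA\otimes\caB_2}=\omega_2\otimes\psi_2$, Theorem~\ref{cofe} applied to the inclusion $\caA\otimes\caB_2\subset\caA\otimes\caB_1\otimes\caB_2$ gives a faithful braided tensor functor $\cols{\omega_2\otimes\psi_2}\to\cols{\nu}$ directly, with no further work. Composing with your Stage~2 equivalence $\cols{\nu}\simeq\cols{\omega_1\otimes\psi_1\otimes\psi_2}$ and the Stage~1 equivalence for $i=1$ finishes the proof. Your Stage~3 construction, and the anticipated verification of well-definedness, functoriality, and compatibility with braidings, is unnecessary.
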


Note that the functor is faithful but not necessarily fully faithful.
In particular, it might be possible that two objects which are distinct
in $\cols{\omega_2\otimes \psi_2}$ get isomorphic in $\cols{\omega_1\otimes \psi_2}$.
Physically, it means the following for two anyons $\rho,\sigma$:
If we are allowed to use the total system $\caA\otimes\caB$, $\rho$ and $\sigma$ can be mapped to each other (or identified
 as the same species).
However, if we are allowed to use only the sub-system $\caA$, then it may be impossible
to do so. 

Recall that automorphisms given by (possibly  time-dependent) local interactions are  approximately-factorizable
\cite{Ogata3}.
In particular, finite depth quantum circuits satisfy this property.
Therefore, by the Stinespring dilation theorem, the above theorem covers the case that
$\omega_2=\omega_1\Phi$, with $\Phi$ a finite-depth quantum channel.

This paper is organized as follows. In section \ref{derisec}, we derive braided C*tensor categories out of states with properly infinite cone algebras, satisfying the approximate Haag duality. For pure states, the category is equivalent to that obtained in \cite{MTC}.
In section \ref{subsys}, we analyze the relation between the braided C*-tensor categories when we consider subsystems. In section  \ref{stabilization},  we show Theorem \ref{stab}, the stabilization result.
In section \ref{mainprf}, we give the proof of our main theorem Theorem \ref{main}.

\section{Derivation of braided $C^*$-tensor category}\label{derisec}
In this section, we give the detailed version of Theorem \ref{colder}.
Recall the definition of braided $C^*$-tensor categories in \cite{NT}.
By the analogous argument as in \cite{MTC}, we obtain the following Theorem.
\begin{thm}\label{same}
Let $\caA$ be a $2$-dimensional quantum spin system.
Let $\omega$ be a state on  $\caA$ with properly infinite cone algebras satisfying the approximate Haag duality.
Let $(\caH,\pi)$ be a GNS representation of $\omega$.
Let $\theta_0\in \bbR$, $0<\varphi_0<\pi$ and $\ld_0\in \ctvz$.
Then the following hold.
\begin{description}
\item[(i)]
By setting objects
\begin{align}
\Obj \col := \ool
\end{align}
 and morphisms between objects $\rho,\sigma\in \Obj \col $
 \begin{align}
 \begin{split}
 \Mor_{\col}\lmk\rho,\sigma\rmk
 :=
 \left\{
 R\in \pi(\caA)''\middle |
 \begin{gathered}
 R\rho(A)=\sigma(A) R,\quad A\in \caA,\\
 \sigma(\unit) R\rho(\unit)=R
 \end{gathered}
 \right\}.
 \end{split}
 \end{align}
 with identity morphisms $\id_\rho:=\rho(\unit)$,
 we obtain a $C^*$-category $\col$.
 \item[(ii)]
 For each $\rho\in \ool$, there is a unique endomorphism $T_\rho$ on
 \begin{align}
 \caF:=\overline{\cup_{\ld\in \caC_{(\theta_0,\varphi_0)}}\pi\lmk\caA_\ld\rmk''}^n
 \end{align}
 such that $T_\rho\pi=\rho$ and 
 $\sigma$-weak continuous on $\pi\lmk\caA_\ld\rmk''$ for any $\ld\in \ctvz$.
 Here $\overline\cdot^n$ means the norm-closure.
  \item[(iii)] The $C^*$-category $\col$ becomes a strict $C^*$-tensor category
  with tensor product
  \begin{align}
  \begin{split}
  \rho\otimes\sigma:= T_\rho T_\sigma\pi,\quad
  R\otimes S:= R T_\rho(S)
  \end{split}
  \end{align}
  for any $\rho,\sigma,\rho',\sigma'\in \ool$ 
  and $R\in \Mor_{\col}\lmk\rho,\rho'\rmk$
  and $S\in \Mor_{\col}\lmk\sigma,\sigma'\rmk$.
  The tensor unit is $\pi$.
  \item[(iv)]
  For any $\rho,\sigma\in \ool$,
  the norm limit
  \begin{align}
  \begin{split}
  \epsilon(\rho,\sigma):=\lim_{t\to\infty}
  V_{\sigma \ld_2(t)}T_\rho\lmk V_{\sigma \ld_2(t)}^*\rmk
  \end{split}
  \end{align}
  exists and is independent of the choice of $\ld_2\in \ctvz$ with $\ld_{0}\leftarrow_{(\theta_0,\varphi_0)} \ld_2$
  and $V_{\sigma \ld_2(t)}\in \caV_{\sigma \ld_2(t)}$.
  This $\epsilon(\rho,\sigma)$ gives a braiding of $\col$.
 \end{description}

\end{thm}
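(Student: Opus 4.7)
The plan is to follow the DHR/BF machinery for superselection sectors as adapted to the approximate Haag duality setting in \cite{MTC}, while carrying the projection $\rho(\unit)\in\pi(\caA_{\ld_0^c})'$ through every step, since it is no longer automatically the identity operator. For (i), I would check directly that composition of intertwiners stays in the appropriate Hom-space: given $R\in\Mor_{\col}(\rho,\sigma)$ and $S\in\Mor_{\col}(\sigma,\tau)$, the intertwining relation for $SR$ is automatic, and the support condition follows from $\tau(\unit)SR\rho(\unit)=\tau(\unit)S\sigma(\unit)\cdot\sigma(\unit)R\rho(\unit)=SR$. The involution $R\mapsto R^*$ swaps source and target, $\id_\rho=\rho(\unit)$ acts as the identity morphism because of the built-in support condition, and the Hom-spaces are manifestly norm-closed, so a $C^*$-category structure is in place.

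For (ii), the plan is to construct $T_\rho$ by cone transport: for each $\ld'\in\ctvz$, pick an auxiliary cone $\ld\in\ctvz$ with $\ld'\subset\ld^c$ and compatibly placed with respect to $\ld_0$, choose $V_{\rho\ld}\in\vrd$, and declare $T_\rho(x):=V_{\rho\ld}xV_{\rho\ld}^*$ for $x\in\pi(\caA_{\ld'})''$. This map is $\sigma$-weakly continuous on $\pi(\caA_{\ld'})''$ and agrees with $\rho$ on the $\sigma$-weakly dense $*$-subalgebra $\pi(\caA_{\ld'})$ by the defining property of $\vrd$. Independence of the choice of $\ld$ and of $V_{\rho\ld}$ is where approximate Haag duality enters: two transporters differ by an element of $\pi(\caA_{\ld^c})'\cap\pi(\caA)''$, which by Definition \ref{AHdef}(i) lies in $\Ad U_{\ld,\varepsilon}\lmk\pi(\caA_{\ld_\varepsilon(-R_{\zeta,\varepsilon})})''\rmk$, and this conjugated algebra commutes with $\pi(\caA_{\ld'})''$ once $\ld'$ is safely inside the complement of the $\varepsilon$-enlargement. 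Compatibility across nested $\ld'_1\subset\ld'_2$ follows by picking a common auxiliary cone, and gluing yields a unique endomorphism $T_\rho$ on $\caF$.

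Part (iii) then follows by standard DHR manipulations: $T_\rho T_\sigma\pi$ lies in $\ool$ with charge transporter $V_{\rho\otimes\sigma,\ld}=V_{\rho\ld''}T_\rho(V_{\sigma\ld'})$ for suitably nested $\ld'\subset\ld''\subset\ld$; strict associativity comes from the identity $T_{\rho\otimes\sigma}=T_\rho T_\sigma$ (by the uniqueness in (ii)); and the interchange law $(R\otimes S)(R'\otimes S')=(RR')\otimes(SS')$ reduces to the relation $R'T_\rho(S)=T_{\rho'}(S)R'$, which holds because an intertwiner $R'\in\Mor_{\col}(\rho,\rho')$ extends to intertwine $T_\rho$ with $T_{\rho'}$ on each cone algebra. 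For (iv), I would examine the net $\epsilon_t(\rho,\sigma):=V_{\sigma\ld_2(t)}T_\rho(V_{\sigma\ld_2(t)}^*)$ as $t\to\infty$: Definition \ref{AHdef}(ii) supplies local approximants $U_{\ld,\varepsilon,t}\in\pi(\caA_{\ld_{2\varepsilon}(-t)})''$ of $U_{\ld,\varepsilon}$ with error $f_{\zeta,\varepsilon}(t)\to 0$, which effectively localizes $V_{\sigma\ld_2(t)}$ into a region spacelike-separated from a fixed $\rho$-transporter, forcing $V_{\sigma\ld_2(t)}$ and $T_\rho(V_{\sigma\ld_2(t)}^*)$ to almost commute and giving a Cauchy-in-norm estimate. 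Independence of $\ld_2$ and of $V_{\sigma\ld_2(t)}$ follows from the uniqueness in (ii), and the braiding axioms transcribe essentially verbatim from \cite{MTC}.

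The main obstacle is step (ii): the transporter $V_{\rho\ld}$ a priori only lives in $\pi(\caA)''$ with $V_{\rho\ld}^*V_{\rho\ld}\in\pi(\caA_{\ld^c})'$ and not in any cone algebra, so showing that $\Ad V_{\rho\ld}$ preserves each $\pi(\caA_{\ld'})''$ and that the resulting $T_\rho$ is independent of all choices requires a careful approximate-to-exact bootstrap using both clauses of Definition \ref{AHdef}. In the mixed-state setting of this paper, the nontrivial projection $\rho(\unit)\in\pi(\caA_{\ld_0^c})'$ must be threaded through every identification, which is the genuinely new bookkeeping relative to \cite{MTC}.
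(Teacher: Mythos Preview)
Your outline is broadly in line with the paper's approach: the paper explicitly states that the proof of Theorem \ref{same} is ``the same as that in \cite{MTC} and \cite{BA}, except for the slight difference in the existence of subobjects,'' and then devotes its effort solely to that point (Lemma \ref{shachi}). So your sketches of (i), (ii), and (iv), and the tensor-product portion of (iii), are essentially what the paper intends by deferring to \cite{MTC}.

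The genuine gap in your proposal is that you never address the existence of subobjects (nor direct sums), both of which are part of what it means for $\col$ to be a $C^*$-tensor category in the sense of \cite{NT}. This is precisely the one place where the mixed-state argument departs from \cite{MTC}, and it is where the hypothesis of \emph{properly infinite cone algebras} is actually used. Given a nonzero projection $p\in\Mor_{\col}(\rho,\rho)$, one must produce $\gamma\in\ool$ and an isometry $v\in\Mor_{\col}(\gamma,\rho)$ with $vv^*=p$. The paper does this by transporting $p$ to $p_{\Gamma_\ld}:=\Ad V_{\rho,\Gamma_\ld}^*(p)\in\pi(\caA_{\ld^c})'\cap\pi(\caA)''$, showing $p_{\Gamma_\ld}$ is properly infinite there (using an auxiliary cone $D_\ld\subset\ld$ disjoint from $\Gamma_\ld$ and the proper infiniteness of $\pi(\caA_{D_\ld})''$), identifying its central carrier via the equality $Z(\pi(\caA_{\ld^c})'\cap\pi(\caA)'')=Z(\pi(\caA)'')$, and then invoking the comparison theorem for properly infinite projections to get a partial isometry $u_\ld$ from $z_\ld=Z_{\pi(\caA)''}(p)$ to $p_{\Gamma_\ld}$. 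The subobject is $\gamma:=\Ad(u_{\ld_0}^*V_{\rho\Gamma_{\ld_0}}^*)\circ\rho$, and one checks $\gamma\in\tool$ with $\gamma(\unit)=Z_{\pi(\caA)''}(p)$.

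Your diagnosis that ``the main obstacle is step (ii)'' is therefore misplaced: the paper treats the construction of $T_\rho$ as routine bookkeeping in the style of \cite{MTC}, while the subobject construction is the part singled out as genuinely new.
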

Note that
\begin{align}\label{zzz}
\begin{split}
Z\lmk \pi\lmk\caA_{\ld^c}\rmk'\cap\pi(\caA)''\rmk=Z\lmk\pi(\caA)''\rmk
\end{split}
\end{align}
for any cone $\ld$. This is because
$\pi(\caA_\ld)''\subset \pi(\caA_{\ld^c})'\cap \pi(\caA)''$ implies
\begin{align}
\begin{split}
Z\lmk \pi\lmk\caA_{\ld^c}\rmk'\cap\pi(\caA)''\rmk
\subset \pi\lmk\caA_{\ld^c}\rmk'\cap\pi(\caA)''\cap \pi(\caA_\ld)'
=Z\lmk\pi(\caA)''\rmk
\end{split}
\end{align}
and
\begin{align}
\begin{split}
&Z\lmk\pi(\caA)''\rmk
=\pi(\caA)''\cap \pi(\caA_{\ld^c})'\cap \pi\lmk\caA_\ld\rmk'
=\lmk  \pi(\caA)''\cap \pi(\caA_{\ld^c})'\rmk \cap \lmk \pi(\caA_{\ld^c})'\cap \pi\lmk\caA_\ld\rmk'\rmk\\
&=\lmk  \pi(\caA)''\cap \pi(\caA_{\ld^c})'\rmk \cap \pi(\caA)'
\subset 
\lmk \pi\lmk\caA_{\ld^c}\rmk'\cap\pi(\caA)''\rmk
\cap \lmk \pi\lmk\caA_{\ld^c}\rmk'\cap\pi(\caA)''\rmk'\\
&=Z\lmk \pi\lmk\caA_{\ld^c}\rmk'\cap\pi(\caA)''\rmk.
\end{split}
\end{align}
Therefore, by the approximate Haag duality, we have $Z\lmk\pi(\caA)''\rmk\subset \caF$.
In fact, for any $z\in Z\lmk\pi(\caA)''\rmk$ and any cone $\ld\in \ctvz$, with sufficiently small $\varepsilon>0$
(so that $\ld_\varepsilon \in \ctvz$), we have
\begin{align}
z=\Ad U_{\Lambda,\varepsilon}^* (z)\in \pi\lmk\caA_{\Lambda_{\varepsilon}
(-R_{\zeta,\varepsilon})}\rmk''\subset \caF,
\end{align}
where $\zeta:=\frac12{|\arg\ld|}$ (with the notation in Definition \ref{AHdef}).
Furthermore, from (\ref{sss}), the action of $T_\rho$ on $ \pi\lmk\caA_{\Lambda_{\varepsilon}(-R_{\zeta,\varepsilon})}\rmk''$
is equal to $\Ad\lmk V_{\rho,K}\rmk $ with any cone $K$ such that $K\cap {\Lambda_{\varepsilon}
(-R_{\zeta,\varepsilon})}
=\emptyset$.
Therefore, we obtain $T_\rho(z)=z\cdot \rho(\unit)$ for any $z\in Z\lmk\pi(\caA)''\rmk$.

The proof of Theorem \ref{same} is the same as that in \cite{MTC} and \cite{BA},
except for the slight difference in the existence of subobjects.
We give a proof only for this point:
\begin{lem}\label{shachi}
Let $\caA$ be a $2$-dimensional quantum spin system.
Let $\omega$ be a state on 
$\caA$ with properly infinite cone algebras satisfying the approximate Haag duality.
Let $(\caH,\pi)$ be a GNS representation of $\omega$.
Let $\theta_0\in \bbR$, $0<\varphi_0<\pi$ and $\ld_0\in \ctvz$.
Let $\rho\in \ool$ and $p\in \Mor_{\col}(\rho,\rho)$ be a non-zero projection.
Then there exists an object $\gamma\in \tool$ with 
$\gamma(\unit)=Z_{\pi(\caA)''}(p)$ and $v\in  \Mor_{\col}(\gamma,\rho)$ an isometry
such that $vv^*=p$.
\end{lem}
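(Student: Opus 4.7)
The plan is to build the subobject by producing a partial isometry $v$ inside the algebra $N := \pi(\caA_{\ld_0^c})'\cap \pi(\caA)''$. First, I would observe that $p\in N$: for $A\in\caA_{\ld_0^c}$, using $\rho(A)=\pi(A)\rho(\unit)$ and $\rho(\unit)p=p$ yields $p\pi(A)=p\rho(A)=\rho(A)p=\pi(A)p$. Identity (\ref{zzz}) gives $Z(N)=Z(\pi(\caA)'')$, so the central support of $p$ computed in $N$ coincides with $z:=Z_{\pi(\caA)''}(p)$.

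The main obstacle is showing that $N$ is properly infinite; once this is in hand, $p$ and $z$, having the same central support in $N$, are Murray--von Neumann equivalent in $N$. To prove proper infiniteness, take any nonzero $z_0\in Z(N)$. Using proper infiniteness of $\pi(\caA_{\ld_0})''\subset N$, pick two orthogonal isometries $u_1,u_2\in\pi(\caA_{\ld_0})''$ with $u_i^*u_i=\unit$ and $u_1u_1^*\perp u_2u_2^*$. Then $w_i:=u_iz_0\in N$ are pairwise orthogonal partial isometries with $w_i^*w_i=z_0$ and $w_iw_i^*=u_iu_i^*z_0\le z_0$; they cannot both equal $z_0$ (else $z_0\le u_1u_1^*\wedge u_2u_2^*=0$), so $z_0$ is infinite in $N$. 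This gives a partial isometry $v\in N$ with $vv^*=p$ and $v^*v=z$. Set $\gamma(A):=v^*\rho(A)v$. Using $pv=v$, $v^*p=v^*$, and $p\in\Mor_{\col}(\rho,\rho)$, a short calculation shows $\gamma$ is a $*$-representation with $\gamma(\unit)=z\in Z(\pi(\caA)'')$; for $A\in\caA_{\ld_0^c}$, since $v\in\pi(\caA_{\ld_0^c})'$, we get $\gamma(A)=v^*\pi(A)\rho(\unit)v=\pi(A)z$, so $\gamma\in\ool$. By construction, $v$ is an isometry in $\Mor_{\col}(\gamma,\rho)$ with $vv^*=p$.

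It remains to verify $\tilde\caV_{\gamma\ld}\ne\emptyset$ for every cone $\ld$, which is needed for $\gamma\in\tool$. Fix $\ld$ and choose $W\in\caV_{\rho\ld}$ (nonempty because $\rho\in\caO_\omega$); note $WW^*=\rho(\unit)$ and $W^*W$ is a projection since $(W^*W)^2=W^*\rho(\unit)W=W^*W$. Set $V:=v^*W$; then $VV^*=v^*\rho(\unit)v=z$ and $V\pi(A)V^*=v^*\rho(A)v=\gamma(A)$ for $A\in\caA_{\ld^c}$, but $V^*V=W^*pW$ need not equal $z$. However, $W^*pW$ lies in $M_\ld:=\pi(\caA_{\ld^c})'\cap\pi(\caA)''$ and is Murray--von Neumann equivalent to $p$ in $\pi(\caA)''$ via the partial isometry $pW$ (indeed $(pW)^*(pW)=W^*pW$ and $(pW)(pW)^*=p$); combined with $Z(M_\ld)=Z(\pi(\caA)'')$ from (\ref{zzz}), the central support of $W^*pW$ in $M_\ld$ is $z$. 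Running the halving argument of the previous paragraph with $\pi(\caA_\ld)''\subset M_\ld$ shows $M_\ld$ is properly infinite, yielding a partial isometry $u'\in M_\ld$ with $u'u'^*=z$ and $u'^*u'=W^*pW$. Setting $V':=Vu'^*$ and using $u',u'^*\in\pi(\caA_{\ld^c})'$ together with $V\cdot V^*V=V$, direct computation gives $V'V'^*=V'^*V'=z$ and $V'\pi(A)V'^*=\gamma(A)$ on $\caA_{\ld^c}$, so $V'\in\tilde\caV_{\gamma\ld}$ and the proof is complete.
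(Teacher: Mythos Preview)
Your argument has a genuine gap that occurs twice. You prove that the algebra $N=\pi(\caA_{\ld_0^c})'\cap\pi(\caA)''$ is properly infinite and that $p$ has central support $z$ in $N$, and then assert that this yields $p\sim z$ in $N$. But proper infiniteness of $N$ only says that every nonzero \emph{central} projection of $N$ is infinite; it does not force an arbitrary projection $p\in N$ to be properly infinite. The comparison result you need (Corollary~6.3.5 of \cite{KR}) requires both $p$ and $z$ to be properly infinite with the same central carrier. A concrete counterexample to your implicit claim: in $\caB(\caH)$, which is properly infinite, a rank-one projection has central support $\unit$ but is not equivalent to $\unit$. The same issue recurs in your treatment of $W^*pW$ inside $M_\ld$.

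The missing ingredient, which the paper supplies, is a way to see that $p$ (or a projection equivalent to it inside $N$) is properly infinite. The difficulty is that your halving isometries live in $\pi(\caA_{\ld_0})''$, which does not commute with $p$; you only know $p\in\pi(\caA_{\ld_0^c})'$. The paper resolves this by choosing two disjoint subcones $\Gamma_\ld,D_\ld\subset\ld$, transporting $p$ to $p_{\Gamma_\ld}:=V_{\rho\Gamma_\ld}^* \,p\, V_{\rho\Gamma_\ld}\in\pi(\caA_{\Gamma_\ld^c})'\cap\pi(\caA)''$, and then halving $z_0 p_{\Gamma_\ld}$ with isometries from $\pi(\caA_{D_\ld})''$, which now commute with $p_{\Gamma_\ld}$ since $D_\ld\subset\Gamma_\ld^c$. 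This shows $p_{\Gamma_\ld}$ (and hence $p$, via the equivalence implemented by $V_{\rho\Gamma_\ld}^* p$) is properly infinite in the relevant algebra, after which Corollary~6.3.5 applies. Once you insert this step for both $\ld_0$ and a general $\ld$, the rest of your outline goes through and is essentially parallel to the paper's proof.
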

\begin{proof}
For each cone $\ld$, fix cones $D_\ld$ and $\Gamma_\ld$ so that
$D_\ld\cap\Gamma_\ld=\emptyset$ and $D_\ld,\Gamma_\ld\subset \ld$.
We also fix $V_{\rho,\Gamma_\ld}\in \caV_{\rho,\Gamma_\ld}$.

Note that
\begin{align}
p_{\Gamma_\ld}:=\Ad V_{\rho,\Gamma_\ld}^* \lmk p\rmk \in \pi\lmk\caA_{\Gamma_\ld^c}\rmk'\cap \pi\lmk\caA\rmk''
\subset \pi(\caA_{\ld^c})'\cap\pi(\caA)''
\end{align}
and $p_{\Gamma_\ld}$ is a projection.
We now show that $p_{\Gamma_\ld}$ is properly infinite in 
$\pi\lmk\caA_{\ld^c}\rmk'\cap\pi(\caA)''$.
Let $z\in Z\lmk \pi\lmk\caA_{\ld^c}\rmk'\cap\pi(\caA)''\rmk$
be a projection and assume that $zp_{\Gamma_\ld}$ is non-zero.
We have to show that $zp_{\Gamma_\ld}$ is infinite  in 
$\pi\lmk\caA_{\ld^c}\rmk'\cap\pi(\caA)''$.
Because $\pi\lmk D_\ld\rmk''$ is properly infinite, 
there exists a projection $E_\ld\in \pi\lmk D_\ld\rmk''$
and isometries $v_\ld,w_\ld\in \pi\lmk D_\ld\rmk''$
such that $v_\ld v_\ld^*=E_\ld$, $w_\ld w_\ld^*=\unit-E_\ld$.
Then $z p_{\Gamma_\ld} v_\ld\in \pi\lmk\caA_{\ld^c}\rmk'\cap\pi(\caA)''$ and
\begin{align}
\begin{split}
\lmk z p_{\Gamma_\ld} v_\ld\rmk\lmk z p_{\Gamma_\ld} v_\ld\rmk^*=z p_{\Gamma_\ld} E_\ld,\quad
\lmk z p_{\Gamma_\ld} v_\ld\rmk^* \lmk z p_{\Gamma_\ld} v_\ld\rmk=z p_{\Gamma_\ld}.
\end{split}
\end{align}
Because $\lmk z p_{\Gamma_\ld} w_\ld\rmk^* \lmk z p_{\Gamma_\ld} w_\ld\rmk=z p_{\Gamma_\ld}\neq 0$,
we have 
\begin{align}
\begin{split}
0\neq \lmk z p_{\Gamma_\ld} w_\ld\rmk\lmk z p_{\Gamma_\ld} w_\ld\rmk^*=z p_{\Gamma_\ld}\lmk \unit- E_\ld\rmk.
\end{split}
\end{align}
This proves that $zp_{\Gamma_\ld}$ is infinite  in 
$\pi\lmk\caA_{\ld^c}\rmk'\cap\pi(\caA)''$.
Hence $p_{\Gamma_\ld}$ is properly infinite in $\pi\lmk\caA_{\ld^c}\rmk'\cap\pi(\caA)''$.
Similarly, the central carrier
\begin{align}
\begin{split}
z_\ld:=Z_{\pi\lmk\caA_{\ld^c}\rmk'\cap\pi(\caA)''}\lmk p_{\Gamma_\ld}\rmk
\end{split}
\end{align}
is also properly infinite in $\pi\lmk\caA_{\ld^c}\rmk'\cap\pi(\caA)''$.

Hence, for each cone $\ld$, both $p_{\Gamma_\ld}$ and $z_\ld$
are properly infinite in $\pi\lmk\caA_{\ld^c}\rmk'\cap\pi(\caA)''$
with central carrier $z_\ld$.
Therefore, by Corollary 6.3.5 of \cite{KR}, they are equivalent in $\pi\lmk\caA_{\ld^c}\rmk'\cap\pi(\caA)''$.
Namely, there exists $u_\ld\in \pi\lmk\caA_{\ld^c}\rmk'\cap\pi(\caA)''$
such that 
\begin{align}
u_\ld u_\ld^*=p_{\Gamma_\ld},\quad u_\ld^* u_\ld=z_\ld.
\end{align}

We claim $z_\ld$ is equal to the central carrier
$Z_{\pi(\caA)''}(p)$
 of $p$ in $\pi(\caA)''$, for any cone $\ld$.
In fact, because of (\ref{zzz}), $z_\ld$
is the intersections of projections $z$ in $Z\lmk \pi\lmk\caA_{\ld^c}\rmk'\cap\pi(\caA)''\rmk=Z\lmk\pi(\caA)''\rmk$
satisfying $z p_{\Gamma_\ld}=p_{\Gamma_\ld}$.
Note that for any projection $z\in Z\lmk \pi\lmk\caA_{\ld^c}\rmk'\cap\pi(\caA)''\rmk=Z\lmk\pi(\caA)''\rmk$,
 $z p_{\Gamma_\ld}=p_{\Gamma_\ld}$ if and only if 
$zp=p$ because
\begin{align}
\begin{split}
V_{\rho,\Gamma_\ld}
z p_{\Gamma_\ld} V_{\rho,\Gamma_\ld}^*
=
 z\rho(\unit) p\rho(\unit)
 =zp,\quad V_{\rho,\Gamma_\ld}
p_{\Gamma_\ld} V_{\rho,\Gamma_\ld}^*
 =p.
\end{split}
\end{align}
Hence, we have $z_\ld=Z_{\pi(\caA)''}(p)$, proving the claim.

Now, we define the object $\gamma$ by
\begin{align}
\gamma:=\Ad \lmk u_{\ld_0}^*V_{\rho\Gamma_{\ld_0}}^*\rmk\circ\rho
\end{align}
It is straightforward to show that this gives a $*$-representation of $\caA$.
 We can also check 
 \begin{align}
 \begin{split}
 V_{\gamma\ld}
 :=u_{\ld_0}^* V_{\rho\Gamma_{\ld_0}}^*
 V_{\rho\Gamma_{\ld}} u_{\ld}
 \in \caV_{\gamma\ld}
 \end{split}
 \end{align}
 for each $\ld$ and obtain $\gamma\in \ool$.
For 
\begin{align}
v:=V_{\rho\Gamma_{\ld_0}} u_{\ld_0},
\end{align}
we have
\begin{align}
\begin{split}
v^* v=u_{\ld_0}^* u_{\ld_0} =z_{\ld_0}=Z_{\pi(\caA)''}(p)=\gamma(\unit)=\id_{\gamma}, \quad
 v v^*= \rho(\unit)p\rho(\unit)=p
\end{split}
\end{align}
 and $v\in \Mor_{\col}(\gamma,\rho)$.

Note that  
\begin{align}
\begin{split}
\gamma(\unit)=V_{\gamma\ld}V_{\gamma\ld}^*=V_{\gamma\ld}^*V_{\gamma\ld}
 =Z_{\pi(\caA)''}(p).
\end{split}
\end{align}
Hence we have $V_{\gamma\ld}
\in\tilde 
 V_{\gamma\ld} $ for any cone $\ld$ and $\gamma\in \tool$. 
\end{proof}
In particular, setting $p=\rho(\unit)$ in Lemma \ref{shachi},
any $\rho\in \Obj \col$ is isomorphic to
some $\gamma\in\tool$.
\begin{prop}\label{sava}
Let $\caA$ be a two-dimensional quantum spin system.
Let $\omega$ be a state on $\caA$ with properly infinite cone algebras, satisfying the approximate Haag duality.
Let $\theta_0\in \bbR$, $0<\varphi_0<\pi$ and $\ld_0\in \ctvz$.
By setting objects
$
\Obj \tcol := \tool
$
 and morphisms between objects $\rho,\sigma\in \Obj \tcol $
 $
 \Mor_{\tcol}\lmk\rho,\sigma\rmk
 := \Mor_{\col}\lmk\rho,\sigma\rmk
 $,
 $\tcol$ becomes a braided $C^*$-tensor category with respect to the tensor
of $\col$.
Furthermore, $I_{\omega,\ld_0} : \tcol\to \col$ given by
\begin{align}
\begin{split}
&I_{\omega,\ld_0}(\rho):=\rho,\quad \rho\in \Obj\tcol,\\
& I_{\omega,\ld_0}(R):= R,\quad \rho,\sigma\in \Obj\tcol, \quad R\in  \Mor_{\tcol}\lmk\rho,\sigma\rmk
\end{split}
\end{align}
is an equivalence of braided $C^*$-tensor categories.
\end{prop}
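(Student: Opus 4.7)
The plan is to verify two things: that $\tool$ is closed under the tensor product and contains the tensor unit (so that $\tcol$ inherits a braided $C^*$-tensor structure from $\col$), and that the inclusion functor $I_{\omega,\ld_0}$ is fully faithful and essentially surjective. Since $\Mor_{\tcol}=\Mor_{\col}$ by definition, fullness and faithfulness of $I_{\omega,\ld_0}$ are automatic, and the braiding $\epsilon(\rho,\sigma)$ from Theorem \ref{same}(iv) will restrict to $\tcol$ as soon as we know the relevant objects remain in $\tool$.

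First I would check that the tensor unit $\pi$ of $\col$ lies in $\tool$: we have $\pi(\unit)=\unit\in Z(\pi(\caA)'')$, and $\unit\in\tilde\caV_{\pi,\ld}$ for every cone $\ld$, while condition (\ref{take}) holds trivially. Next, for $\rho,\sigma\in\tool$, I would show $\rho\otimes\sigma=T_\rho T_\sigma\pi$ belongs to $\tool$. Centrality of $(\rho\otimes\sigma)(\unit)$ in $\pi(\caA)''$ follows because $(\rho\otimes\sigma)(\unit)=T_\rho(\sigma(\unit))$, and by the computation recorded just after Theorem \ref{same} (applied to the central element $\sigma(\unit)$) this equals $\sigma(\unit)\rho(\unit)$, a product of commuting central projections and hence again central. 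To produce an element of $\tilde\caV_{\rho\otimes\sigma,\ld}$, I would take $V_\rho\in\tilde\caV_{\rho\ld}$ and $V_\sigma\in\tilde\caV_{\sigma\ld}$, use approximate Haag duality to realize $V_\sigma$ inside $\caF$ (so that $T_\rho(V_\sigma)$ is well defined), and set $W:=V_\rho\,T_\rho(V_\sigma)$. A direct calculation shows $W$ implements $\rho\otimes\sigma$ on $\caA_{\ld^c}$ and satisfies $W^*W=WW^*=(\rho\otimes\sigma)(\unit)$.

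For essential surjectivity, let $\rho\in\ool$ and apply Lemma \ref{shachi} with $p=\rho(\unit)=\id_\rho$. This produces $\gamma\in\tool$ with $\gamma(\unit)=Z_{\pi(\caA)''}(\rho(\unit))$ together with an isometry $v\in\Mor_{\col}(\gamma,\rho)$ with $vv^*=\rho(\unit)$. Since $v^*v=\gamma(\unit)=\id_\gamma$, the morphism $v$ is a two-sided unitary isomorphism in $\col$ between $I_{\omega,\ld_0}(\gamma)$ and $\rho$. Compatibility of $I_{\omega,\ld_0}$ with the tensor product, unit and braiding is tautological from its definition and from the fact that the relevant structural morphisms in $\tcol$ are the very same morphisms in $\col$, so $I_{\omega,\ld_0}$ is indeed an equivalence of braided $C^*$-tensor categories.

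The main obstacle is the tensor-closure step. One must carefully place $V_\sigma$ inside an algebra on which $T_\rho$ acts as a $\sigma$-weakly continuous endomorphism, via approximate Haag duality, so that $T_\rho(V_\sigma)$ makes sense; then one has to verify that $W=V_\rho\,T_\rho(V_\sigma)$ is a two-sided partial isometry with $W^*W=WW^*=(\rho\otimes\sigma)(\unit)$. It is precisely this two-sided property, which crucially exploits the centrality of $\rho(\unit)$ and $\sigma(\unit)$ together with the formula $T_\rho(z)=z\,\rho(\unit)$ for central $z$, that distinguishes membership in $\tvrd$ from membership in $\vrd$ and is the only place where the definition of $\tool$ is really used.
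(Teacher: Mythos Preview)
Your approach is essentially the same as the paper's: use elements of $\tilde\caV_{\rho\ld}$ rather than general $\caV_{\rho\ld}$ in the constructions to see that $\tool$ is closed under the relevant operations, note that full faithfulness is immediate, and invoke Lemma~\ref{shachi} with $p=\rho(\unit)$ for essential surjectivity. The paper's own proof is in fact much terser than yours and simply asserts that closure ``can be checked'' by this substitution.

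One small omission: to conclude that $\tcol$ is a braided $C^*$-tensor category in the sense of \cite{NT}, closure under the tensor product and having the unit is not quite enough; you also need $\tool$ to be closed under direct sums and subobjects. The paper explicitly lists these. Subobjects are already covered by Lemma~\ref{shachi} (it lands in $\tool$), and direct sums follow by the same principle of choosing the partial isometries from $\tilde\caV$ in the standard construction; you should mention this so the argument is complete.
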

\begin{proof}
That $\tcol$ is closed under tensor product, direct sum, subobjects can be checked
by using $V_{\rho\ld_0}\in {\tilde\caV}_{\rho\ld_0}$ (instead of general 
$V_{\rho\ld_0}\in {\caV}_{\rho\ld_0}$) in the constructions.
That $I_{\omega,\ld_0}$ is a fully faithful functor is trivial by the definition.
Lemma \ref{shachi} with $p=\rho(\unit)$ proves that $I_{\omega,\ld_0}$ is essentially surjective.
\end{proof}
The following Corollary says that the choice of the cone $\ld_0$ in Theorem \ref{same} does not matter.
\begin{cor}
Consider the setting in Theorem \ref{same}.
 If $\theta_0'\in\bbR$, $\varphi_0'\in (0,\pi)$
 $\ld_0'\in \caC_{\theta_0',\varphi_0'}$ is another choice of the cone, then $\col$ and $\coll{\ld_0'}$
are equivalent as braided $C^*$-tensor categories.
\end{cor}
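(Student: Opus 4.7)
The plan is to reduce to the case of a single cone inclusion $\ld_0 \subset \ld_0'$ with a common angular datum $(\theta_0'', \varphi_0'')$, and then exhibit the natural inclusion as an equivalence via a transport construction. First, I would find $(\theta_0'', \varphi_0'')$ with $\varphi_0''$ small enough that $\ld_0, \ld_0' \in \caC_{(\theta_0'', \varphi_0'')}$, and choose $\ld_0'' \in \caC_{(\theta_0'', \varphi_0'')}$ containing both $\ld_0$ and $\ld_0'$ (using the upward-filtering property). By the uniqueness clause of Theorem \ref{same}(ii), the category $\col$ built with $(\theta_0, \varphi_0)$ at the cone $\ld_0$ coincides with the one built with $(\theta_0'', \varphi_0'')$ at the same cone: objects and morphisms do not refer to $(\theta_0, \varphi_0)$ at all, $T_\rho$ restricts consistently to the smaller $\caF$ associated with $(\theta_0'', \varphi_0'')$, and the braiding $\epsilon(\rho, \sigma)$ is independent of the representing cone used in its defining limit. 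The task is thereby reduced to comparing $\col$ and $\coll{\ld_0'}$ when $\ld_0 \subset \ld_0'$ both lie in a common $\caC_{(\theta_0'', \varphi_0'')}$.

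In this nested case the inclusion $\col \hookrightarrow \coll{\ld_0'}$ is immediate, since $\caA_{(\ld_0')^c} \subset \caA_{\ld_0^c}$ gives $\caO_{\omega, \ld_0} \subset \caO_{\omega, \ld_0'}$, and morphisms, tensor product, and braiding are defined by identical formulas. It is thus a faithful (fully faithful) strict braided tensor functor, so only essential surjectivity remains. Given $\rho' \in \caO_{\omega, \ld_0'}$, apply Proposition \ref{sava} inside $\coll{\ld_0'}$ to replace it, up to isomorphism, by some $\tilde\rho \in \tilde\caO_{\omega, \ld_0'}$. Pick $V := V_{\tilde\rho \ld_0} \in \tilde\caV_{\tilde\rho \ld_0}$, so that $V^*V = VV^* = \tilde\rho(\unit) \in Z(\pi(\caA)'')$, and define $\rho(A) := V^* \tilde\rho(A) V$. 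For $A \in \caA_{\ld_0^c}$, using $\tilde\rho(A) = V\pi(A)V^*$ and the centrality of $\tilde\rho(\unit)$, one obtains $\rho(A) = \pi(A)\rho(\unit)$ with $\rho(\unit) = \tilde\rho(\unit)$; for any cone $\ld$, the element $V^* V_{\tilde\rho \ld}$ with $V_{\tilde\rho \ld} \in \tilde\caV_{\tilde\rho \ld}$ lies in $\tilde\caV_{\rho \ld}$, so $\rho \in \tilde\caO_{\omega, \ld_0} \subset \caO_{\omega, \ld_0}$. Finally, $V$ itself is a unitary morphism $\rho \to \tilde\rho$ in $\coll{\ld_0'}$ (the intertwining $V\rho(A) = \tilde\rho(A)V$ and $V^*V = \id_\rho$, $VV^* = \id_{\tilde\rho}$ following directly from the definitions), providing the required isomorphism.

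The main obstacle is the angular bookkeeping in the reduction: one has to verify that for any two choices $(\theta_0, \varphi_0)$ and $(\theta_0', \varphi_0')$ a common refinement $(\theta_0'', \varphi_0'')$ avoiding $\arg \ld_0 \cup \arg \ld_0'$ exists, and that $T_\rho$ and $\epsilon$ are genuinely insensitive to such choices. Once this is in place, the transport step is a direct adaptation of the standard DHR argument, carried out inside the subcategory $\tcol$ where $V_{\rho \ld_0}$ can be taken a partial isometry onto the central projection $\rho(\unit)$; this replacement of $\unit$ by a central projection is the only modification needed to accommodate the mixed-state setting.
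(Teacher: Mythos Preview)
Your approach coincides with the paper's, which simply invokes the analogous argument from \cite{MTC} on the tilde subcategories and then applies Proposition~\ref{sava}; your transport step for essential surjectivity is precisely that construction, with $\unit$ replaced by the central projection $\rho(\unit)$.

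The obstacle you flag at the end is genuine and is not actually resolved by your sketch. A common forbidden arc disjoint from $\arg\ld_0\cup\arg\ld_0'$ need not exist: these two closed arcs can cover the full circle (take two cones with large opening half-angles pointing in roughly opposite directions), so no $(\theta_0'',\varphi_0'')$ as you describe is available. The standard remedy is to interpose an intermediate cone $\ld_0''$ with small angular support so that each of $\arg\ld_0\cup\arg\ld_0''$ and $\arg\ld_0''\cup\arg\ld_0'$ leaves a gap, and then run your nested argument twice. You must also take care that your refinement $(\theta_0'',\varphi_0'')$ lies on the same side of $\arg\ld_0$ as the original $(\theta_0,\varphi_0)$: the orientation relation $\ld_0\leftarrow_{(\theta_0,\varphi_0)}\ld_2$ that fixes the braiding depends on which component of the complement of $\arg\ld_0$ contains the forbidden arc, so moving the forbidden arc across $\arg\ld_0$ would send $\epsilon(\rho,\sigma)$ to its inverse, and your inclusion would be a tensor equivalence but not a braided one.
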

\begin{proof}
By the same argument as \cite{MTC}, $\tcol $
and ${\tilde C_{\omega,\ld_0'}}$ are equivalent as braided $C^*$-tensor categories.
From Proposition \ref{sava},
this means  
$\col $
and ${C_{\omega,\ld_0'}}$ are equivalent as braided $C^*$-tensor categories.
\end{proof}

\begin{cor}
Consider the setting in Proposition \ref{sava}.
If, in addition, $\omega$ is a pure state, $\tcol$ is the same as the braided
$C^*$-tensor category $C^{\rm pure}_{\omega,\ld_0}$ given in \cite{MTC}.
In particular, $\col$ is equivalent to $C^{\rm pure}_{\omega,\ld_0}$ as
braided $C^*$-tensor categories.
\end{cor}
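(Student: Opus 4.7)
The plan is to show that when $\omega$ is pure, the definitions of $\tool$ and $\tvrd$ collapse to exactly the definitions used for the pure-state category $C^{\rm pure}_{\omega,\ld_0}$ of \cite{MTC}, and that the tensor product, associators, and braiding constructed in Theorem \ref{same} reduce to those constructed there. The key structural input is that purity of $\omega$ gives $\pi(\caA)''=\caB(\caH)$, hence $Z(\pi(\caA)'')=\bbC\unit$.

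First I would verify the identification of objects. By definition, $\rho\in\tool$ requires $\rho(\unit)\in Z(\pi(\caA)'')$, which in the pure case forces $\rho(\unit)\in\{0,\unit\}$; since representations are by our convention nonzero, $\rho(\unit)=\unit$. Then every $V_{\rho\ld}\in\tvrd$ satisfies $V_{\rho\ld}^*V_{\rho\ld}=V_{\rho\ld}V_{\rho\ld}^*=\unit$, i.e.\ it is a unitary in $\pi(\caA)''$ implementing $\rho$ on $\caA_{\ld^c}$. This is precisely the DHR-type superselection criterion used for the objects of $C^{\rm pure}_{\omega,\ld_0}$ in \cite{MTC}, together with the localization-in-$\ld_0$ condition (\ref{take}), which is the same in both settings. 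Hence $\Obj\tcol=\Obj C^{\rm pure}_{\omega,\ld_0}$.

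Next I would verify the identification of morphisms. For $\rho,\sigma\in\tool$, the morphism space in $\tcol$ consists of $R\in\pi(\caA)''$ with $R\rho(A)=\sigma(A)R$ for all $A\in\caA$ and $\sigma(\unit)R\rho(\unit)=R$. Since $\rho(\unit)=\sigma(\unit)=\unit$ in the pure case, the second condition is trivial, so $\Mor_{\tcol}(\rho,\sigma)$ is exactly the space of intertwiners from $\rho$ to $\sigma$ in $\pi(\caA)''$, which matches $\Mor_{C^{\rm pure}_{\omega,\ld_0}}(\rho,\sigma)$. With identity morphisms $\id_\rho=\rho(\unit)=\unit$, composition is ordinary operator multiplication in both categories.

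Finally I would match the monoidal and braided structures. The endomorphism $T_\rho$ of $\caF$ from part (ii) of Theorem \ref{same} is built from $V_{\rho\ld}\in\tvrd$ via $T_\rho=\Ad V_{\rho\ld}$ on $\pi(\caA_{\ld^c})'\cap\pi(\caA)''$; in the pure unital case this is exactly the extension used in \cite{MTC} to define $T_\rho$, so the tensor products $\rho\otimes\sigma=T_\rho T_\sigma\pi$ and $R\otimes S=R\, T_\rho(S)$ coincide with those in $C^{\rm pure}_{\omega,\ld_0}$. Likewise the braiding $\epsilon(\rho,\sigma)=\lim_{t\to\infty}V_{\sigma\ld_2(t)}T_\rho(V_{\sigma\ld_2(t)}^*)$ is constructed from elements of $\tvrd$, so it agrees with the braiding of \cite{MTC}. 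This establishes $\tcol=C^{\rm pure}_{\omega,\ld_0}$ on the nose, and the second assertion then follows from Proposition \ref{sava}, which identifies $\col$ with $\tcol$ up to equivalence. The only subtlety I expect is making sure the conventions for extending $\rho$ from $\caA_{\ld^c}$ to all of $\caA$, and for the localization cone $\ld_0$, are literally the same as in \cite{MTC}; this is a matter of bookkeeping rather than a genuine obstacle once $\rho(\unit)=\unit$ is established.
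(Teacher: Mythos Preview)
Your proposal is correct and follows essentially the same route as the paper's own proof: purity gives $\pi(\caA)''=\caB(\caH)$ and hence $Z(\pi(\caA)'')=\bbC\unit$, forcing $\rho(\unit)=\unit$ for every $\rho\in\tool$, after which the object, morphism, tensor, and braiding data of $\tcol$ and $C^{\rm pure}_{\omega,\ld_0}$ literally coincide, and Proposition~\ref{sava} gives the equivalence with $\col$. The paper is slightly more explicit than you are about the converse inclusion $\Obj C^{\rm pure}_{\omega,\ld_0}\subset\Obj\tcol$ (checking that a unitary $V_{\rho\ld}\in\caV_{\rho\ld}^{\rm pure}\subset\caB(\caH)=\pi(\caA)''$ indeed lands in $\tvrd$), but this is the bookkeeping you already anticipated.
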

\begin{proof}
Recall objects in  $C^{\rm pure}_{\omega,\ld_0}$ are representations $\rho$ of $\caA$ on $\caH_\omega$
that have nonempty 
\begin{align}
\begin{split}
\caV_{\rho\ld}^{\rm pure}:=
\left\{  \Vrd\in \caU(\caH_\omega)\mid
\left. \Ad \Vrd \circ\pi_\omega\right\vert_{\caA_{\ld^c}}=\left. \rho\right\vert_{\caA_{\ld^c}}
\right\}\neq\emptyset,
\end{split}
\end{align}for any cone $\ld$, with $\unit\in \caV_{\rho\ld_0}^{\rm pure}$.

When $\omega$ is pure, $\pi_\omega(\caA)''=\caB(\caH_\omega)$, and it is a factor.
In this situation, we claim $\Obj \tcol=\Obj C^{\rm pure}_{\omega,\ld_0}$, and
$\caV_{\rho\ld}^{\rm pure}=\tvrd$ for any cone $\ld$.
 In fact for $\rho\in \Obj \tcol$, we have $\rho(\unit)\in Z\lmk\pi_\omega(\caA)''\rmk=\bbC\unit$, hence
 $\rho(\unit)=\unit$. As a result,  any element in $\tvrd$
has to be a unitary.
This means $\tvrd\subset \caV_{\rho\ld}^{\rm pure}$ and 
$\rho\in \Obj C^{\rm pure}_{\omega,\ld_0}$.
Hence $  \Obj \tcol\subset \Obj C^{\rm pure}_{\omega,\ld_0}$.
Conversely, if $\rho\in \Obj C^{\rm pure}_{\omega,\ld_0}$, then
$\rho(\unit)=\unit\in Z\lmk\pi_\omega\lmk\caA\rmk''\rmk$
and $\caV_{\rho\ld}^{\rm pure}\subset \tvrd$, 
hence
 we have $\Obj C^{\rm pure}_{\omega,\ld_0}\subset 
\Obj \tcol$.This proves the claim.

Because $\pi_\omega(\caA)''=\caB(\caH_\omega)$,
\begin{align}
\Mor_{C^{\rm pure}_{\omega,\ld_0}}(\rho,\sigma)
=\left\{R\in \caB(\caH_\omega)\mid
R\rho(A)=\sigma(A)R\; \text{for all}\; A\in \caA
\right\}
=\Mor_{\tcol}(\rho,\sigma)
\end{align} holds for any $\rho,\sigma\in \Obj \tcol=\Obj C^{\rm pure}_{\omega,\ld_0}$.
As the result of $\caV_{\rho\ld}^{\rm pure}=\tvrd$, tensor products, direct sums, subobjects and  braidings given
in terms of them are the same in $\tcol$ and $C^{\rm pure}_{\omega,\ld_0}$.
\end{proof}

\section{Braided $C^*$-tensor categories of subsystems}\label{subsys}
Let $\caA$, $\caB$ be two-dimensional quantum spin systems.
In this section, 
we consider the relation between the category of the system $\caA\otimes\caB$
and the category of the subsystem $\caA$.
First, we relate the GNS representations of a state $\omega$ on $\caA\otimes\caB$ and its restriction to the subsystem ${\omega\vert_{\caA}}$.
\begin{lem}\label{hina}
Let $\caA$, $\caB$ be two-dimensional quantum spin systems.
Let $\omega$ be a state on $\caA\otimes \caB$ with a GNS triple $(\caH,\pi,\Omega)$.
Let $\varphi:=\omega\vert_{\caA}$ be the restriction of
$\omega$ onto $\caA$.
Let $p$ be the orthogonal projection
onto the closed subspace $\overline{\pi(\caA)\Omega}$ in $\caH$.
Then the following hold.
\begin{description}
\item[(i)]
The triple
$(p\caH,\pi_{\varphi},\Omega)$
with $\pi_\varphi(A):=\pi(A) p$, $A\in\caA$
is a GNS triple of $\varphi$,
\item[(ii)]The map
\begin{align}
\Theta :\pi(\caA)''\ni x\mapsto xp\in  \pi_\varphi(\caA)''
\end{align}
is a $*$-isomorphism satisfying
\begin{align}\label{goldenring}
\Theta\lmk \pi(A)\rmk=\pi_\varphi(A),\quad A\in \caA.
\end{align}
\end{description}
\end{lem}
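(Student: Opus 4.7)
The plan is to handle (i) and (ii) in sequence, with the only substantial point being injectivity of $\Theta$, which is where the tensor structure $\caA\otimes\caB$ is genuinely needed.

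For (i), I would first observe that, because $\caA$ is a $*$-algebra, the closed subspace $\overline{\pi(\caA)\Omega}$ is invariant under $\pi(A)$ and $\pi(A)^{*}=\pi(A^{*})$ for every $A\in\caA$, so $p\in\pi(\caA)'$. This makes $\pi_{\varphi}(A):=\pi(A)p=p\pi(A)p$ a well-defined $*$-representation of $\caA$ on $p\caH$. Since $\Omega\in p\caH$ with $p\Omega=\Omega$, the closure of $\pi_{\varphi}(\caA)\Omega$ equals $\overline{\pi(\caA)\Omega}=p\caH$, giving cyclicity. The GNS identity follows from
\[
\la\Omega,\pi_{\varphi}(A)\Omega\ra=\la\Omega,\pi(A)p\Omega\ra=\la\Omega,\pi(A)\Omega\ra=\omega(A\otimes\unit_{\caB})=\varphi(A).
\]

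For (ii), since $p\in\pi(\caA)'$, the map $\Theta(x):=xp$, viewed as an operator on $p\caH$, is a normal $*$-homomorphism from $\pi(\caA)''$ into $\caB(p\caH)$. By the standard reduction theorem for von Neumann algebras with a projection in the commutant, the range $\pi(\caA)''p$ is itself a von Neumann algebra on $p\caH$ and coincides with $(\pi(\caA)p)''=\pi_{\varphi}(\caA)''$. Equation \eqref{goldenring} is then immediate from the definition.

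The only thing left is injectivity of $\Theta$, which is the main obstacle and where the tensor-product assumption enters. The key input is that elements of $\caA$ commute with elements of $\caB$ inside $\caA\otimes\caB$, so $\pi(\caB)\subset\pi(\caA)'$, and hence every $x\in\pi(\caA)''$ commutes with every $\pi(B)$, $B\in\caB$. Suppose $xp=0$. Using $p\in\pi(\caA)'$ and $p\Omega=\Omega$, for any $A\in\caA$,
\[
x\pi(A)\Omega=x\pi(A)p\Omega=xp\pi(A)\Omega=0.
\]
Then for any $B\in\caB$ I can slide $\pi(B)$ past $x$ and conclude
\[
x\pi(A)\pi(B)\Omega=\pi(B)x\pi(A)\Omega=0.
\]
Since $\pi(\caA\otimes\caB)\Omega$ is spanned by such vectors $\pi(A)\pi(B)\Omega$ and $\Omega$ is cyclic for $\pi(\caA\otimes\caB)$, this forces $x=0$. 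Together with the surjectivity onto $\pi_{\varphi}(\caA)''$ already established, this makes $\Theta$ a $*$-isomorphism and finishes (ii).
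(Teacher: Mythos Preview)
Your proof is correct and follows essentially the same approach as the paper: the paper dismisses (i) as trivial and the surjectivity of $\Theta$ as clear from $p\in\pi(\caA)'$, and its injectivity argument is identical to yours (commute $x\in\pi(\caA)''$ past $\pi(B)$, use $xp=0$ on $\pi(\caA)\Omega$, and invoke cyclicity of $\Omega$ for $\pi(\caA\otimes\caB)$). You have simply supplied the details the paper leaves implicit.
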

\begin{proof}
(i) is trivial from the setting.
The map $\Theta$ in (ii) is clearly a surjective $*$-homomorphism because $p\in \pi(\caA)'$.
If $\Theta(x)=0$ for some $x\in \pi(\caA)''$, then
\begin{align}
\begin{split}
x\pi(B)\pi(A)\Omega=\pi(B) x \pi(A)\Omega
=\pi(B) x p\pi(A)\Omega=0,
\end{split}
\end{align}
for any $A\in\caA$ and $B\in {\caB}$.
Because $\Omega$ is cyclic for $\pi\lmk \caA\otimes{\caB}\rmk$, we have
$x=0$.
Hence $\Theta$ is injective hence an $*$-isomorphism.
\end{proof}
\begin{thm}\label{cofe}
Let $\caA$, $\caB$ be two-dimensional quantum spin systems.
Let $\omega$ be a state on $\caA\otimes \caB$
and $\varphi:=\omega\vert_{\caA}$ the restriction of
$\omega$ onto $\caA$.
Suppose that both $\omega$ and $\varphi$ have properly infinite cone algebras
and satisfy the approximate Haag duality.
Let $(\caH,\pi)$, $(\caH_\varphi, \pi_\varphi)$ be GNS representations of $\omega$
and $\varphi$ and let  $\Theta : \pi(\caA)''\to \pi_\varphi(\caA)''$ be the $*$-isomorphism satisfying (\ref{goldenring}) (see  Lemma \ref{hina}).
Let $\theta_0\in\bbR$, $0<\varphi_0<\pi$ and $\ld_0\in \ctvz$.
Then
\begin{align}
\folz(\rho):=\hat \rho,\quad \rho\in \Obj \cols{\varphi}
\end{align}
with
\begin{align}\label{rhat}
\begin{split}
\hat\rho\lmk A\otimes B\rmk
=\Theta^{-1}\lmk\rho(A)\rmk \pi(B),\quad
A\in \caA,\quad B\in \caB
\end{split}
\end{align}
and 
\begin{align}\label{matrix}
\begin{split}
\folz(R):=\Theta^{-1}(R),\quad
\rho,\sigma\in\Obj \cols{\varphi},\quad R\in \Mor_{\cols{\varphi}}\lmk \rho,\sigma\rmk
\end{split}
\end{align}
defines a strict faithful braided tensor functor 
$\folz : \cols{\varphi}\to \col$.
\end{thm}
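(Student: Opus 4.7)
The plan is to verify, in sequence, that $\folz$ (i) sends objects to objects, (ii) sends morphisms to morphisms and is a $*$-functor, (iii) is faithful, and (iv) is a strict braided tensor functor, leveraging two structural facts: $\Theta\colon\pi(\caA)''\to\pi_\varphi(\caA)''$ is a $*$-isomorphism of von Neumann algebras with $\Theta(\pi(A))=\pi_\varphi(A)$ for $A\in\caA$, and $\pi(\caA)''\subseteq\pi(\caB)'$ since the images of $\caA$ and $\caB$ under $\pi$ commute.

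For (i), the formula (\ref{rhat}) defines $\hat\rho$ as a $*$-representation of $\caA\otimes\caB$: the commutation $\Theta^{-1}(\rho(A))\pi(B)=\pi(B)\Theta^{-1}(\rho(A))$ together with $\Theta^{-1}\circ\rho$ and $\pi\vert_{\caB}$ being $*$-homomorphisms yields multiplicativity and involution preservation on the algebraic tensor product, and contractivity extends $\hat\rho$ to $\caA\otimes\caB$. To see $\hat\rho\in\ools{\omega}$, the localization $\hat\rho(C)=\pi(C)\hat\rho(\unit)$ for $C\in\caA_{\ld_0^c}\otimes\caB_{\ld_0^c}=(\caA\otimes\caB)_{\ld_0^c}$ follows by (\ref{goldenring}) from the corresponding property of $\rho\in\ools{\varphi}$, and for each cone $\ld$ the element $\hat V_{\rho,\ld}:=\Theta^{-1}(V_{\rho,\ld})$ gives a member of $\caV_{\hat\rho\ld}$: the condition $\hat V_{\rho,\ld}^*\hat V_{\rho,\ld}\in\pi((\caA\otimes\caB)_{\ld^c})'$ decomposes into commutation with $\pi(\caA_{\ld^c})$, transported via $\Theta$ from the analogous property of $V_{\rho,\ld}$, and with $\pi(\caB_{\ld^c})\subseteq\pi(\caA)'$, which is automatic; the intertwining on $\caA_{\ld^c}\otimes\caB_{\ld^c}$ then follows by pushing $\pi(B)$ past $\hat V_{\rho,\ld}$.

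For (ii), $\Theta^{-1}(R)\in\pi(\caA)''\subseteq\pi(\caA\otimes\caB)''$, and the identities $R\rho(A)=\sigma(A)R$ and $\sigma(\unit)R\rho(\unit)=R$ translate verbatim under $\Theta^{-1}$, using commutation with $\pi(B)$, to show $\folz(R)\in\Mor_{\col}(\hat\rho,\hat\sigma)$; identity, composition, and adjoint preservation follow at once from $\Theta$ being a $*$-isomorphism. Faithfulness (iii) is immediate from injectivity of $\Theta^{-1}$.

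Step (iv) is the main technical obstacle and rests on the intertwining
\[
\Theta^{-1}\circ T_\rho=T_{\hat\rho}\circ\Theta^{-1}
\]
on $\caF_\varphi:=\overline{\cup_{\ld\in\ctvz}\pi_\varphi(\caA_\ld)''}^n$. I would establish this using the local adjoint-action description: for a cone $K\in\ctvz$ disjoint from $\ld$, $T_\rho(x)=\Ad V_{\rho,K}(x)$ for $x\in\pi_\varphi(\caA_\ld)''$, and since $\hat V_{\rho,K}\in\caV_{\hat\rho K}$ and $\Theta^{-1}(x)\in\pi(\caA_\ld)''\subseteq\pi((\caA\otimes\caB)_\ld)''$, we get $T_{\hat\rho}(\Theta^{-1}(x))=\Ad\hat V_{\rho,K}(\Theta^{-1}(x))=\Theta^{-1}(\Ad V_{\rho,K}(x))$; norm continuity of $\Theta^{-1}$ extends the identity to $\caF_\varphi$. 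The delicate point—where approximate Haag duality for $\varphi$ enters—is that elements like $V_{\sigma,\ld_2(t)}^*$ appearing in the braiding lie a priori only in $\pi_\varphi(\caA_{\ld'^c})'\cap\pi_\varphi(\caA)''$ and must be approximated inside some $\pi_\varphi(\caA_{\ld'_\varepsilon(-R)})''$ before the above explicit formula applies. Granted this intertwining, tensor unit preservation is immediate since $\hat\pi_\varphi=\pi$, strict tensoriality on objects follows from
\[
(\hat\rho\otimes\hat\sigma)(A\otimes B)=T_{\hat\rho}\lmk\Theta^{-1}(\sigma(A))\pi(B)\rmk=\Theta^{-1}(T_\rho(\sigma(A)))\pi(B)=\widehat{\rho\otimes\sigma}(A\otimes B),
\]
where the middle equality moves $\pi(B)$ past $\hat V_{\rho,K}$; tensoriality on morphisms $\folz(R\otimes S)=\folz(R)\otimes\folz(S)$ follows from $\Theta^{-1}(R\,T_\rho(S))=\Theta^{-1}(R)\,T_{\hat\rho}(\Theta^{-1}(S))$; and braiding preservation $\folz(\epsilon(\rho,\sigma))=\epsilon(\hat\rho,\hat\sigma)$ follows by taking the defining norm limit through the continuous $\Theta^{-1}$ and applying the intertwining termwise.
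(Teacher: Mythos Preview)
Your proposal is correct and follows essentially the same route as the paper. The only notable variation lies in how you establish the key intertwining $T_{\hat\rho}\circ\Theta^{-1}=\Theta^{-1}\circ T_\rho$ on $\caF_\varphi$: the paper verifies it on $\pi(A)$ for $A\in\caA$ via the defining relation $T_\rho\pi_\varphi=\rho$ and then extends by the $\sigma$-weak continuity of $T_\rho$ and $T_{\hat\rho}$ on each cone algebra (Theorem~\ref{same}(ii)), whereas you invoke the concrete local formula $T_\rho=\Ad V_{\rho,K}$ on $\pi_\varphi(\caA_\ld)''$ for a disjoint cone $K$. Both arguments are valid and yield the same identity; the paper's version is slightly cleaner in that it uses only the abstract characterization of $T_\rho$, while yours is more explicit. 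Your observation that approximate Haag duality is what places $V_{\sigma\ld_2(t)}$ inside $\caF_\varphi$ (so that the intertwining applies in the braiding computation) is correct and is used, though only implicitly, in the paper as well.
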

\begin{proof}
Note from Remark \ref{panda} that for each $\rho\in\Obj \cols{\varphi}$,
$\Theta^{-1}\rho :\caA\to{\caB}(\caH)$ is a well-defined representation of
$\caA$ on $\caH$,
whose range commutes with that of $\pi\vert_{{\caB}}$.
Therefore, there exists a unique representation
$\hat \rho$ of $\caA\otimes{\caB}$
satisfying (\ref{rhat}). (See Proposition 3.3.7 of \cite{brown2008textrm}.)

We claim $\Theta^{-1}\lmk \vrd\rmk\subset\caV_{\hat\rho\ld}$
for any $\rho\in \Obj\cols\varphi$ and a cone $\ld$.
In fact, for any $\Vrd\in \vrd$, because $\Theta$ is a map from $\pi(\caA)''$ 
to  $\pi_\varphi(\caA)''$,
we have $\Theta^{-1}(\Vrd)\in \pi(\caA)''$.
Using this fact and (\ref{goldenring}), we have
\begin{align}
\begin{split}
&\Ad\lmk \Theta^{-1}\lmk V_{\rho\Lambda}\rmk\rmk
\lmk
\pi(A\otimes B)
\rmk\\
&=\lmk
\Ad\lmk\Theta^{-1}\lmk V_{\rho\Lambda}\rmk\rmk
\lmk
\pi(A)
\rmk\rmk\cdot
\pi(B)
\\
&=\lmk
\Ad\lmk \Theta^{-1}\lmk V_{\rho\Lambda}\rmk\rmk
\lmk
\Theta^{-1}\pi_\varphi(A)
\rmk\rmk\cdot
\pi(B)
\\
&=
\lmk \Theta^{-1}\Ad \lmk V_{\rho\Lambda}\rmk\pi_\varphi(A)\rmk\cdot \pi(B)\\
&=\lmk \Theta^{-1}\rho(A)\rmk\cdot \pi(B)
\\
&=\hat\rho(A\otimes B)
\end{split}
\end{align}
for any $A\in\caA_{\ld^c}$ and $B\in{\caB_{\ld^c}}$.
We also have
\begin{align}
\begin{split}
&\Theta^{-1}\lmk V_{\rho\Lambda}\rmk^* \Theta^{-1}\lmk V_{\rho\Lambda}\rmk
=\Theta^{-1}\lmk  V_{\rho\Lambda}^*  V_{\rho\Lambda}  \rmk\\
&\in \pi(\caA_{\ld^c})'\cap \pi(\caB)'
\subset
 \pi\lmk\lmk \caA\otimes\caB\rmk_{\ld^c}\rmk'.
\end{split}
\end{align}
Hence we have $\Theta^{-1}(\Vrd)\in \caV_{\hat\rho\ld}$ and $\hat\rho\in \caO_\omega$.
Furthermore, for any $A\in \caA_{\Lambda_0^c}$ and
$B\in {\caB}_{\Lambda_0^c}$, we have
\begin{align}
\begin{split}
\hat\rho\lmk A\otimes B\rmk
=\Theta^{-1}\lmk\rho(A)\rmk \pi\lmk B\rmk
=\Theta^{-1}\lmk \pi_\varphi(A) \rho(\unit)\rmk  \pi\lmk B\rmk
= \pi\lmk A\otimes B\rmk \Theta^{-1}\lmk \rho(\unit)\rmk
=\pi\lmk A\otimes B\rmk \hat\rho(\unit).
\end{split}
\end{align}
Hence we conclude $\hat\rho\in \ool$.

Similarly, it is straightforward to check that (\ref{matrix})
defines a morphism in $\col$
from $\hat\rho$ to $\hat\sigma$
for each $\rho,\sigma\in\Obj \cols{\varphi}$,
$R\in \Mor_{\cols{\varphi}}\lmk \rho,\sigma\rmk$.
In fact, $\Theta^{-1}(R)\in \pi(\caA)''\subset\pi(\caA\otimes\caB)''$ is well defined  by the definition of
$\Theta$.
For $A\in \caA$ and $B\in \caB$, we have
\begin{align}
\begin{split}
&\Theta^{-1}(R) \hat\rho(A\otimes B)
=\Theta^{-1}(R) \Theta^{-1}\lmk \rho(A)\rmk \pi(B)
=\Theta^{-1}\lmk R \rho(A)\rmk \pi(B)
=\Theta^{-1}\lmk \sigma(A) R \rmk \pi(B)\\
&=\Theta^{-1}\lmk \sigma(A) \rmk  \Theta^{-1}\lmk  R \rmk \pi(B)
=\Theta^{-1}\lmk \sigma(A) \rmk \pi(B)  \Theta^{-1}\lmk  R \rmk 
=\hat\sigma(A\otimes B) \Theta^{-1}\lmk  R \rmk 
\end{split}
\end{align}
because $\Theta^{-1}(R)\in  \pi(\caA)''$ commutes with $\pi(\caB)$.
Furthermore, 
\begin{align}
\begin{split}
\Theta^{-1}\lmk\sigma(\unit)\rmk\Theta^{-1}(R) \Theta^{-1}\lmk\rho(\unit)\rmk
=\Theta^{-1}\lmk\sigma(\unit)\cdot R\cdot \rho(\unit)\rmk
=\Theta^{-1}(R).
\end{split}
\end{align}
Hence we have 
\begin{align}
\begin{split}
\folz(R)=\Theta^{-1}(R)
\in \Mor_{\col}\lmk\hat\rho,\hat\sigma\rmk.
\end{split}
\end{align}

Obviously we have $\folz(SR) =\folz(S)\folz(R)$
when $S$ and $R$ are composable, and 
\begin{align}
\folz\lmk \id_\rho\rmk=\Theta^{-1}(\rho(\unit))=\hat\rho(\unit)
=\id_{\folz(\rho)}.
\end{align}
Hence $\folz$ is a functor.Clearly it is faithful.

Next we show that $\folz$ is a strict tensor functor.
In fact, $\folz(\pi_\varphi)=\pi$ because
\begin{align}
\folz(\pi_\varphi)\lmk A\otimes B\rmk
=\Theta^{-1}\lmk \pi_\varphi(A)\rmk \cdot \pi(B)
=\pi\lmk A\otimes B\rmk,
\end{align}
for any $A\in\caA$ and $B\in\caB$.

To see that $\hat\rho\otimes\hat\sigma=\folz(\rho)\otimes \folz(\sigma)$
is equal to $\widehat{\rho\otimes\sigma}=\folz(\rho\otimes\sigma)$,
first we claim 
that 
\begin{align}\label{koara}
\begin{split}
T_{\hat\rho}(x)=\Theta^{-1}T_\rho\Theta(x),\quad
\text{for all}\quad x\in \overline{\cup_{\ld \in \ctvz}\pi(\caA_\ld)''}^{n}
\end{split}
\end{align}
for $\rho\in \ool$. (Recall (ii) if Theorem \ref{same}.)
In fact we have
\begin{align}\label{kangal}
\begin{split}
\Theta^{-1}T_\rho\Theta\pi(A)
=\Theta^{-1}T_\rho\lmk \pi_\varphi(A)\rmk
=\Theta^{-1}\rho(A)
=\hat\rho(A)=T_{\hat\rho}\pi(A),
\end{split}
\end{align}
for any $A\in \caA$.
Because $\Theta\lmk \pi\lmk\caA_{\ld}\rmk''\rmk=\pi_\varphi(\caA_\ld)''$ and
$T_\rho$ is $\sigma$-weak continuous on $\pi_\varphi(\caA_\ld)''$
for any $\ld\in\ctvz$,
$\Theta^{-1}T_\rho\Theta$ is $\sigma$-weak continuous on $\pi(\caA_\ld)''$.
As $T_{\hat\rho}$ is also $\sigma$-weak continuous on $\pi(\caA_\ld)''$, combining 
with (\ref{kangal}), we conclude that $\Theta^{-1}T_\rho\Theta(x)=T_{\hat\rho}(x)$
for any $x\in \overline{\cup_{\ld \in \ctvz}\pi(\caA_\ld)''}^{n}$.
On the other hand,
$T_{\hat\rho}(y)=y\hat \rho(\unit)$ for any
$y\in \overline{\cup_{\ld \in \ctvz}\pi(\caB_\ld)''}^{n}$,
because $\hat\rho(B)=\pi(B)\hat \rho(\unit)$ for any $B\in \caB$.
Using these facts, for any $A\in\caA$ and $B\in\caB$, we have
\begin{align}
\begin{split}
&\lmk \hat\rho\otimes \hat\sigma\rmk(A\otimes B)=
T_{\hat\rho}T_{\hat\sigma}\pi(A\otimes B)
=T_{\hat\rho}T_{\hat\sigma}\pi(A)\cdot T_{\hat\rho}T_{\hat\sigma}\pi(B)\\
&=\Theta^{-1}T_\rho\Theta\Theta^{-1}T_\sigma\Theta\pi(A)\cdot
\pi(B)
=\Theta^{-1}T_\rho T_\sigma\pi_\varphi(A)\cdot
\pi(B)\\
&=\Theta^{-1}\rho\otimes \sigma(A)\cdot
\pi(B)
=\widehat{\rho\otimes\sigma}\lmk A\otimes B\rmk.
\end{split}
\end{align}
Hence we have
$\hat\rho\otimes\hat\sigma=\widehat{\rho\otimes\sigma}$.

To see the identity morphisms give natural isomorphisms,
let $\rho,\rho',\sigma,\sigma'\in\Obj \cols{\varphi}$
and
$R\in \Mor_{\cols{\varphi}}(\rho,\rho')$, 
$S\in \Mor_{\cols{\varphi}}(\sigma,\sigma')$.
Recalling (\ref{koara}) and
$\Theta^{-1}(S)\in \overline{\cup_{\ld \in \ctvz}\pi(\caA_\ld)''}^{n}$, we have
\begin{align}
\begin{split}
&\folz(R)\otimes \folz(S)
=\Theta^{-1}(R) T_{\hat \rho} \lmk \Theta^{-1}(S)\rmk
=\Theta^{-1}(R)
\Theta^{-1}T_\rho\Theta\Theta^{-1}(S)\\
&=\Theta^{-1}\lmk R T_\rho(S)\rmk
=\Theta^{-1}\lmk R \otimes S\rmk
=\folz\lmk  R \otimes S\rmk.
\end{split}
\end{align}
Hence identity morphisms give natural isomorphisms
$\folz(\rho)\otimes \folz(\sigma)\to \folz(\rho\otimes\sigma)$.
Hence $\folz: \cols{\varphi}\to\col$ is a strict tensor functor.

To see that $\folz$ is braided,
let $\rho,\sigma\in\Obj\col$,
 $\ld_2\in \ctvz$ with $\ld_{0}\leftarrow_{(\theta_0,\varphi_0)} \ld_2$
  and $V_{\sigma \ld_2(t)}\in \caV_{\sigma \ld_2(t)}$.
From the above observation, we know that $\Theta^{-1}\lmk V_{\sigma \ld_2(t)} \rmk
\in \caV_{\hat \sigma \ld_2(t)}$.
From this and (\ref{koara}), we have
\begin{align}
\begin{split}
&\epsilon\lmk\hat\rho,\hat\sigma \rmk
=\lim_{t} \Theta^{-1}\lmk V_{\sigma \ld_2(t)} \rmk
T_{\hat\rho}
\lmk
\Theta^{-1}\lmk V_{\sigma \ld_2(t)}^* \rmk
\rmk
=
\lim_{t} \Theta^{-1}\lmk V_{\sigma \ld_2(t)} \rmk
\Theta^{-1}T_{\rho}\Theta
\lmk
\Theta^{-1}\lmk V_{\sigma \ld_2(t)}^* \rmk
\rmk\\
&=\lim_t 
 \Theta^{-1}\lmk V_{\sigma \ld_2(t)} 
 T_{\rho}\lmk V_{\sigma \ld_2(t)}^* \rmk
 \rmk
 =\Theta^{-1}\lmk\epsilon(\rho,\sigma)\rmk
 =\folz\lmk\epsilon(\rho,\sigma)\rmk.
\end{split}
\end{align}
This proves that $\folz$ is braided.
\end{proof}
In some situations that we will consider, this functor is fully faithful.
\begin{lem}\label{washi}
In the setting of Theorem \ref{cofe},  
suppose that 
$\psi:=\omega\vert_{\caB}$ is pure.
Then, 
the tensor functor $\folz$ of  Theorem \ref{cofe}
is fully faithful.
\end{lem}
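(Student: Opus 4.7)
The plan is to use the purity of $\psi$ to reduce $\omega$ to a product state $\varphi \otimes \psi$, and then identify morphisms through the resulting tensor-product structure of the GNS.

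First, since $\psi$ is pure, the standard fact that any positive functional on $\caB$ dominated by $\psi$ is a scalar multiple of $\psi$, applied to the functional $B \mapsto \omega(A \otimes B)$ for positive $A \in \caA$, yields $\omega = \varphi \otimes \psi$. Consequently the GNS triple of $\omega$ is $(\caH_\varphi \otimes \caH_\psi, \pi_\varphi \otimes \pi_\psi, \Omega_\varphi \otimes \Omega_\psi)$; the projection $p$ of Lemma \ref{hina} becomes $\unit_{\caH_\varphi} \otimes \ketbra{\Omega_\psi}{\Omega_\psi}$; and the isomorphism $\Theta : \pi(\caA)'' \to \pi_\varphi(\caA)''$ becomes the spatial map $R \otimes \unit \mapsto R$ under the identification $p\caH \cong \caH_\varphi$. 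Moreover, since $\pi_\psi(\caB)'' = \caB(\caH_\psi)$ by purity, we have $\pi(\caA \otimes \caB)'' = \pi_\varphi(\caA)'' \bar\otimes \caB(\caH_\psi)$.

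Next, fix $\rho, \sigma \in \Obj \cols{\varphi}$ and take an arbitrary $T \in \Mor_{\col}(\hat\rho, \hat\sigma)$, with the goal of producing $R \in \Mor_{\cols{\varphi}}(\rho, \sigma)$ such that $\folz(R) = T$. Setting $A = \unit$ in the intertwining relation $T \hat\rho(A \otimes B) = \hat\sigma(A \otimes B) T$ and using the support condition $\hat\sigma(\unit) T \hat\rho(\unit) = T$ (which in the decomposition reads $T = (\sigma(\unit) \otimes \unit)\, T \,(\rho(\unit) \otimes \unit)$), one finds $T(\unit \otimes \pi_\psi(B)) = (\unit \otimes \pi_\psi(B)) T$ for every $B \in \caB$. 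Because $\pi_\psi(\caB)'' = \caB(\caH_\psi)$, $T$ then commutes with $\unit \otimes \caB(\caH_\psi)$, so $T$ lies in
\begin{align*}
\lmk \pi_\varphi(\caA)'' \bar\otimes \caB(\caH_\psi) \rmk \cap \lmk \caB(\caH_\varphi) \otimes \unit \rmk = \pi_\varphi(\caA)'' \otimes \unit,
\end{align*}
where the equality follows from a slice-map argument using the vector state of $\Omega_\psi$. Hence $T = \Theta^{-1}(R)$ for a unique $R \in \pi_\varphi(\caA)''$.

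Finally, I would extract the morphism properties of $R$ from those of $T$. Setting $B = \unit$ in the intertwining relation yields $(R\rho(A)) \otimes \unit = (\sigma(A) R) \otimes \unit$, so $R\rho(A) = \sigma(A) R$ for all $A \in \caA$; and $\sigma(\unit) R \rho(\unit) = R$ is precisely the image under $\Theta$ of $\hat\sigma(\unit) T \hat\rho(\unit) = T$. Therefore $R \in \Mor_{\cols{\varphi}}(\rho, \sigma)$ and $\folz(R) = T$, which, combined with the faithfulness of $\folz$ already established in Theorem \ref{cofe}, proves full faithfulness. The main subtlety lies in the first step: setting up the product-state decomposition cleanly and checking that the tensor factorization of the GNS data is compatible with $\Theta$ and with the construction of $\hat\rho$ given by (\ref{rhat}); once that is in place, the remainder is routine commutant manipulation.
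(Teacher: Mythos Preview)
Your proof is correct and follows essentially the same route as the paper's: both use purity of $\psi$ to factor $\omega=\varphi\otimes\psi$ (the paper cites IV Lemma~4.11 of \cite{takesaki}), pass to the tensor GNS representation with $\Theta^{-1}(x)=x\otimes\unit$, and then use $\pi_\psi(\caB)''=\caB(\caH_\psi)$ to force any morphism $T\in\Mor_{\col}(\hat\rho,\hat\sigma)$ into $\pi_\varphi(\caA)''\otimes\unit$. The paper packages the last step as a direct computation of the morphism set, while you argue elementwise, but the content is identical.
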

\begin{proof}
Let $(\caH_\psi,\pi_\psi,\Omega_\psi)$ and
$(\caH_\varphi,\pi_\varphi,\Omega_{\varphi})$ be GNS triples of $\psi$ and $\varphi:=\omega\vert_\caA$ respectively.
By the assumption, $\omega$ is of the tensor product form
(IV Lemma 4.11\cite{takesaki}), and 
we can take a GNS representation  $(\caH,\pi,\Omega)$
of $\omega$ of the form $\caH:=\caH_\varphi\otimes\caH_{\psi}$
,$\pi:= \pi_\varphi\otimes\pi_\psi$, $\Omega:=\Omega_\varphi\otimes\Omega_\psi$.
In this representation, the isomorphism of Lemma \ref{hina} is
$\Theta^{-1}(x)=x\otimes\unit$, $x\in\pi_\varphi(\caA)''$.
Because $\psi$ is pure, we have $\pi_\psi({\caB})''=\caB(\caH_{\psi})$.
Using this fact,
for any $\rho,\sigma\in \Obj \cols{\varphi}$, we have
\begin{align}
\begin{split}
&\Mor_{\col}
\lmk
\folz(\rho), \folz(\sigma)
\rmk\\
&=
\left\{
S\in \lmk \pi_\varphi\lmk \caA\rmk\rmk''
\bar\otimes \lmk \pi_\psi\lmk{\caB}\rmk\rmk''\middle |
\begin{gathered}
S \Theta^{-1}\lmk\rho(A)\rmk\pi(B) 
=\Theta^{-1}\lmk\sigma(A)\rmk\pi(B) S,\quad A\in \caA,\quad B\in{\caB} \\
\Theta^{-1}(\sigma(\unit)) S\Theta^{-1}(\rho(\unit)) =S
\end{gathered}
\right\}\\
&=
\left\{
S\otimes\unit_\psi
\middle |
\begin{gathered}
S\in \lmk \pi_\varphi\lmk \caA\rmk\rmk'',
S\rho(A)=\sigma(A)S,\quad A\in \caA\\
\sigma(\unit) S \rho(\unit)=S
\end{gathered}
\right\}\\
&=\Mor_{\cols\varphi}(\rho,\sigma)\otimes\unit_\psi.
\end{split}
\end{align}
This means $\folz$ 
is fully faithful.
\end{proof}

\section{Stabilization}\label{stabilization}
In this section, we show Theorem \ref{stab}.
\subsection{Preparation}
In this subsection, we provide Lemmas, which will be needed for the proof of Theorem \ref{stab}.
\begin{lem}\label{orange}
Let $\caA$, $\caB_1$, $\caB_2$ be infinite-dimensional UHF algebras
and set $\mathfrak A:=\caA\otimes \caB_1\otimes\caB_2$.
Let $\caA_1$ be a unital $C^*$-subalgebra of $\caA$.
Let $\caH_0$, $\caK_1$, $\caK_2$ be separable infinite-dimensional Hilbert spaces.
Let $\pi_0$ a representation of $\caA$ on $\caH_0$,
and $\pi_1$, $\pi_2$ irreducible representations of $\caB_1$, $\caB_2$
on $\caK_1$, $\caK_2$ respectively.
Set $\pi:=\pi_0\otimes\pi_1\otimes\pi_2$, the representation of $\mathfrak A$
on $\caH_0\otimes\caK_1\otimes\caK_2$.
Let $U$ be an element in $ \pi(\mathfrak A)''$ and $\rho$ a representation of $\mathfrak A$
on $\caH_0\otimes\caK_1\otimes\caK_2$,
and $z_\rho\in Z\lmk \pi_0(\caA)''\rmk\subset\caB(\caH_0)$
such that
\begin{align}
&\Ad U\circ\pi(A_1\otimes B_1)=\rho(A_1\otimes B_1),\quad A_1\in\caA_1,\quad  B_1\in\caB_1,\label{kiiro}\\
&U^*U\in \pi(\caA_1\otimes \caB_1)',\label{ao}\\
&\rho\lmk B_2\rmk=z_\rho\otimes\unit_{\caK_1}\otimes \pi_2(B_2),\quad
B_2\in \caB_2.\label{aka}
\end{align}
Then for any projection $p\in \pi(\caA\otimes\caB_1)''\cap\rho(\caA_1\otimes \caB_1)'$,
there exists a partial isometry $V\in  \pi(\caA\otimes\caB_1)''$
such that 
\begin{align}
&VV^*\le p,\quad q_0\otimes\unit_{\caK_1}\otimes\unit_{\caK_2}:=V^*V
\in \pi(\caA)''\cap\pi(\caA_1)',\label{mizuiro}\\
&V \pi(A_1\otimes B_1)=\rho(A_1\otimes B_1) V,\quad\text{for all}\quad A_1\in\caA_1,\quad B_1\in \caB_1,
\label{vprv}\\
&(p-VV^*) U(\unit-V^*V)=0,\label{kuro}\\
&p\cdot\lmk \lmk\unit-Z_{\pi_0(\caA)''}(q_0)\rmk z_\rho\otimes \unit_{\caK_1}\otimes \unit_{\caK_2}\rmk=0.
\label{kimidori}
\end{align}
\end{lem}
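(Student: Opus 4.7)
The plan is to obtain $V$ by a Zorn's-lemma maximality argument, exploiting the tensor-product decomposition of $\pi(\mathfrak A)''$ provided by the irreducibility of $\pi_1$ and $\pi_2$.

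First I would fix notation: set $M:=\pi(\caA\otimes\caB_1)''$ and $N:=\pi(\mathfrak A)''$, which by the irreducibility of $\pi_1,\pi_2$ factor as $M=\pi_0(\caA)''\bar\otimes\caB(\caK_1)\bar\otimes\bbC\unit$ and $N=M\bar\otimes\caB(\caK_2)$. Combining (\ref{kiiro}) with (\ref{aka}) gives $UU^*=\rho(\unit)=z_\rho\otimes\unit\otimes\unit$, so $U$ is a partial isometry in $N$ that intertwines $\pi$ with $\rho$ on $\caA_1\otimes\caB_1$. A key observation to record up front is that $\rho(A_1\otimes B_1)\in M$ for every $A_1\in\caA_1$ and $B_1\in\caB_1$: because $\rho(\caA_1\otimes\caB_1)$ commutes with $\rho(\caB_2)$ whose bicommutant (by irreducibility of $\pi_2$) equals $z_\rho\otimes\unit\otimes\caB(\caK_2)$, each $\rho(A_1\otimes B_1)$ acts as $\unit$ on the $\caK_2$ factor.

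Let $\mathcal S$ denote the set of partial isometries $V\in M$ satisfying (\ref{vprv}) and $VV^*\le p$, ordered by $V\preceq V'\iff V=V'V^*V$. It contains $0$ and every chain admits a $\sigma$-weak upper bound, so Zorn's lemma produces a maximal $V\in\mathcal S$. The intertwining forces $V^*V\in\pi(\caA_1\otimes\caB_1)'\cap M=(\pi_0(\caA_1)'\cap\pi_0(\caA)'')\otimes\bbC\otimes\bbC$, giving the structural form in (\ref{mizuiro}) automatically.

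The heart of the argument is to establish (\ref{kuro}) by contradiction. Assume $F:=(p-VV^*)U(\unit-V^*V)\ne 0$. Using that $p\in\rho(\caA_1\otimes\caB_1)'$, that $V^*V$ commutes with $\pi(\caA_1\otimes\caB_1)$ and $VV^*$ with $\rho(\caA_1\otimes\caB_1)$ (both from $V\pi=\rho V$), and that $U$ intertwines, I would verify $F\pi(A_1\otimes B_1)=\rho(A_1\otimes B_1)F$. Expanding $F=\sum_{k,l}F_{kl}\otimes\ketbra{e_k}{e_l}$ in an orthonormal basis $\{e_k\}$ of $\caK_2$ and using that both $\pi(A_1\otimes B_1)$ and $\rho(A_1\otimes B_1)$ lie in $M$, each nonzero slice $F_{kl}\in M$ still intertwines. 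Take the polar decomposition $F_{kl}=\tilde W|F_{kl}|$ inside $M$; since $|F_{kl}|^2=F_{kl}^*F_{kl}$ commutes with $\pi(\caA_1\otimes\caB_1)$, so do $|F_{kl}|$ and its support projection $\tilde W^*\tilde W$, yielding $\tilde W\pi(A_1\otimes B_1)=\rho(A_1\otimes B_1)\tilde W$. The outer projections in $F$ force $\tilde W\tilde W^*\le p-VV^*$ and $\tilde W^*\tilde W\le\unit-V^*V$, and then $V+\tilde W\in\mathcal S$ strictly extends $V$: the cross terms $V\tilde W^*$ and $\tilde W V^*$ vanish because of the orthogonalities $VV^*\perp\tilde W\tilde W^*$ and $V^*V\perp\tilde W^*\tilde W$, contradicting maximality. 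The main obstacle I anticipate is this slicing step; it depends essentially on $\rho(\caA_1\otimes\caB_1)\subset M$, which is where the irreducibility of $\pi_2$ enters substantively.

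Finally (\ref{kimidori}) would follow from (\ref{kuro}) by direct algebra. Set $y:=(\unit-Z_{\pi_0(\caA)''}(q_0))z_\rho$ and $p_1:=p(y\otimes\unit\otimes\unit)$. Since $VV^*$ has the same central support in $M$ as $V^*V=q_0\otimes\unit\otimes\unit$, namely $Z_{\pi_0(\caA)''}(q_0)\otimes\unit\otimes\unit$, and since $y\cdot Z_{\pi_0(\caA)''}(q_0)=0$, one obtains $p_1V^*V=0=p_1VV^*$, so $p_1\le p-VV^*$. Left-multiplying (\ref{kuro}) by $p_1$ gives $p_1U(\unit-V^*V)=0$, hence $p_1U=p_1U(q_0\otimes\unit\otimes\unit)$. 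Right-multiplying by the central projection $(\unit-Z_{\pi_0(\caA)''}(q_0))\otimes\unit\otimes\unit$, which commutes with $U$ and satisfies $q_0(\unit-Z_{\pi_0(\caA)''}(q_0))=0$ and $y(\unit-Z_{\pi_0(\caA)''}(q_0))=y$, forces $p_1U=0$; combined with $p_1\le UU^*$, this yields $p_1=p_1UU^*=0$, which is (\ref{kimidori}).
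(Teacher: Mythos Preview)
Your proposal is correct and follows essentially the same strategy as the paper's proof: a Zorn's-lemma maximality argument for the intertwining partial isometry, a slice/conditional-expectation step to contradict maximality and obtain (\ref{kuro}), and straightforward algebra with central carriers to deduce (\ref{kimidori}). The only differences are cosmetic: the paper orders families of mutually orthogonal partial isometries by inclusion rather than single partial isometries by extension, uses an abstract normal conditional expectation $\bbE^{(x)}:N\to M$ rather than your explicit matrix-element slicing in an ONB of $\caK_2$, and organizes the computation for (\ref{kimidori}) slightly differently (first killing $(p-VV^*)U$ and $VV^*U$ against $\unit-Z(q_0)$ separately, then right-multiplying by $U^*$).
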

\begin{proof}
The argument is standard (see \cite{d1983interpolation}).
For the reader's convenience, we provide proof here.
Let us consider the system $\caS$ of sets 
of nonzero partial isometries $\{v_\lambda\}_{\lambda\in\ld}$
in $\pi(\caA\otimes\caB_1)''$
satisfying the following conditions:
\begin{description}
\item[(1)]
\begin{align}
\begin{split}
v_\lambda\pi(A_1\otimes B_1)=\rho(A_1\otimes B_1)v_\lambda,\quad \text{for all}\quad
A_1\in\caA_1,\quad
B_1\in \caB_1,\quad \lambda\in\ld.
\end{split}
\end{align}
\item[(2)] projections
$\{v_\lambda^*v_\lambda\}_{\lambda\in\ld}$ are mutually orthogonal,
\item[(3)]
projections
$\{v_\lambda v_\lambda^*\}_{\lambda\in\ld}$ are mutually orthogonal,
\item[(4)]
$v_\lambda v_\lambda^*\le p$
for any $\lambda\in\ld$. 
\end{description}
Note that $\caS$ is inductively ordered with respect to the inclusion order.
By Zorn's Lemma, there exists a maximal element
of $\caS$.
We fix a maximal $S=\{v_\lambda\}_{\lambda\in\Lambda}\in \caS$, and set
\begin{align}\label{toumei}
\begin{split}
V:=\sum_{\lambda\in\ld}v_\lambda
\in \pi(\caA\otimes\caB_1)''
\end{split}
\end{align}
which converges in the strong$*$-topology because of the condition (2), (3) above.
This $V$ satisfies the required properties.
We now check these one by one.

By definition, $V$ is a partial isometry in $\pi(\caA\otimes\caB_1)''$
satisfying $VV^*\le p$ and
\begin{align}\label{mochi}
\begin{split}
V\pi(A_1\otimes B_1) =\sum_{\lambda\in\ld}v_\lambda\pi(A_1\otimes B_1)
=\sum_{\lambda\in\ld}\rho(A_1\otimes B_1)v_\lambda
=\rho(A_1\otimes B_1) V,\quad A_1\in\caA_1,\quad B_1\in\caB_1.
\end{split}
\end{align}
Because of this property, we have
\begin{align}
\begin{split}
V^* V\in& \pi(\caA_1\otimes \caB_1)'\cap \pi(\caA\otimes\caB_1)''
=\lmk \pi_0(\caA_1)''\bar \otimes \pi_1(\caB_1)''\bar\otimes \bbC\unit_{\caK_2}\rmk'
\cap \lmk
\pi_0(\caA)''\bar\otimes \pi_1(\caB_1)''\bar\otimes \bbC\unit_{\caK_2}\rmk\\
&=\lmk \pi_0(\caA)''\cap\pi_0(\caA_1)'\rmk\bar\otimes \bbC\unit_{\caK_1}\bar\otimes \bbC\unit_{\caK_2}
=\pi(\caA)''\cap \pi(\caA_1)'.
\end{split}
\end{align}
Here, we used the irreducibility of $\pi_1$.
Hence we have proven  (\ref{mizuiro}) and (\ref{vprv}).

Next we show $p_1 U q=0$, corresponding to (\ref{kuro}),
where
\begin{align}
&p_1:=p-VV^*=p-\sum_\lambda v_\lambda v_\lambda^*,\\
&q:=\unit-V^*V
=\unit-\sum_\lambda v_\lambda^* v_\lambda,
\end{align}
are projections.
Because of (\ref{mochi}) and $p\in\rho(\caA_1\otimes \caB_1)'\cap \pi(\caA\otimes \caB_1)''$, we have $p_1\in \rho(\caA_1\otimes\caB_1)'\cap \pi(\caA\otimes \caB_1)''$ 
and $q\in\pi(\caA_1\otimes\caB_1)'\cap \pi(\caA\otimes \caB_1)''$.

We claim $\rho(\caA_1\otimes\caB_1)\subset \pi(\caA\otimes\caB_1)''$.
In fact, because of $U\in\pi(\mathfrak A)''$ and (\ref{kiiro}), (\ref{aka}), we have
\begin{align}\label{shiro}
\begin{split}
\rho(\caA_1\otimes\caB_1)&\subset \rho(\caB_2)'\cap 
\lmk \pi(\mathfrak A)''\lmk z_\rho\otimes \unit_{\caK_1}\otimes \unit_{\caK_2}\rmk\rmk\\
&=\lmk z_\rho\otimes\unit_{\caK_1}\otimes \pi_2(\caB_2)\rmk'\cap 
\lmk \pi(\mathfrak A)''\lmk z_\rho\otimes \unit_{\caK_1}\otimes \unit_{\caK_2}\rmk\rmk\\
&=
\lmk \pi_0(\caA)'' z_\rho
\bar\otimes \pi_1(\caB_1)''
\bar \otimes \bbC\unit_{\caK_2}\rmk
\subset \pi(\caA\otimes\caB_1)''.
\end{split}
\end{align}
Here we used the irreducibility of $\pi_2$.

Note that for any
nonzero $x\in \pi(\mathfrak A)''$, there exists
a $\sigma$-weak continuous projection of norm one
$\bbE^{(x)}$
from
$\pi(\mathfrak A)''$ onto $\pi(\caA\otimes \caB_1)''$
such that $\bbE^{(x)}(x)\neq 0$.
This is true because
\begin{align}
\begin{split}
&\pi(\mathfrak A)''=\pi_0(\caA)''\bar\otimes \pi_1(\caB_1)''\bar\otimes\pi_2(\caB_2)''
=\pi_0(\caA)''\bar\otimes \pi_1(\caB_1)''\otimes \caB(\caK_2),\\
&\pi(\caA\otimes\caB_1)''
=\pi_0(\caA)''\bar\otimes \pi_1(\caB_1)''\bar \otimes\bbC\unit_{\caK_2}.
\end{split}
\end{align}

Now assume that $x:=p_1 U q\neq 0$. We derive a contradiction out of this, proving (\ref{kuro}).
Because of $q\in \pi(\caA_1\otimes \caB_1)'$, $p_1\in \rho(\caA_1\otimes \caB_1)'$, and (\ref{kiiro})
for any $A_1\in\caA_1$,
$B_1\in\caB_1$, we have
\begin{align}\label{haiiro}
\begin{split}
&x\pi(A_1\otimes B_1)=p_1 U q \pi(A_1\otimes B_1)
=p_1 U\pi(A_1\otimes B_1) q=p_1 \rho(A_1\otimes B_1) U q\\
&=\rho(A_1\otimes B_1) p_1U q=\rho(A_1\otimes B_1) x.
\end{split}
\end{align}
Note that $x\in\pi(\mathfrak A)''$.
Because we assumed $x\neq 0$, there 
exists
a $\sigma$-weak continuous projection of norm one
$\bbE^{(x)}$
from
$\pi(\mathfrak A)''$ onto $\pi(\caA\otimes \caB_1)''$
such that $\pi(\caA\otimes \caB_1)''\ni t:=\bbE^{(x)}(x)\neq 0$.
Using the property of projection of norm one \cite{takesaki},
(\ref{shiro}), and (\ref{haiiro}),
for any $A_1\in\caA_1$, $B_1\in \caB_1$, 
we obtain 
\begin{align}
\begin{split}
&t \pi(A_1\otimes B_1)
=\bbE^{(x)}(x)\pi(A_1\otimes B_1) =\bbE^{(x)}\lmk x \pi(A_1\otimes B_1)\rmk
=\bbE^{(x)}\lmk \rho(A_1\otimes B_1) x \rmk\\
&= \rho(A_1\otimes B_1) \bbE^{(x)}\lmk x \rmk
=\rho(A_1\otimes B_1) t.
\end{split}
\end{align}
Taking the polar decomposition 
$t=v|t|$, we obtain a nonzero partial isometry
$v\in \pi(\caA\otimes\caB_1)''$
satisfying
\begin{align}
\begin{split}
v\pi(A_1\otimes B_1)=\rho(A_1\otimes B_1) v,\quad\text{for all}\quad
A_1\in\caA_1,\quad  B_1\in\caB_1.
\end{split}
\end{align}
By the definition of $t$, we have
\begin{align}
\begin{split}
t=v|t|=\bbE^{(x)}(x)=\bbE^{(x)}\lmk p_1 U q\rmk
=p_1 \bbE^{(x)}(x) q.
\end{split}
\end{align}
Here, we used $p,q\in \pi(\caA\otimes\caB_1)''$.
Therefore, we have
\begin{align}
vv^*\le p_1=p-\sum_{\lambda} v_\lambda v_\lambda^*\le p,\quad
v^* v\le q=\unit-\sum_{\lambda}  v_\lambda^* v_\lambda.
\end{align}
From this, we see that 
$\{v_\lambda\}_\lambda\cup\{v \}\in\caS$ and it is strictly larger than
$S=\{v_\lambda\}_\lambda$. 
This contradicts the maximality of $S$.
Hence, we conclude $p_1 U q=0$, proving (\ref{kuro}).

The last property (\ref{kimidori}) follows from  (\ref{kuro}).
To see this, first note that
\begin{align}\label{midori}
\begin{split}
Z_{\pi_0(\caA)''}(q_0)\otimes\unit_{\caK_1}\otimes \unit_{\caK_2}
\ge q_0 \otimes\unit_{\caK_1}\otimes \unit_{\caK_2}=V^*V
\end{split}
\end{align}
for the projection $q_0$ given in (\ref{mizuiro}).
From this and (\ref{kuro}), we obtain
\begin{align}
\begin{split}
&\lmk p-VV^*\rmk U \lmk\unit-Z_{\pi_0(\caA)''}(q_0) \otimes\unit_{\caK_1}\otimes \unit_{\caK_2}
\rmk\\
&=
\lmk p-VV^*\rmk U 
\lmk \unit- V^*V\rmk
\lmk\unit-Z_{\pi_0(\caA)''}(q_0)\otimes\unit_{\caK_1}\otimes \unit_{\caK_2}
\rmk=0.
\end{split}
\end{align}
We also have
\begin{align}
\begin{split}
VV^* U \lmk \unit-Z_{\pi_0(\caA)''}(q_0)\otimes\unit_{\caK_1}\otimes \unit_{\caK_2}
\rmk
=V\lmk \unit-Z_{\pi_0(\caA)''}(q_0)\otimes\unit_{\caK_1}\otimes \unit_{\caK_2}\rmk
V^* U 
=0
\end{split}
\end{align}
because $V^* U$ belongs to $\pi(\mathfrak A)''$,
and by (\ref{midori}).
From these equations, we obtain
\begin{align}
\begin{split}
 p\lmk\unit-Z_{\pi_0(\caA)''}(q_0) \otimes\unit_{\caK_1}\otimes \unit_{\caK_2}
\rmk U
=0.
\end{split}
\end{align}
Multiplying by $U^*$ from the right of this, we obtain
\begin{align}
\begin{split}
 p\lmk\lmk \unit-Z_{\pi_0(\caA)''}(q_0)\rmk z_\rho \otimes\unit_{\caK_1}\otimes \unit_{\caK_2}
\rmk 
=0,
\end{split}
\end{align}
because $UU^*=\rho(\unit)=z_\rho\otimes\unit_{\caK_1}\otimes\unit_{\caK_2}$.
\end{proof}
We use this Lemma to prove the following.
\begin{lem}\label{yamabuki}
Consider the setting of Lemma \ref{orange}.
Then there exists a sequence of 
partial isometries
$\{ V_n\}_{n=0}^N$ (with $N$ finite or infinite)
in $\pi(\caA\otimes\caB_1)''$
such that
\begin{align}
&V_n\pi(A_1\otimes B_1)=\rho(A_1\otimes B_1) V_n,\quad \text{for all}\quad
A_1\in\caA_1 \quad B_1\in \caB_1,\quad
\text{and}\quad n=0,\ldots,N\label{daidai}\\
&\sum_{n=0}^N V_n V_n^*=
z_\rho\otimes \unit_{\caK_1}\otimes \unit_{\caK_2},\label{pink}\\
&Z_{\pi(\caA)''}(V_0^*V_0)
=z_\rho\otimes\unit_{\caK_1}\otimes \unit_{\caK_2}\label{murasaki}.
\end{align}
\end{lem}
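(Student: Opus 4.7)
The plan is to apply Lemma \ref{orange} twice. First I would use it with the projection $p_0 := \rho(\unit)$ to extract a distinguished partial isometry $V_0$ already realizing the central-support identity (\ref{murasaki}). Then, invoking Zorn's lemma, I would enlarge $\{V_0\}$ to a maximal family of partial isometries satisfying (\ref{daidai}) with pairwise orthogonal range projections; a second application of Lemma \ref{orange} would force the complement $p_0 - \sum V_\alpha V_\alpha^*$ to vanish, yielding (\ref{pink}).

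For the first step, I would note that $p_0 = z_\rho \otimes \unit_{\caK_1} \otimes \unit_{\caK_2}$ is central in $\pi(\mathfrak A)''$, because the irreducibility of $\pi_1, \pi_2$ shrinks the center to $Z(\pi_0(\caA)'') \otimes \bbC\unit_{\caK_1} \otimes \bbC\unit_{\caK_2}$. In particular $p_0 \in \pi(\caA \otimes \caB_1)'' \cap \rho(\caA_1 \otimes \caB_1)'$, so Lemma \ref{orange} applies and returns $V_0 \in \pi(\caA \otimes \caB_1)''$ with $V_0^* V_0 = q_0 \otimes \unit_{\caK_1} \otimes \unit_{\caK_2}$, together with conclusions (\ref{vprv})--(\ref{kimidori}). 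Unpacking (\ref{kimidori}) with $p = p_0$ gives $z_\rho \le Z_{\pi_0(\caA)''}(q_0)$, while centrality of $p_0$ combined with $V_0 V_0^* \le p_0$ forces $p_0 V_0 = V_0$, hence $V_0^* V_0 \le p_0$ and thus $Z_{\pi_0(\caA)''}(q_0) \le z_\rho$; together these give (\ref{murasaki}).

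For the Zorn step, I would order by inclusion the families of partial isometries in $\pi(\caA \otimes \caB_1)''$ containing $V_0$, satisfying (\ref{daidai}), with mutually orthogonal range projections dominated by $p_0$. Zorn's lemma yields a maximal family $\{V_\alpha\}_{\alpha \in A}$; setting $p^* := p_0 - \sum_\alpha V_\alpha V_\alpha^*$, I would apply Lemma \ref{orange} once more to $p^*$ (which lies in $\pi(\caA \otimes \caB_1)'' \cap \rho(\caA_1 \otimes \caB_1)'$ because each $V_\alpha V_\alpha^*$ commutes with $\rho(\caA_1 \otimes \caB_1)$ by the intertwining property) to obtain a new partial isometry $V$. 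Either $V \neq 0$, contradicting maximality, or $V = 0$, in which case (\ref{kuro}) forces $p^* U = 0$; since $UU^* = \rho(\unit) = p_0 \ge p^*$, this gives $p^* = p^* p_0 = (p^* U) U^* = 0$. Either way $p^* = 0$, establishing (\ref{pink}). Separability of $\caH_0 \otimes \caK_1 \otimes \caK_2$ then forces $A$ to be at most countable, so it can be enumerated with $V_0$ at index $0$.

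The main obstacle is the central-support identity in the first step. The two inequalities $z_\rho \le Z_{\pi_0(\caA)''}(q_0)$ and $Z_{\pi_0(\caA)''}(q_0) \le z_\rho$ come from genuinely different considerations: the first from the subtle concluding bullet (\ref{kimidori}) of Lemma \ref{orange}, and the second from the more incidental fact that $\rho(\unit)$ happens to be central in $\pi(\mathfrak A)''$ by irreducibility of $\pi_1, \pi_2$. Once (\ref{murasaki}) is secured, the Zorn machinery closes cleanly because $UU^* = \rho(\unit)$ rules out any nontrivial residual projection.
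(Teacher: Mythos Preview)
Your proposal is correct and follows essentially the same route as the paper: first apply Lemma \ref{orange} with $p = z_\rho\otimes\unit_{\caK_1}\otimes\unit_{\caK_2}$ to extract $V_0$ and verify (\ref{murasaki}) via the two opposite inequalities, then run a Zorn argument and kill the residual projection with a second application of Lemma \ref{orange}. The only cosmetic differences are that the paper builds the Zorn family from partial isometries with ranges orthogonal to $V_0V_0^*$ (rather than families containing $V_0$), and in the termination step the paper invokes (\ref{kimidori}) where you invoke (\ref{kuro}) directly; since (\ref{kimidori}) is derived from (\ref{kuro}), these are equivalent.
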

\begin{proof}
Applying Lemma \ref{orange} to 
$p=z_\rho\otimes \unit_{\caK_1}\otimes \unit_{\caK_2}
\in \pi\lmk\caA\otimes\caB_1\rmk''\cap\rho\lmk\caA_1\otimes \caB_1\rmk'$,
we obtain a partial isometry
$V_0\in \pi(\caA\otimes\caB_1)''$ such that
\begin{align}\label{chairou}
\begin{split}
&V_0V_0^*\le z_\rho\otimes \unit_{\caK_1}\otimes \unit_{\caK_2},\quad
V_0^* V_0\in \lmk \pi_0(\caA)''\cap\pi_0(\caA_1)'\rmk\bar\otimes\bbC\unit_{\caK_1}\bar \otimes\bbC\unit_{\caK_2}\\
&V_0 \pi(A_1\otimes B_1)=\rho(A_1\otimes B_1) V_0,\quad\text{for all}\quad A_1\in\caA_1,\quad  B_1\in \caB_1,
\\
&\lmk z_\rho\otimes \unit_{\caK_1}\otimes \unit_{\caK_2}
\rmk
\cdot \lmk\unit-Z_{\pi(\caA)''}(V_0^*V_0)\rmk
=0.
\end{split}
\end{align}
We fix such $V_0$.
The last equality in (\ref{chairou}) implies
\begin{align}
 z_\rho\otimes \unit_{\caK_1}\otimes \unit_{\caK_2}
\le Z_{\pi(\caA)''}(V_0^*V_0),
\end{align}
while the first inequality in (\ref{chairou}) and the fact that $V_0$ and 
$z_\rho\otimes \unit_{\caK_1}\otimes\unit_{\caK_2}$ commute imply the opposite inequality.
Hence we obtain
\begin{align}
 z_\rho\otimes \unit_{\caK_1}\otimes \unit_{\caK_2}
=Z_{\pi(\caA)''}(V_0^*V_0),
\end{align}

Now let us consider the system $\caS$ of sets
of nonzero partial isometries $\{V_\lambda\}_{\lambda\in\ld}$
in $\pi(\caA\otimes\caB_1)''$
satisfying the following conditions:
\begin{description}
\item[(1)]
\begin{align}
\begin{split}
V_\lambda\pi(A_1\otimes B_1)=\rho(A_1\otimes B_1)V_\lambda,\quad \text{for all}\quad
A_1\in\caA_1,\quad
B_1\in \caB_1,
\end{split}
\end{align}
\item[(2)] projections
$\{V_\lambda V_\lambda^*\}_{\lambda\in\ld}$ are mutually orthogonal,
\item[(3)]
$V_\lambda V_\lambda^*\le z_\rho\otimes \unit_{\caK_1}\otimes \unit_{\caK_2}$
for any $\lambda\in\ld$,
\item[(4)]for any $\lambda\in\ld$,
$V_\lambda V_\lambda^*$ and $V_0V_0^*$
are orthogonal.
\end{description}
Note that $\caS$ is inductively ordered with respect to the inclusion order.
By Zorn's Lemma, there exists a maximal element
$S:=\{V_\lambda\}_{\lambda\in\ld}$ of $\caS$.
We fix such $S$.
Because $\caH\otimes\caK_1\otimes\caK_2$ is separable, from (2),
$S$ is at most countable, and we may write it
as $S:=\{V_n\}_{n=1}^N$ with $N\in\bbN$ of infinite.

We claim $\{V_n\}_{n=0}^N$ satisfies the required condition.
Note that $\{V_n\}_{n=0}^N$ satisfies conditions (\ref{daidai}) and (\ref{murasaki}).
It remains to show that it also satisfies (\ref{pink}).

By the property (1) to (4), and (\ref{chairou})
\begin{align}
\bar P:=V_0V_0^*+\sum_{n=1}^N V_n V_n^*
\end{align}
converges strongly and defines a projection in $\pi(\caA\otimes\caB_1)''\cap\rho(\caA_1\otimes \caB_1)'$,
satisfying $\bar P \le z_\rho\otimes \unit_{\caK_1}\otimes \unit_{\caK_2}$.
We would like to show that $P:= z_\rho\otimes \unit_{\caK_1}\otimes \unit_{\caK_2}-\bar P=0$
by contradiction.
Suppose that $P\neq 0$. Note from the definition that
$P$ belongs to $\pi(\caA\otimes\caB_1)''\cap\rho(\caA_1\otimes \caB_1)'$.
Applying Lemma \ref{orange} with $p$ replaced by $P$,
we obtain a partial isometry $V\in \pi(\caA\otimes\caB_1)''$
such that
\begin{align}
&VV^*\le P,\\
&V\pi(A_1\otimes B_1)=\rho(A_1\otimes B_1)V,\quad A_1\in\caA_1,\quad  B_1\in\caB_1,\\
&P\lmk\unit-Z_{\pi(\caA)''}(V^*V)\rmk=
P\lmk\unit-Z_{\pi(\caA)''}(V^*V)\rmk\lmk z_\rho\otimes\unit_{\caK_1}\otimes\unit_{\caK_2}\rmk=0.
\end{align}
 This $V$ is non-zero because if $V=0$,
 then the last equation implies $P=0$, which contradicts our assumption.
 Because of the first and second properties above, we see that
 $\{V\}\cup\{V_n\}_{n=1}^N$ is an element of $\caS$ which strictly majorizes
 $S=\{ V_n\}_{n=1}^N$. This contradicts the maximality of $S$.
 Hence we conclude that $P=0$.
\end{proof}
\begin{lem}\label{aiiro}
Consider the setting of Lemma \ref{orange} and assume that
$\caA_1$ satisfies \begin{align}\label{zin}
Z\lmk\pi_0(\caA)''\rmk=Z\lmk\pi_0(\caA_1)'\cap \pi_0(\caA)''\rmk.
\end{align}
Assume
further that
$\caB_1$ is of the form 
\begin{align}\label{hasibami}
\caB_1=\caB_1^{(1)}\otimes \caB_1^{(2)}
\end{align}
with $\caB_1^{(1)}$, $\caB_1^{(2)}$ infinite-dimensional UHF algebras,
$\caK_1$, $\pi_1$ of the form
\begin{align}\label{uguisu}
\begin{split}
\caK_1=\caK_1^{(1)}\otimes \caK_1^{(2)},\quad
\pi_1=\pi_1^{(1)}\otimes \pi_1^{(2)},
\end{split}
\end{align}
with $\pi_1^{(1)}$, $\pi_1^{(2)}$ irreducible representations of
$\caB_1^{(1)}$, $\caB_1^{(2)}$ on $\caK_1^{(1)}$, $\caK_1^{(2)}$
respectively.
Let $\{V_n\}_{n=0}^N$ (with $N\in\bbN$ or infinite)
as in Lemma \ref{yamabuki} and set
$p_n:=V_n V_n^*$, $q_n:=V_n^* V_n$.
Then, the following hold.
\begin{description}
\item[(i)]
There exists a sequence of mutually orthogonal projections
$\{r_n\}_{n=0}^N$ in $\pi(\caB_1^{(1)})''$ such that
$\sum_{n=0}^N r_n=\unit_{\caH\otimes\caK_1\otimes \caK_2}$
and each $r_n$ equivalent to $\unit$
 in $\pi(\caB_1^{(1)})''$.
\item[(ii)]
Operators $q_0 r_n$, $n=0,\ldots,N$ are mutually orthogonal projections
in $\pi\lmk\caA\otimes \caB_1^{(1)}\rmk''\cap \pi(\caA_1)'$.
\item[(iii)]
There exists a sequence $\{w_n\}_{n=1}^N$ of partial isometries in
$\pi\lmk\caA\otimes \caB_1^{(1)}\rmk''\cap \pi(\caA_1)'$ such that
\begin{align}\label{gin}
\begin{split}
w_nw_n^*=q_n,\quad \text{and}\quad w_n^* w_n\le q_0 r_0.
\end{split}
\end{align}
\item[(iv)]
There exists a sequence $\{W_n\}_{n=0}^N$
 of partial isometries in $\pi\lmk\caA\otimes \caB_1^{(1)}\rmk''\cap \pi(\caA_1)'$
 such that 
 \begin{align}\label{gold}
 \begin{split}
 &W_n^* W_n=r_n\lmk z_\rho\otimes\unit_{\caK_1}\otimes \unit_{\caK_2}\rmk,\\
 &W_nW_n^*
 =\left\{
 \begin{gathered}
 w_n^* w_n+q_0r_n,\quad n\ge 1,\\
 q_0r_0,\quad n=0
 \end{gathered}
 \right..
 \end{split}
 \end{align}
 \item[(v)]
 The sum
 \begin{align}\label{dou}
 W:=\sum_{n=0}^N V_0 q_0 r_n W_n
 +\sum_{n=1}^N V_n w_n W_n\in \pi\lmk\caA\otimes\caB_1\rmk''
 \end{align}
 converges in the strong$*$-topology and satisfies
 \begin{align}
& W^*W=WW^*=z_\rho\otimes \unit_{\caK_1}\otimes \unit_{\caK_2},\label{asagi}\\
 &W\pi(A_1\otimes B) =\rho(A_1\otimes B) W,\quad \text{for all}\quad A_1\in \caA_1,\quad B\in \caB_1^{(2)}\otimes\caB_2
\label{momoiro}
 \end{align}
 
\end{description}
\end{lem}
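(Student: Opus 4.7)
The plan is to build $W$ by gluing the partial isometries $V_n$ supplied by Lemma \ref{yamabuki} via auxiliary operators drawn from the von Neumann algebra $\caM := \pi(\caA\otimes\caB_1^{(1)})''\cap\pi(\caA_1)'$. Because $\pi_1^{(1)}$ is irreducible, this algebra decomposes as $\lmk \pi_0(\caA)''\cap\pi_0(\caA_1)'\rmk\bar\otimes\caB(\caK_1^{(1)})\bar\otimes\bbC\unit\bar\otimes\bbC\unit$, which is properly infinite; hypothesis (\ref{zin}) gives $Z(\caM) = Z(\pi_0(\caA)'')\otimes\bbC\unit\otimes\bbC\unit\otimes\bbC\unit$, and combined with (\ref{murasaki}) this yields $Z_\caM(q_0) = z_\rho$. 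I will also use that each $q_n = V_n^* V_n$ sits in $\pi(\caA)''\cap\pi(\caA_1)'$ and satisfies $q_n\le z_\rho\otimes\unit\otimes\unit$, since $p_n\le z_\rho\otimes\unit\otimes\unit$ is central.

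Parts (i) and (ii) are immediate: $\pi(\caB_1^{(1)})''$ is a type $\mathrm{I}_\infty$ factor, so $\unit$ admits the required partition into infinite equivalent projections; and $q_0$ commutes with each $r_n$ (they lie in independent tensor factors of $\caM$), so the $q_0 r_n$ are mutually orthogonal projections in $\caM$. For (iii) and (iv) I would invoke the comparison theorem (Corollary 6.3.5 of \cite{KR}) in the properly infinite $\caM$: in (iii), $Z_\caM(q_n)\le z_\rho = Z_\caM(q_0 r_0)$ and $q_0 r_0 \sim q_0 \otimes \unit_{\caK_1^{(1)}}$ is properly infinite, giving $q_n\preceq q_0 r_0$ and hence $w_n$; in (iv), both $r_n(z_\rho\otimes\unit)$ and $w_n^*w_n + q_0 r_n$ (resp.\ $q_0 r_0$ when $n=0$) are properly infinite projections in $\caM$ with central carrier $z_\rho$, hence equivalent.

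For (v) I would set $T_0 := V_0 q_0 r_0 W_0$ and $T_n := (V_0 q_0 r_n + V_n w_n)W_n$ for $n\ge 1$, so that $W = \sum_n T_n$. Two orthogonality inputs are crucial: first, $V_i^* V_j = 0$ for $i\ne j$, coming from $V_i^* = V_i^* p_i$ together with $p_i p_j = 0$; second, $W_n W_m^* = 0$ for $n\ne m$, immediate from the orthogonality of the initial projections $W_n^* W_n = r_n(z_\rho\otimes\unit)$. Together these force $T_n T_m^* = T_n^* T_m = 0$ for $n\ne m$, so $W$ converges strongly in $\pi(\caA\otimes\caB_1)''$. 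A direct computation, using $V_0 q_0 = V_0$, $V_n q_n = V_n$, $r_n w_n^* = 0$ for $n\ge 1$, and the formulas for $W_n W_n^*$, gives $\sum_n T_n^* T_n = \sum_n r_n(z_\rho\otimes\unit) = z_\rho\otimes\unit$ and $\sum_n T_n T_n^* = V_0 V_0^* + \sum_{n\ge 1} p_n = z_\rho\otimes\unit$ by (\ref{pink}), which establishes (\ref{asagi}). For (\ref{momoiro}), the intertwining on $\caA_1$ and on $\caB_1^{(2)}$ reduces to the intertwining property of the $V_n$ on $\caA_1\otimes\caB_1$ together with $q_0, r_n, w_n, W_n\in\caM\subset\pi(\caA_1)'\cap\pi(\caB_1^{(2)})'$, while on $\caB_2$ it follows from $(z_\rho\otimes\unit\otimes\unit)W = (WW^*)W = W$ combined with (\ref{aka}).

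The main obstacle is the algebraic bookkeeping in (v): the cross-term cancellations that turn $W$ into a partial isometry rest on the non-obvious identities $V_i^* V_j = 0$ for $i\ne j$, which are not listed among the properties of the $V_n$ in Lemma \ref{yamabuki} but can be extracted from the partial-isometry relations and the orthogonality of the $p_n$. Once these are in hand, parts (i)--(iv) are routine applications of comparison theory in the properly infinite algebra $\caM$, and the remainder of (v) is a bounded computation.
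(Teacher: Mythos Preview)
Your proposal is correct and follows essentially the same route as the paper: both work in $\caM=\pi(\caA\otimes\caB_1^{(1)})''\cap\pi(\caA_1)'$, use proper infiniteness together with central-carrier comparison (the paper cites Theorem~6.3.4 of \cite{KR} for (iii) rather than Corollary~6.3.5, but the content is the same), and verify (v) by checking that the summands have mutually orthogonal initial and final projections. Your regrouping of the sum into blocks $T_n=(V_0q_0r_n+V_nw_n)W_n$ is cosmetic; the cross-term cancellations rest on exactly the identity $V_i^*V_j=0$ for $i\ne j$ (from $p_ip_j=0$) that the paper uses implicitly and that you correctly isolate as the crux of the bookkeeping.
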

\begin{proof}
Set
\begin{align}
\caM:=\lmk \pi_0(\caA)''\cap\pi_0(\caA_1)'\rmk \bar\otimes 
\pi_1^{(1)}(\caB_1^{(1)})''\bar\otimes \bbC\unit_{\caK_1^{(2)}}\bar\otimes \bbC\unit_{\caK_2}
=\pi\lmk \caA\otimes \caB_1^{(1)}\rmk''\cap\pi(\caA_1)'.
\end{align}
Note that 
\begin{align}\label{india}
 \begin{split}
  &Z\lmk\caM \rmk= Z\lmk \pi_0(\caA)''\cap\pi_0(\caA_1)'\rmk \bar\otimes \bbC\unit_{\caK_1^{(1)}}
 \bar\otimes\bbC\unit_{\caK_1^{(2)}\otimes \caK_2}
 =Z\lmk \pi_0(\caA)''\rmk \bar\otimes \bbC\unit_{\caK_1}\bar\otimes \bbC\unit_{\caK_2}
 =Z\lmk\pi(\caA)''\rmk,
   \end{split}
 \end{align}
by the assumption (\ref{zin}) and the irreducibility of $\pi_1^{(1)}$.
Note also  
\begin{align}
\begin{split}
\pi(\caB_1^{(1)})''=\bbC\unit_{\caH_0}\bar\otimes \caB(\caK_1^{(1)})\bar\otimes\bbC \unit_{\caK_1^{(2)}}\bar\otimes \bbC\unit_{\caK_2}\subset \caM.
\end{split}
\end{align}
From this, for any pair of projections $r,\tilde r\in \pi(\caB_1^{(1)})''$ that are equivalent in $ \pi(\caB_1^{(1)})''$ and any
projection 
$$q\in \lmk \pi_0(\caA)''\cap\pi_0(\caA_1)'\rmk \bar\otimes \bbC\unit_{\caK_1}\bar\otimes\unit_{\caK_2},$$
$qr$ and $q\tilde r$ are projections in $\caM$ that are equivalent in $\caM$.
In particular, from (\ref{india}),
 if $r\in \pi(\caB_1^{(1)})''$ is an infinite projection in $\pi(\caB_1^{(1)})''$
and $q\in \lmk \pi_0(\caA)''\cap\pi_0(\caA_1)'\rmk \bar\otimes \bbC\unit_{\caK_1}\bar\otimes\unit_{\caK_2}$
is a nonzero projection, then $qr$ is properly infinite in  $\caM$.

We have $p_n\in \rho(\caA_1\otimes \caB_1)'\cap \pi(\caA\otimes\caB_1)''$ and 
$$q_n\in \pi(\caA\otimes\caB_1)''\cap \pi(\caA_1\otimes \caB_1)'
=\lmk \pi_0(\caA)''\cap\pi_0(\caA_1)'\rmk\bar\otimes \bbC\unit_{\caK_1}\bar\otimes \bbC\unit_{\caK_2}\subset \caM$$
 by (\ref{daidai}) and the irreducibility of $\pi_1$.

(i) is trivial because $\pi(\caB_1^{(1)})''$ is a type $I_\infty$ factor due to the irreducibility of
$\pi_1^{(1)}$.
Note that $r_n$ is infinite in 
$\pi(\caB_1^{(1)})''$.
(ii) is trivial based on the above observation.\\
(iii)
Because $r_n$ is infinite in $\pi(\caB_1^{(1)})''$ and $q_0\in \lmk \pi_0(\caA)''\cap\pi_0(\caA_1)'\rmk\bar\otimes \bbC\unit_{\caK_1}\bar\otimes \bbC\unit_{\caK_2}$ is a nonzero projection,
from the above observation, 
$q_0 r_n$ is properly infinite in $\caM$, $n=0,\ldots,N$.
Because $r_n$ and $\unit$ are equivalent in $\pi(\caB_1^{(1)})''$,
$q_0r_n$ and $q_0$ are equivalent in $\caM$.
Therefore, by Proposition 6.2.8 of \cite{KR}, we get
\begin{align}
\begin{split}
Z_{\caM}\lmk q_0 r_n\rmk
=Z_{\caM}\lmk q_0 \rmk=Z_{\pi(\caA)''}(q_0)
=z_\rho\otimes\unit_{\caK_1}\otimes \unit_{\caK_2},
\end{split}
\end{align}
for any $n=0,\ldots, N$ by (\ref{india}) and (\ref{murasaki}).
Because of this and $q_n=V_n^* V_n$ and $p_n=V_nV_n^*\le z_\rho\otimes\unit_{\caK_1}\otimes \unit_{\caK_2}$,
we have $q_n\le z_\rho\otimes\unit_{\caK_1}\otimes \unit_{\caK_2}$
and
\begin{align}\label{oudo}
Z_{\caM}\lmk q_n\rmk
\le z_\rho\otimes\unit_{\caK_1}\otimes \unit_{\caK_2}
=Z_{\caM}\lmk q_0 r_0\rmk.
\end{align}
Hence, $q_0 r_0$ is a properly infinite projection
in $\caM$ and $q_n$ is a projection in $\caM$
with the central carriers satisfying (\ref{oudo}).
Therefore, by Theorem 6.3.4 of \cite{KR},
there exists $w_n\in \caM$ 
satisfying (\ref{gin}).

(iv) Because $r_n$ is infinite in $\pi(\caB_1^{(1)})''$,
 $\lmk z_\rho\otimes \unit_{\caK_1}\otimes \unit_{\caK_2}\rmk r_n$
is properly infinite in $\caM$.
We also have
$Z_{\caM}\lmk \lmk z_\rho\otimes \unit_{\caK_1}\otimes \unit_{\caK_2}\rmk r_n\rmk= z_\rho\otimes \unit_{\caK_1}\otimes \unit_{\caK_2}$ because of the equivalence
of $r_n$ and $\unit$ in $\pi(\caB_1^{(1)})''$.

Set
\begin{align}
\begin{split}
\hat q_n:=\left\{
 \begin{gathered}
 w_n^* w_n+q_0r_n,\quad n\ge 1,\\
 q_0r_0,\quad n=0
 \end{gathered}
 \right..
\end{split}
\end{align}
They are projections in $\caM$
because
$w_n^*w_n\le q_0r_0$ and $q_0r_0$ and $q_0r_n$
are mutually orthogonal for $n\ge 1$.
The central carriers of $\hat q_n$ in $\caM$
are $z_\rho\otimes \unit_{\caK_1}\otimes \unit_{\caK_2}$ for all $n$
because 
\begin{align}
\begin{split}
z_\rho\otimes \unit_{\caK_1}\otimes \unit_{\caK_2}
=
Z_{\caM}\lmk q_0r_n\rmk \le Z_{\caM}\lmk \hat q_n\rmk\le 
Z_{\caM }\lmk q_0\rmk
=z_\rho\otimes \unit_{\caK_1}\otimes \unit_{\caK_2}.
\end{split}
\end{align}
We know that $\hat q_0=q_0 r_0$ is properly infinite in $\caM$
from above.
Now we show that $\hat q_n$ is properly infinite in $\caM$ for $n\ge 1$.
For any projection $z\in Z\lmk\caM \rmk$
, we have $zq_0r_n\le z\hat q_n$.
If $z\hat q_n$ is finite in $\caM$, $zq_0r_n$ has to be finite in $\caM$ by Proposition 6.3.2 of \cite{KR}.
Because $q_0 r_n$ is properly infinite in $\caM$,
this means $zq_0r_n=0$, and we obtain
\begin{align}
z_\rho\otimes\unit_{\caK_1}\otimes \unit_{\caK_2}=
Z_{\caM}\lmk q_0r_n \rmk\le \unit-z.
\end{align}
This implies
\begin{align}
\hat q_n\le q_0\le Z_{\caM}\lmk q_0\rmk
= z_\rho\otimes\unit_{\caK_1}\otimes \unit_{\caK_2}
\le \unit-z,
\end{align}
which implies $\hat q_n z=0$. Hence $\hat q_n$ is properly infinite in $\caM$.

As a result, both $\hat q_n$ and $\lmk z_\rho\otimes \unit_{\caK_1}\otimes \unit_{\caK_2}\rmk r_n$
are properly infinite projections in $\caM$
and with the same central carrier $z_\rho\otimes \unit_{\caK_1}\otimes \unit_{\caK_2}$ in $\caM$.
Therefore, by Corollary 6.3.5 of \cite{KR}, they are equivalent in $\caM$ hence there exists $W_n\in \caM$
satisfying (\ref{gold}).

(v)
Note that 
\begin{align}
\begin{split}
&V_0 q_0 r_n W_n \lmk  V_0 q_0 r_n W_n\rmk^*
=V_0q_0 r_n V_0^*,\quad n\ge 0,\\
& V_n w_n W_n\lmk V_n w_n W_n\rmk^*
=V_n q_n V_n^*,\quad n\ge 1,
\end{split}
\end{align}
are mutually orthogonal projections in
$\pi\lmk\caA\otimes \caB_1\rmk''$.
They further sum up to
\begin{align}\label{chairo}
\begin{split}
\sum_{n=0}^N V_0q_0 r_n V_0^*+
\sum_{n=1}^NV_n q_n V_n^*
=V_0 q_0 V_0^* +\sum_{n=1}^N V_n q_n V_n^*
=\sum_{n=0}^N V_nV_n^*
=z_\rho\otimes\unit_{\caK_1}\otimes \unit_{\caK_2}
\end{split}
\end{align}
On the other hand, 
\begin{align}
\begin{split}
&\lmk  V_0 q_0 r_n W_n\rmk^* V_0 q_0 r_n W_n 
=W_n^* r_n q_0 r_n W_n,\quad n\ge 0\\
&\lmk V_n w_n W_n\rmk^* V_n w_n W_n
=W_n^* w_n^* w_n W_n,\quad n\ge 1
\end{split}
\end{align}
are mutually orthogonal projections in
$\pi\lmk\caA\otimes \caB_1\rmk''$,
which sum up to
\begin{align}
\begin{split}
\sum_{n=0}^N W_n^* r_n q_0 r_n W_n
+\sum_{n=1}^N W_n^* w_n^* w_n W_n
=\sum_{n=0}^N W_n^* W_n
=\sum_{n=0}^N r_n\lmk z_\rho\otimes\unit_{\caK_1}\otimes \unit_{\caK_2}\rmk
=z_\rho\otimes\unit_{\caK_1}\otimes \unit_{\caK_2}.
\end{split}
\end{align}
Hence (\ref{dou}) converges in the strong $*$-topology, and
we obtain a partial isometry $W\in \pi(\caA\otimes\caB_1)''$
such that $W^*W=WW^*=z_\rho\otimes\unit_{\caK_1}\otimes \unit_{\caK_2}$.
Because 
all of $q_0$, $r_n$, $W_n$, $w_n$ belong to
$\caM=\pi\lmk \caA\otimes \caB_1^{(1)}\rmk''\cap\pi(\caA_1)'$, they 
commute with $\pi(\caA_1\otimes\caB_1^{(2)}\otimes\caB_2)$.
The partial isometries $V_n\in \pi\lmk \caA\otimes \caB_1\rmk''$
satisfy (\ref{daidai}) and commute with $\pi(\caB_2)''$.
This proves (\ref{momoiro}).
Hence we have completed the proof of (v).
\end{proof}
\subsection{Stabilization}
Now, we come back to our setting and prove Theorem \ref{stab}, using the Lemmas in the previous subsection.
First, we note the following basic fact.
\begin{lem}\label{wani}
Let $\varphi$ be a state on a $2$-dimensional quantum spin system $\caA$  satisfying the approximate Haag duality.
Let $\caB$ be a two-dimensional quantum spin system and $\psi$
a pure infinite tensor product state on $\caB$.
Then the state $\varphi\otimes \psi$ on
$\caA\otimes \caB$
has properly infinite cone algebras and satisfies the approximate Haag duality.
\end{lem}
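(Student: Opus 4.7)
The properly-infinite half is already in hand: since $\varphi \otimes \psi$ is precisely the situation covered by Lemma \ref{yuuyake} (take the state there to be $\varphi$), every cone von Neumann algebra $\pi_\varphi(\caA_\ld)'' \,\bar\otimes\, \pi_\psi(\caB_\ld)''$ of the product GNS is properly infinite. So the real task is to promote the approximate Haag duality from $\varphi$ to $\varphi \otimes \psi$, and the plan is to do this just by tensoring the witnessing unitaries of $\varphi$ with the identity on the $\psi$-factor, keeping $R_{\zeta,\varepsilon}$ and $f_{\zeta,\varepsilon}$ unchanged.

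The concrete steps I would take are as follows. Write $(\caH_\varphi,\pi_\varphi)$ and $(\caH_\psi,\pi_\psi)$ for GNS triples of $\varphi$ and $\psi$, so that $(\caH_\varphi\otimes \caH_\psi,\,\pi_\varphi\otimes\pi_\psi)$ is a GNS triple of $\omega:=\varphi\otimes \psi$. Because $\psi$ is pure, $\pi_\psi$ is irreducible, so $\pi_\psi(\caB)''=\caB(\caH_\psi)$; and because $\psi$ is a pure product state, the Hilbert space splits as $\caH_\psi = \caH_{\psi,\ld}\otimes \caH_{\psi,\ld^c}$ for every cone $\ld$, giving the exact Haag duality
\begin{align*}
\pi_\psi(\caB_{\ld^c})' = \pi_\psi(\caB_\ld)''.
\end{align*}
Now compute the relative commutant on the product side. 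By Tomita's commutation theorem,
\begin{align*}
\pi_\omega\bigl((\caA\otimes\caB)_{\ld^c}\bigr)' \cap \pi_\omega(\caA\otimes\caB)''
= \bigl[\pi_\varphi(\caA_{\ld^c})'\bar\otimes \pi_\psi(\caB_{\ld^c})'\bigr]
   \cap
   \bigl[\pi_\varphi(\caA)''\bar\otimes \caB(\caH_\psi)\bigr].
\end{align*}
Using the standard fact that $(M_1\bar\otimes M_2)\cap (N_1\bar\otimes \caB(\caH_2)) = (M_1\cap N_1)\bar\otimes M_2$ whenever $M_2\subset \caB(\caH_2)$ (proved via slice maps against normal states on $\caB(\caH_2)$), this intersection collapses to
\begin{align*}
\bigl[\pi_\varphi(\caA_{\ld^c})'\cap \pi_\varphi(\caA)''\bigr]\,\bar\otimes\, \pi_\psi(\caB_{\ld^c})'
= \bigl[\pi_\varphi(\caA_{\ld^c})'\cap \pi_\varphi(\caA)''\bigr]\,\bar\otimes\, \pi_\psi(\caB_\ld)''.
\end{align*}

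With this identification the rest is direct. Fix $\zeta\in (0,\pi)$ and $\varepsilon\in(0,\tfrac14(\pi-\zeta))$, and let $U_{\ld,\varepsilon}^{(\varphi)}\in \caU(\pi_\varphi(\caA)'')$ and $U_{\ld,\varepsilon,t}^{(\varphi)}\in \caU(\pi_\varphi(\caA_{\ld_{2\varepsilon}(-t)})'')$ be the unitaries provided by the approximate Haag duality of $\varphi$, with radius $R_{\zeta,\varepsilon}$ and decay $f_{\zeta,\varepsilon}$. Define
\begin{align*}
U_{\ld,\varepsilon}^{(\omega)} := U_{\ld,\varepsilon}^{(\varphi)}\otimes \unit_{\caH_\psi},
\qquad
U_{\ld,\varepsilon,t}^{(\omega)} := U_{\ld,\varepsilon,t}^{(\varphi)}\otimes \unit_{\caH_\psi}.
\end{align*}
Using $\ld \subset \ld_\varepsilon(-R_{\zeta,\varepsilon})$, and hence $\pi_\psi(\caB_\ld)''\subset \pi_\psi(\caB_{\ld_\varepsilon(-R_{\zeta,\varepsilon})})''$, clause (i) of Definition \ref{AHdef} for $\omega$ follows by tensoring the inclusion for $\varphi$ by the identity on the $\psi$-factor; clause (ii) is immediate since tensoring with $\unit$ preserves the norm, so $\lV U_{\ld,\varepsilon,t}^{(\omega)}-U_{\ld,\varepsilon}^{(\omega)}\rV\le f_{\zeta,\varepsilon}(t)$, and $U_{\ld,\varepsilon,t}^{(\omega)}\in \pi_\omega((\caA\otimes\caB)_{\ld_{2\varepsilon}(-t)})''$ because of the inclusion $\pi_\varphi(\caA_{\ld_{2\varepsilon}(-t)})''\bar\otimes \bbC\unit \subset \pi_\varphi(\caA_{\ld_{2\varepsilon}(-t)})''\bar\otimes \pi_\psi(\caB_{\ld_{2\varepsilon}(-t)})''$.

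The only step that really needs care is the relative-commutant identity for the tensor-product algebras; this is the technical heart of the argument. Once that identity is in place, the choice of lifted unitaries $U_{\ld,\varepsilon}^{(\varphi)}\otimes \unit$ is forced and both conditions of Definition \ref{AHdef} transfer verbatim, with the same $R_{\zeta,\varepsilon}$ and $f_{\zeta,\varepsilon}$ that witness approximate Haag duality for $\varphi$.
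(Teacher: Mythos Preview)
Your proposal is correct and follows essentially the same route as the paper: cite Lemma \ref{yuuyake} for the properly infinite half, and for approximate Haag duality tensor the witnessing unitaries $U_{\ld,\varepsilon}^{(\varphi)}$ with $\unit_{\caH_\psi}$, keeping the same $R_{\zeta,\varepsilon}$ and $f_{\zeta,\varepsilon}$. If anything, you are more explicit than the paper about the one nontrivial step---the tensor-product relative-commutant identity---which the paper writes down as an equality without comment; your slice-map justification using $\pi_\psi(\caB)''=\caB(\caH_\psi)$ is exactly what is needed there.
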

\begin{proof}
Let $(\caH_\varphi, \pi_\varphi)$, $(\caH_\psi,\pi_\psi)$ be GNS representations of
$\varphi$ and $\psi$.
Their tensor product $(\caH,\pi):=(\caH_\varphi\otimes\caH_\psi,\pi_\varphi\otimes \pi_\psi)$
is a GNS representation of $\varphi\otimes \psi$.

For any cone $\ld$ in $\bbZ^2$, we have 
\begin{align}
\begin{split}
\unit_{\varphi} \bar\otimes \pi_\psi \lmk \caB_{\ld}\rmk''
\subset \pi_\varphi(\caA_\ld)\bar\otimes \pi_\psi\lmk\caB_\ld\rmk''.
\end{split}
\end{align}
Because $\unit_{\caH_\varphi} \otimes \pi_\psi \lmk \caB_{\ld}\rmk''$ is properly infinite,
$\pi_\varphi(\caA_\ld)\bar\otimes \pi_\psi\lmk\caB_\ld\rmk''$ is also properly infinite.

Next we show that 
$\varphi\otimes \psi$ satisfies the approximate Haag duality.
For any $\zeta\in (0,\pi)$ and $0<\varepsilon<\frac 14(\pi-\zeta)$,
let $R_{\zeta,\varepsilon}$, $f_{\zeta,\varepsilon}$ be as in Definition \ref{AHdef}
for the state $\varphi$.
For any cone $\Lambda$ with $\lv\Lambda\rv=2\zeta$,
there exists a unitary $U_{\Lambda,\varepsilon}\in \caU\lmk\pi_\varphi\lmk\caA\rmk''\rmk$
such that
\[
\pi_\varphi(\caA_{\Lambda^c})'\cap \pi_\varphi\lmk \caA_{\bbZ^2}\rmk''
\subset \Ad \lmk U_{\Lambda,\varepsilon}\rmk\lmk
\pi_\varphi\lmk\caA_{\Lambda_{\varepsilon}(-R_{\zeta,\varepsilon})}\rmk''
\rmk,\]
 and for any $t\ge 0$, there exists a unitary $U_{\Lambda,\varepsilon, t}\in \caU\lmk
\pi_\varphi\lmk \caA_{\Lambda_{2\varepsilon}(-t)}\rmk''\rmk$
such that
\begin{align}
\lV U_{\Lambda,\varepsilon, t}-U_{\Lambda,\varepsilon}\rV
\le f_{\zeta,\varepsilon}(t).
\end{align}
Then we have
\begin{align}
\begin{split}
&U_{\Lambda,\varepsilon}\otimes\unit
\in \caU\lmk
\pi(\caA\otimes{\caB})''
\rmk,\\
&U_{\Lambda,\varepsilon, t}\otimes\unit
\in \caU\lmk
\pi\lmk \lmk \caA\otimes{\caB}\rmk_{\Lambda_{2\varepsilon}(-t)}\rmk''
\rmk,\\
&\lV U_{\Lambda,\varepsilon}\otimes\unit- U_{\Lambda,\varepsilon, t}\otimes\unit\rV
\le f_{\zeta,\varepsilon}(t)
\end{split}
\end{align}
and
\begin{align}
\begin{split}
&\pi\lmk\lmk \caA\otimes{\caB}\rmk_{\Lambda^c}\rmk'\cap
\pi\lmk \caA\otimes{\caB}\rmk''
=\lmk \pi_\varphi\lmk\caA_{\Lambda^c}\rmk'\cap \pi_\varphi({ \caA})''\rmk
\bar\otimes \lmk \pi_\psi\lmk{\caB}_{\Lambda^c}\rmk'\cap\pi_\psi\lmk {\caB}\rmk''\rmk\\
&\subset 
\Ad\lmk U_{\Lambda,\varepsilon}\otimes\unit\rmk
\lmk 
\pi_\varphi\lmk\caA_{\Lambda_{\varepsilon}(-R_{\zeta,\varepsilon})}\rmk''
\bar\otimes \pi_\psi\lmk {\caB}_\Lambda\rmk''
\rmk\\
&\subset
\Ad\lmk U_{\Lambda,\varepsilon}\otimes\unit\rmk
\lmk
\pi\lmk\lmk \caA\otimes{\caB}\rmk_{\Lambda_{\varepsilon}(-R_{\zeta,\varepsilon})}\rmk''
\rmk.
\end{split}
\end{align}
Hence $\omega$ satisfies the approximate Haag duality.
\end{proof}

Now, we would like to show that
$\fol{\omega\otimes\psi_1\otimes \psi_2}{\omega\otimes\psi_1}$
is an equivalence if $\psi_1$, $\psi_2$ are pure infinite tensor product states
on two-dimensional quantum spin systems.
In order to do that we prepare the following.

\begin{lem}\label{lem46}
Let $\ld_0$ be a cone.
Let $\caB_1$, $\caB_2$ be two-dimensional quantum spin systems and $\psi_1$, $\psi_2$
pure infinite tensor product states on $\caB_1$, $\caB_2$, respectively.
Let $\omega$ be a state on a two-dimensional quantum spin system $\caA$.
If $\rho\in \ools{\omega\otimes\psi_1\otimes\psi_2}$
with $\rho(\unit)\in Z\lmk
\lmk\pi_{\omega}\otimes\pi_{\psi_1}\otimes\pi_{\psi_2}\rmk\lmk\caA\otimes\caB_1\otimes\caB_2\rmk''\rmk$,
the following hold.
\begin{description}
\item[(i)]
There exists a partial isometry
$V$ in $\pi_{\omega}(\caA)''\bar \otimes\pi_{\psi_1}(\caB_1)''\bar
\otimes\pi_{\psi_2}(\caB_2)''$
such that 
\begin{align}
\begin{split}
&V^*V=VV^*=\rho(\unit),\\
&V\pi(B_2)=\rho(B_2) V,\quad B_2\in \caB_2,
\end{split}
\end{align}
and a representation $\rho_1$ of
$\caA\otimes\caB_1$ on $\caH_\omega\otimes \caH_{\psi_1}$
such that
\begin{align}
\rho_1(A\otimes B_1)\otimes \unit_{\caH_{\psi_2}}
=\Ad V^*\rho \lmk A\otimes B_1 \rmk,\quad
A\in \caA,\quad B_1 \in \caB_1.
\end{align}
\item[(ii)]
For any cone $\ld$, there exists a partial isometry
$V_\ld\in \pi_{\omega}\lmk\caA\rmk''\bar\otimes\pi_{\psi_1}(\caB_1)''$
such that
\begin{align}
&V_\ld^* V_\ld =V_{\ld} V_{\ld}^*=\rho_1(\unit),\\
&V_\ld \lmk\pi_\omega\otimes \pi_{\psi_{1}}\rmk(A\otimes B_1)
=\rho_1\lmk A\otimes B_1\rmk V_\ld,\quad
A\in\caA_{\ld^c},\quad B_1\in (\caB_1)_{\ld^c}.
\end{align}
\item[(iii)]
The formula
$\gamma:=\Ad V_{\ld_0}^*\circ\rho_1$ (with $V_{\ld_0}$ given in (ii)) defines
an element $\gamma$ of $\ools{\omega\otimes\psi_1}$.
\end{description}
\end{lem}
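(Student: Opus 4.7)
The approach is to construct the partial isometries $V$ and $V_\ld$ by two careful applications of Lemma \ref{aiiro}, after which (iii) follows by direct verification. A key preliminary remark is that $\rho(\unit)$, being central in $(\pi_\omega\otimes\pi_{\psi_1}\otimes\pi_{\psi_2})(\caA\otimes\caB_1\otimes\caB_2)''$, takes the form $z_\rho\otimes\unit_{\caH_{\psi_1}}\otimes\unit_{\caH_{\psi_2}}$ for some $z_\rho\in Z(\pi_\omega(\caA)'')$, while purity of $\psi_i$ gives $\pi_{\psi_i}(\caB_i)''=\caB(\caH_{\psi_i})$. For (i), I first observe that the factorization $\rho_1(A\otimes B_1)\otimes\unit_{\caH_{\psi_2}}=\Ad V^*\rho(A\otimes B_1)$ is automatic once $V$ is a partial isometry in the full double commutant with $V^*V=VV^*=\rho(\unit)$ intertwining $\pi$ and $\rho$ on all of $\caB_2$: indeed, $\Ad V^*\rho(A\otimes B_1)$ must commute with $\Ad V^*\rho(B_2)=\rho(\unit)\pi(B_2)$ for every $B_2\in\caB_2$, and $\pi_{\psi_2}(\caB_2)''=\caB(\caH_{\psi_2})$ then forces the tensor-product form on $\rho(\unit)\caH$. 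I produce such a $V$ by Lemma \ref{aiiro} with $U:=V_{\rho\ld_0}$, $\pi_0:=\pi_\omega\otimes\pi_{\psi_1}$, and $\caA_1:=(\caA\otimes\caB_1)_{\ld_0^c}$ as a subalgebra of the Lemma's $\caA\otimes\caB_1$: condition (\ref{zin}) follows from (\ref{zzz}) applied to $\omega\otimes\psi_1$, whose approximate Haag duality is Lemma \ref{wani}, while the Lemma's $\caB_1^{(1)}\otimes\caB_1^{(2)}$ and $\caB_2$ are taken as infinite UHF factors of $(\caB_2)_{\ld_0^c}$ so that (\ref{kiiro}), (\ref{ao}), (\ref{aka}) are inherited from the superselection property of $V_{\rho\ld_0}$ on $(\caA\otimes\caB_1\otimes\caB_2)_{\ld_0^c}$ together with the tensor form of $\rho(\unit)$. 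The element $W$ from conclusion (v) of Lemma \ref{aiiro} is then taken as $V$.

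For (ii), given a cone $\ld$, I start from $V_{\rho\ld}\in\caV_{\rho\ld}$ and consider $V^*V_{\rho\ld}\in(\pi_\omega\otimes\pi_{\psi_1}\otimes\pi_{\psi_2})(\caA\otimes\caB_1\otimes\caB_2)''$. By construction this intertwines $\pi$ with the representation $\rho_1\otimes\mathrm{(extension\ by\ }\pi_{\psi_2})$ on $(\caA\otimes\caB_1\otimes\caB_2)_{\ld^c}$, but it lives in the full algebra rather than in the smaller $(\pi_\omega\otimes\pi_{\psi_1})(\caA\otimes\caB_1)''$ where $V_\ld$ is required to sit. A second application of Lemma \ref{aiiro}, now with $U:=V^*V_{\rho\ld}$ and with the role of $\rho$ played by this rotated representation, produces $V_\ld\in(\pi_\omega\otimes\pi_{\psi_1})(\caA\otimes\caB_1)''$ with $V_\ld^*V_\ld=V_\ld V_\ld^*=\rho_1(\unit)$ and the required intertwining on $(\caA\otimes\caB_1)_{\ld^c}$.

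Given $V_{\ld_0}$ from (ii), part (iii) is direct. Setting $\gamma:=\Ad V_{\ld_0}^*\circ\rho_1$, for $A\otimes B_1\in(\caA\otimes\caB_1)_{\ld_0^c}$ one computes
\[
\gamma(A\otimes B_1)=V_{\ld_0}^*\rho_1(A\otimes B_1)V_{\ld_0}=V_{\ld_0}^*V_{\ld_0}(\pi_\omega\otimes\pi_{\psi_1})(A\otimes B_1)=\rho_1(\unit)(\pi_\omega\otimes\pi_{\psi_1})(A\otimes B_1),
\]
using the intertwining of $V_{\ld_0}$ from (ii) and the centrality of $\rho_1(\unit)=z_\rho\otimes\unit_{\caH_{\psi_1}}$; this is exactly (\ref{take}) for $\gamma$. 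Moreover, for each cone $\ld$, $V_{\gamma\ld}:=V_{\ld_0}^*V_\ld$ lies in $\caV_{\gamma\ld}$, so $\gamma\in\ools{\omega\otimes\psi_1}$. The \emph{main obstacle} is the delicate choice of decompositions of $\caB_2$ and of the Lemma's $\caA_1$ in the two applications of Lemma \ref{aiiro}, so that all four hypotheses (\ref{zin}), (\ref{kiiro}), (\ref{ao}), (\ref{aka}) hold simultaneously with the prescribed initial intertwiner, and so that the Lemma's conclusion genuinely delivers intertwining on all of $\caB_2$ (respectively on $(\caA\otimes\caB_1)_{\ld^c}$) rather than only on the subregion where superselection already provides it.
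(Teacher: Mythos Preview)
Your high-level plan --- two invocations of Lemma \ref{aiiro}, then a direct check for (iii) --- matches the paper, and your treatment of (iii) is fine. However, the specific input data you feed into Lemma \ref{aiiro} for (i) cannot deliver what you claim, and this is precisely the obstacle you flag but do not resolve.

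In your setup for (i) you take $U:=V_{\rho\ld_0}$, $\caA_1:=(\caA\otimes\caB_1)_{\ld_0^c}$, and place the Lemma's $\caB_1^{(1)},\caB_1^{(2)},\caB_2$ all inside $(\caB_2)_{\ld_0^c}$. Then every hypothesis (\ref{kiiro}), (\ref{ao}), (\ref{aka}) lives inside the region $\ld_0^c$, and so does the conclusion (\ref{momoiro}): the $W$ you obtain intertwines $\pi$ and $\rho$ only on a subalgebra of $(\caA\otimes\caB_1\otimes\caB_2)_{\ld_0^c}$, never on $(\caB_2)_{\ld_0}$. Worse, with this allocation the ambient algebra $\mathfrak A$ of Lemma \ref{aiiro} does not even contain $(\caB_2)_{\ld_0}$. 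The paper's resolution is to take the \emph{opposite} intertwiner $U:=V_{\rho\ld_0^c}\in\caV_{\rho\ld_0^c}$, which intertwines on $\ld_0$, set $\caA_1:=\bbC\unit_{\caA}$, and split the roles as
\[
\caB_1^{(1)}:=(\caB_1)_{\ld_0},\qquad \caB_1^{(2)}:=(\caB_2)_{\ld_0},\qquad \text{Lemma's }\caB_2:=(\caB_1\otimes\caB_2)_{\ld_0^c}.
\]
Now (\ref{kiiro}) holds because $V_{\rho\ld_0^c}$ intertwines on $(\caB_1\otimes\caB_2)_{\ld_0}$, while (\ref{aka}) holds because $\rho\in\ools{\omega\otimes\psi_1\otimes\psi_2}$ is trivial on $\ld_0^c$; the two hypotheses pull in different regions, and the conclusion (\ref{momoiro}) covers $\caB_1^{(2)}\otimes(\text{Lemma's }\caB_2)=(\caB_2)_{\ld_0}\otimes(\caB_1\otimes\caB_2)_{\ld_0^c}\supset\caB_2$.

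A similar issue affects your (ii). The paper does not use $V_{\rho\ld}$ directly but fixes an auxiliary cone $\Gamma\subsetneq\ld$, sets $U:=V^*V_{\rho\Gamma}$, and applies Lemma \ref{aiiro} to $\tilde\rho:=\rho_1\otimes\pi_{\psi_2}$ with $\caB_1^{(1)}:=(\caB_1)_{\Gamma^c\cap\ld}$, $\caB_1^{(2)}:=(\caB_1)_{\ld^c}$, and the Lemma's $\caB_2:=\caB_2$. The gap $\Gamma^c\cap\ld$ is essential: it supplies the nontrivial $\caB_1^{(1)}$ without which Lemma \ref{aiiro} cannot be invoked, and the resulting $W$ then lands in $\pi_\omega(\caA)''\bar\otimes\pi_{\psi_1}(\caB_1)''\bar\otimes\bbC\unit_{\caH_{\psi_2}}$ while intertwining on $\caA_{\Gamma^c}\otimes(\caB_1)_{\ld^c}\supset(\caA\otimes\caB_1)_{\ld^c}$.
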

\begin{proof}
We set $\pi:=\pi_\omega\otimes\pi_{\psi_1}\otimes\pi_{\psi_2}$.
Let $z_\rho\in Z\lmk\pi_\omega(\caA)''\rmk$ be a projection given by 
$$z_\rho\otimes\unit_{\caH_{\psi_1}}\otimes\unit_{\caH_{\psi_2}}=\rho(\unit)\in Z\lmk\pi(\caA\otimes\caB_1\otimes\caB_2)''\rmk
=Z\lmk\pi_\omega(\caA)''\rmk\bar\otimes\bbC\unit_{\caH_{\psi_1}}\bar\otimes\bbC\unit_{\caH_{\psi_2}}.$$

(i)
We apply Lemma \ref{aiiro} 
with $\caA$, $\caA_1$, $\caB_1^{(1)}$, $\caB_1^{(2)}$, $\caB_2$,
$\caH_0$, $\caK_1^{(1)}$, $\caK_1^{(2)}$,
$\caK_2$, $\pi_0$, $\pi_1^{(1)}$, $\pi_1^{(2)}$, $\pi_2$, $U$, $\rho$, $z_\rho$,
replaced by
$\caA$, $\bbC\unit_{\caA}$, $(\caB_1)_{\ld_0}$, $(\caB_2)_{\ld_0}$, $\lmk\caB_1\otimes \caB_2\rmk_{\ld_0^c}$,
$\caH_\omega$, $\caH_{\psi_1\vert_{(\caB_1)_{\ld_0}}}$, 
$\caH_{\psi_2\vert_{(\caB_2)_{\ld_0}}}$, 
$\caH_{\lmk\psi_1\otimes\psi_2\rmk\vert_{(\caB_1\otimes\caB_2)_{\ld_0^c}}}$,
$\pi_\omega$, 
$\pi_{\psi_1\vert_{(\caB_1)_{\ld_0}}}$, 
$\pi_{\psi_2\vert_{(\caB_2)_{\ld_0}}}$, 
$\pi_{\lmk\psi_1\otimes\psi_2\rmk\vert_{(\caB_1\otimes\caB_2)_{\ld_0^c}}}$,
$V_{\rho\ld_0^c}\in\caV_{\rho\ld_0^c}$, $\rho$, $z_\rho$.
By the definition of $\caV_{\rho\ld_0^c}$ and $\ools{\omega\otimes\psi_1\otimes\psi_2}$,
the conditions of Lemma \ref{aiiro} are satisfied.

Applying  Lemma \ref{aiiro},
we obtain $V\in   \pi \lmk \caA\otimes (\caB_1)_{\ld_0}\otimes (\caB_2)_{\ld_0}\rmk''$
such that 
\begin{align}
\begin{split}
&V^*V=VV^*=\rho(\unit),\\
&V\pi(B)=\rho(B)V,\quad B\in (\caB_2)_{\ld_0}\otimes (\caB_1\otimes\caB_2)_{\ld_0^c}
=(\caB_1)_{\ld_0^c}\otimes \caB_2.
\end{split}
\end{align}
From the second equality, 
for any $A\in\caA$, $B_1\in \caB_1$ and $B_2\in\caB_2$, we see
\begin{align}
\begin{split}
&\left[
\Ad V^*\rho(A\otimes B_1), \pi(B_2)
\right]
=V^* \rho(A\otimes B_1) V \pi(B_2)-
\pi(B_2) V^* \rho(A\otimes B_1) V \\
&=V^* \rho(A\otimes B_1) \rho(B_2) V
-V^* \rho(B_2)\rho(A\otimes B_1) V 
=V^* \left[\rho(A\otimes B_1), \rho(B_2) \right]V=0.
\end{split}
\end{align}
Hence we obtain
\begin{align}
\begin{split}
\Ad V^*\rho\lmk \caA\otimes \caB_1\rmk
\subset \pi\lmk\caA\otimes\caB_1\otimes\caB_2\rmk''
\cap \pi(\caB_2)'
=\pi_\omega(\caA)''\bar \otimes\pi_{\psi_1}(\caB_1)''\bar \otimes \bbC\unit_{\caH_{\psi_2}} 
\end{split}
\end{align}
and we obtain a linear operator
$\rho_1: \caA\otimes \caB_1\to\caB(\caH_\omega\otimes\caH_{\psi_1})$
satisfying
\begin{align}
\rho_1(A\otimes B_1)\otimes \unit_{\caH_{\psi_2}}
=\Ad V^*\rho \lmk A\otimes B_1 \rmk,\quad
A\in \caA,\quad B_1 \in \caB_1.
\end{align}
Because $VV^*=\rho(\unit)$, this $\rho_1$ is a representation of
$\caA\otimes \caB_1$ on $\caH_\omega\otimes\caH_{\psi_1}$.

(ii)
For any cone $\ld$, we fix cones $\Gamma$, $D$
satisfying $\Gamma\cap D=\emptyset$, $\Gamma,D\subset\ld$ and 
an operator
$V_{\rho\Gamma}\in \caV_{\rho\Gamma}\subset \pi(\caA\otimes\caB_1\otimes\caB_2)''$.
We set $U:=V^* V_{\rho\Gamma}\in  \pi(\caA\otimes\caB_1\otimes\caB_2)''$, with $V$ in (i).
Then we have
\begin{align}\label{shima}
\begin{split}
&\Ad U\pi\lmk A\otimes B_1 \rmk
=\Ad V^* V_{\rho\Gamma} \pi(A\otimes B_1)
=\Ad V^* \rho(A\otimes B_1)
=\rho_1\lmk A\otimes B_1\rmk \otimes\unit_{\caH_{\psi_2}},\\
&\quad\text{for any}A\in\caA_{\Gamma^c} \;\text{and}\;B_1\in(\caB_1)_{\Gamma^c}.
\end{split}
\end{align}
We also have
\begin{align}\label{koushi}
\begin{split}
&U^* U= V_{\rho\Gamma}^* V V^* V_{\rho\Gamma}
=V_{\rho\Gamma}^* \rho(\unit) V_{\rho\Gamma}
=V_{\rho\Gamma}^* V_{\rho\Gamma}
\in\pi\lmk\lmk \caA\otimes\caB_1\otimes\caB_2\rmk_{\Gamma^c}\rmk'
\subset \pi\lmk \lmk \caA\otimes\caB_1\rmk_{\Gamma^c}\rmk',\\
&\rho_1(\unit_{\caA\otimes\caB_1})\otimes\unit_{\caH_{\psi_2}}=UU^*=V^* V_{\rho\Gamma}V_{\rho\Gamma}^* V
=V^*V=\rho(\unit)
=z_\rho\otimes\unit_{\caH_{\psi_1}}\otimes \unit_{\caH_{\psi_2}}.
\end{split}
\end{align}
Now define a representation $\tilde\rho$
 of $\caA\otimes\caB_1\otimes\caB_2$
 on $\caH_\omega\otimes \caH_{\psi_1}\otimes\caH_{\psi_2}$ by
\begin{align}
\tilde\rho:=\rho_1\otimes \pi_{\psi_2}.
\end{align}

Now we apply Lemma \ref{aiiro}
with $\caA$, $\caA_1$
$\caB_1^{(1)}$, $\caB_1^{(2)}$,
$\caB_2$, $\caH$, $\pi_0$,
$\caK_1^{(1)}$, $\pi_1^{(1)}$,
$\caK_1^{(2)}$, $\pi_1^{(2)}$,
$\caK_2$, $\pi_2$,
$U$, $\rho$, $z_\rho$
replaced by
$\caA\otimes(\caB_1)_{\Gamma}$, 
$\caA_{\Gamma^c}$,
$(\caB_1)_{\Gamma^c\cap \ld}$,
$(\caB_1)_{\ld^c}$,
$\caB_2$,
$\caH_\omega\otimes \caH_{\psi_1\vert_{(\caB_1)_\Gamma}}$,
$\pi_\omega\otimes\pi_{\psi_1\vert_{(\caB_1)_\Gamma}}$,
$\caH_{\psi_1\vert_{(\caB_1)_{\ld \cap\Gamma^c}}}$,
$\pi_{\psi_1\vert_{(\caB_1)_{\ld\cap\Gamma^c}}}$,
$\caH_{\psi_1\vert_{(\caB_1)_{\ld^c}}}$,
$\pi_{\psi_1\vert_{(\caB_1)_{\ld^c}}}$,
$\caH_{\psi_2}$, $\pi_{\psi_2}$,
$U$, $\tilde\rho$, 
$z_\rho\otimes\unit_{\caH_{\psi_1\vert_{(\caB_1)_{\Gamma}}}}$.
Note from (\ref{shima}) (\ref{koushi}) and the definition of $\tilde\rho$ and
\begin{align}
\begin{split}
&Z\lmk\lmk \pi_\omega\otimes\pi_{\psi_1\vert_{(\caB_1)_\Gamma}}\rmk\lmk \caA\otimes(\caB_1)_\Gamma\rmk''\rmk
=Z\lmk\pi_\omega(\caA)''\rmk\bar\otimes\bbC\unit_{\psi_1\vert_{(\caB_1)_\Gamma}}
=Z\lmk
\pi_\omega\lmk\caA_{\Gamma^c}\rmk'\cap\pi_\omega(\caA)''
\rmk
\bar \otimes
\bbC \unit_{\psi_1\vert_{(\caB_1)_\Gamma}}\\
&=Z\lmk
\lmk
 \pi_\omega\otimes\pi_{\psi_1\vert_{(\caB_1)_\Gamma}}
\lmk \caA_{\Gamma^c}\rmk\rmk'
\cap
\lmk \pi_\omega\otimes\pi_{\psi_1\vert_{(\caB_1)_\Gamma}}\rmk\lmk
\caA\otimes(\caB_1)_\Gamma
\rmk''
\rmk
\end{split}
\end{align}
by (\ref{zzz}),
the conditions required in Lemma \ref{aiiro} hold.
Hence we may apply Lemma \ref{aiiro}.

Applying Lemma \ref{aiiro}, we obtain 
\begin{align}\label{men}
W\in \pi(\caA\otimes(\caB_1)_{\Gamma}\otimes(\caB_1)_{\Gamma^c})''
=\pi_\omega(\caA)''\bar\otimes\pi_{\psi_1}(\caB_1)''\bar\otimes\unit_{\caH_{\psi_2}}
\end{align}
such that
\begin{align}
\begin{split}
&W^* W=WW^*=\tilde \rho(\unit)
=z_\rho\otimes\unit_{\psi_1}\otimes\unit_{\psi_2},\\
&W\pi(A\otimes B_1)=\tilde\rho(A\otimes B_1) W
=\lmk \rho_1(A\otimes B_1)\otimes\unit_{\caH_{\psi_2}}\rmk W,\quad
\text{for all}\; A\in\caA_{\Gamma^c},\; B_1\in (\caB_1)_{\ld^c}.
\end{split}
\end{align}
From (\ref{men}), $W$ is of the form
$W=V_\ld\otimes \unit_{\caH_{\psi_2}}$
with $V_\ld\in \pi_\omega(\caA)''\bar\otimes\pi_{\psi_1}(\caB_1)''$.
From above, this $V_\ld$ satisfies
\begin{align}
&V_\ld^*V_\ld=V_\ld V_\ld^*=z_\rho\otimes \unit_{\caH_{\psi_1}}=\rho_1(\unit),\\
&V_\ld\lmk \pi_\omega(A)\otimes \pi_{\psi_1}(B_1)\rmk
=\rho_1(A\otimes B_1)V_\ld,\quad
A\in \caA_{\ld^c}\subset \caA_{\Gamma^c}, \; B_1\in (\caB_1)_{\ld^c}.\label{himo}
\end{align}

(iii)
The formula
$\gamma:=\Ad V_{\ld_0}^*\circ\rho_1$
defines a represntation of $\caA\otimes\caB_1$ on $\caH_{\omega}\otimes\caH_{\psi_1}$
because $V_{\ld_0}V_{\ld_0}^*=\rho_1(\unit)$.
For any cone $\ld$, we claim that
$V_{\ld_0}^* V_{\ld}\in\caV_{\gamma\ld}$.
In fact, $V_{\ld_0}^* V_{\ld}\in \pi_\omega(\caA)''\bar\otimes\pi_{\psi_1}(\caB_1)''$
from above and
\begin{align}
\begin{split}
&\Ad\lmk V_{\ld_0}^* V_{\ld}\rmk \circ
\lmk \pi_\omega\otimes\pi_{\psi_1}\rmk(X)
= V_{\ld_0}^* V_{\ld} \lmk \pi_\omega\otimes\pi_{\psi_1}\rmk(X) V_{\ld}^* V_{\ld_0}\\
&= V_{\ld_0}^* \rho_1(X) V_{\ld}V_{\ld}^*  V_{\ld_0}
= V_{\ld_0}^* \rho_1(X) \rho_1(\unit)  V_{\ld_0}
=\gamma(X),\\
&\quad X\in \lmk\caA\otimes\caB_1\rmk_{\ld^c}
\end{split}
\end{align}
and
\begin{align}
\begin{split}
&\lmk  V_{\ld_0}^* V_{\ld}\rmk^*  V_{\ld_0}^* V_{\ld}=
V_\ld^* V_{\ld_0}V_{\ld_0}^* V_\ld 
=V_\ld^* \rho_1(\unit) V_\ld
= V_\ld^*V_\ld=\rho_1(\unit)
=z_\rho\otimes\unit_{\caH_{\psi_1}}\\
&\in Z\lmk\pi_\omega(\caA)''\rmk\bar \otimes\unit_{\caH_{\psi_1}}
\subset \lmk \pi_\omega\otimes\pi_{\psi_1}\lmk \lmk \caA\otimes \caB\rmk_{\ld^c}\rmk\rmk'.
\end{split}
\end{align}
Furthermore, $\gamma$ satisfies
\begin{align}
\begin{split}
\gamma(X)=V_{\ld_0}^* \rho_1(X) V_{\ld_0}
=V_{\ld_0}^* V_{\ld_0} \pi_{\omega}\otimes\pi_{\psi_1}(X)
=\lmk \pi_\omega\otimes\pi_{\psi_1} (X) \rmk \rho_1(\unit)
=\lmk \pi_\omega\otimes\pi_{\psi_1} (X)\rmk\gamma(\unit),
\end{split}
\end{align}
for any $X\in \lmk \caA\otimes\caB_1\rmk_{\ld_{0}^c}$, because of (\ref{himo}).
Hence we obtain $\gamma\in \ools{\omega\otimes\psi_1}$.
\end{proof}
\begin{proofof}[Theorem \ref{stab}]
Now, we are ready to prove Theorem \ref{stab}.
First of all, from Lemma \ref{wani}, both
$\omega\otimes \psi_1$ and $\omega\otimes\psi_1\otimes\psi_2$
have properly infinite cone algebras and satisfy the approximate Haag duality.
Hence we obtain a strict braided tensor functor
$\fol{\omega\otimes\psi_1\otimes\psi_2}{\omega\otimes\psi_1}:
\cols{\omega\otimes\psi_1}\to \cols{\omega\otimes\psi_1\otimes\psi_2}$
by Theorem \ref{cofe}.
Because $\lmk \omega\otimes\psi_1\otimes\psi_2\rmk\vert_{\caB_2}=\psi_2$
is pure, from Lemma \ref{washi}, $\fol{\omega\otimes\psi_1\otimes\psi_2}{\omega\otimes\psi_1}$
is fully faithful.
What remains to be shown is that $\fol{\omega\otimes\psi_1\otimes\psi_2}{\omega\otimes\psi_1}$
is essentially surjective.

Set $\pi_{\omega\otimes\psi_1\otimes\psi_2}:=\pi_\omega\otimes\pi_{\psi_1}\otimes \pi_{\psi_{2}}$.
To show the essential surjectivity, let $\rho\in \Obj \cols{\omega\otimes\psi_1\otimes\psi_2}$.
Because of Lemma \ref{shachi},
 it suffices to consider the case that $\rho(\unit)$ belongs to the center
$$Z\lmk \pi_{\omega\otimes\psi_1\otimes\psi_2}\lmk
\caA\otimes \caB_1\otimes\caB_2\rmk''\rmk.$$
By Lemma \ref{lem46},
there exist a represention $\rho_1$ of $\caA\otimes\caB_1$
on $\caH_\omega\otimes\caH_{\psi_1}$,
 partial isometries $V\in \pi_{\omega\otimes\psi_1\otimes\psi_2}\lmk
\caA\otimes \caB_1\otimes\caB_2\rmk''$ and
$V_{\ld_0}\in \lmk \pi_\omega\otimes\pi_{\psi_1}\rmk(\caA\otimes\caB_1)''$
such that
\begin{align}
\begin{split}
&\gamma:=\Ad V_{\ld_0}^*\rho_1\in \ools{\omega\otimes\psi_1},\\
&V\pi_{\omega\otimes\psi_1\otimes\psi_2}(B_2)
=\rho(B_2)V,\quad\text{for all}\; B_2\in\caB_2,\\
&\rho_1(X)\otimes\unit_{\caH_{\psi_2}}
=\Ad V^* \rho(X),\quad X\in \caA\otimes\caB_1,\\
&V^*V=VV^*=\rho(\unit),\\
&V_{\ld_0}^* V_{\ld_0}=V_{\ld_0} V_{\ld_0}^*=\rho_1(\unit).
\end{split}
\end{align}
For this $\gamma$,
$\fol{\omega\otimes\psi_1\otimes\psi_2}{\omega\otimes\psi_1}(\gamma)=\hat\gamma$,
with
\begin{align}
\begin{split}
\hat\gamma\lmk X\otimes B_2\rmk
=\gamma(X)\otimes\pi_{\psi_2}(B_2),\quad
X\in \caA\otimes\caB_1,\; B_2\in \caB_2.
\end{split}
\end{align}
(Recall the proof of Lemma \ref{washi}).

Set $W:=V\lmk V_{\ld_0}\otimes\unit_{\caH_{\psi_2}}\rmk\in \pi_{\omega\otimes\psi_1\otimes\psi_2}
\lmk \caA\otimes\caB_1\otimes\caB_2\rmk''$.
Then we have
\begin{align}
\begin{split}
&W \hat\gamma\lmk X\otimes B_2\rmk
=V\lmk V_{\ld_0}\otimes\unit_{\caH_{\psi_2}}\rmk \lmk
\gamma(X)\otimes\pi_{\psi_2}(B_2)\rmk\\
&=
V\lmk V_{\ld_0} V_{\ld_0}^* \rho_1(X) V_{\ld_0} \otimes \pi_{\psi_2}(B_2)\rmk
=V\lmk\rho_1(X)\otimes\pi_{\psi_2}(B_2)\rmk\lmk V_{\ld_0}\otimes\unit_{\psi_2}\rmk\\
&=V V^* \rho(X) V \lmk\unit_{\omega\otimes\psi_1}\otimes \pi_{\psi_2}(B_2)\rmk
\lmk V_{\ld_0}\otimes\unit_{\psi_2}\rmk\\
&=V V^* \rho(X \otimes B_2) V
\lmk V_{\ld_0}\otimes\unit_{\psi_2}\rmk
= \rho(X \otimes B_2) W,
\end{split}
\end{align}
for all $X\in \caA\otimes\caB_1$, $B_2\in\caB_2$.
Furthermore, we have
\begin{align}
\begin{split}
&W^*W=\lmk V\lmk V_{\ld_0}\otimes\unit_{\caH_{\psi_2}}\rmk\rmk^* V\lmk V_{\ld_0}\otimes\unit_{\caH_{\psi_2}}\rmk
=\gamma(\unit)\otimes \unit_{\caH_{\psi_2}}=\hat\gamma(\unit)
=\id_{\hat\gamma},\\
&WW^*=V(V_{\ld_0}V_{\ld_0}^*\otimes \unit_{\caH_{\psi_2}})V^*=VV^*=\rho(\unit)
=\id_{\rho}.
\end{split}
\end{align}
Therefore, $W$ is an isomorphism in $\cols{\omega\otimes\psi_1\otimes\psi_2}$
from $\hat\gamma$ to $\rho$.
This completes the proof of the essential surjectivity.
\end{proofof}
\begin{proofof}[Corollary \ref{yofuke}]
Let $\tau : \caA\otimes \caB_1\otimes\caB_2\to \caA\otimes\caB_2\otimes\caB_1$
be the flip $*$-isomorphism
\begin{align}
\tau(A\otimes B_1\otimes B_2)= A\otimes B_2\otimes B_1,\quad A\in\caA, \quad B_1\in\caB_1,\quad B_2\in \caB_2.
\end{align}
By the uniqueness of GNS representations,
there exists a unitary $U: \caH_\omega\otimes\caH_{\psi_1}\otimes\caH_{\psi_2}
\to\caH_\omega\otimes\caH_{\psi_2}\otimes\caH_{\psi_1}$
such that
\begin{align}
\begin{split}
\Ad U \lmk \pi_\omega\otimes\pi_{\psi_1}\otimes\pi_{\psi_2}\rmk
=\lmk \pi_\omega\otimes\pi_{\psi_2}\otimes\pi_{\psi_1}\rmk \tau
\end{split}
\end{align}
It is then easy to see that 
\begin{align}
\begin{split}
&H(\rho):=\Ad U \rho\tau^{-1},\quad H(R):=U RU^*,\\
&\quad \rho,\sigma \in \Obj \tcols{\omega\otimes\psi_1\otimes\psi_2},\\
&\quad R\in \Mor_{\tcols{\omega\otimes\psi_1\otimes\psi_2}}(\rho,\sigma)
\end{split}
\end{align}
defines a equivalence $H$ between the braided $C^*$-tensor functors
$\tcols{\omega\otimes\psi_1\otimes\psi_2}$ and
$\tcols{\omega\otimes\psi_2\otimes\psi_1}$.
Hence, by Proposition \ref{sava}, 
$\cols{\omega\otimes\psi_1\otimes\psi_2}$ and
$\cols{\omega\otimes\psi_2\otimes\psi_1}$
are equivalent as braided $C^*$-tensor categories.
By Theorem \ref{stab},
$\cols{\omega\otimes\psi_1}$ and $\cols{\omega\otimes\psi_1\otimes\psi_2}$
are equivalent. Similarly, 
$\cols{\omega\otimes\psi_2}$ and $\cols{\omega\otimes\psi_2\otimes\psi_1}$
are equivalent.
Hence $\cols{\omega\otimes\psi_1}$ and $\cols{\omega\otimes\psi_2}$ are equivalent.

This completes the proof.
\end{proofof}
The pure state case, Proposition \ref{kapibara}, also follows from Lemma \ref{aiiro}.
\begin{proofof}[Proposition \ref{kapibara}]
By Lemma \ref{washi},
$\fol{\omega\otimes\psi}\omega : \col\to \cols{\omega\otimes\psi}$ is fully faithful.
It remains to show that it is essentially surjective.

Set $\pi:=\pi_\omega\otimes\pi_{\psi}$.
To show the essential surjectivity, let $\rho\in \Obj \cols{\omega\otimes\psi}$.
We would like to find a $\rho_\omega\in\Obj \col$ such that
$\fol{\omega\otimes \psi}\omega(\rho_\omega)$ and $\rho$
are isomorphic in $\cols{\omega\otimes \psi}$.
Because of Lemma \ref{shachi} with $p=\rho(\unit)$, it suffices to consider the case that $\rho(\unit)$ belongs to the center
$Z\lmk \pi\lmk
\caA\otimes \caB\rmk''\rmk=\bbC\unit$, i.e., 
$\rho(\unit)=\unit$.

Now we apply Lemma \ref{orange}
with $\caA$, $\caA_1$
$\caB_1$,
$\caB_2$, $\caH_0$, $\pi_0$,
$\caK_1$, $\pi_1$,
$\caK_2$, $\pi_2$,
$U$, $\rho$, $z_\rho$
replaced by
$\caA$, 
$\caA_{\ld_0}$,
$\caB_{\ld_0}$,
$\caB_{\ld_0^c}$,
$\caH_\omega$,
$\pi_\omega$,
$\caH_{\psi\vert_{\caB_{\ld_0}}}$,
$\pi_{\psi\vert_{\caB_{\ld_0}}}$,
$\caH_{\psi\vert_{\caB_{\ld_0^c}}}$,
$\pi_{\psi\vert_{\caB_{\ld_0^c}}}$,
$V_{\rho\ld_0^c}$, $\rho$, 
$\unit_{\caH_\omega}$.
Then 
we have
\begin{align}
\begin{split}
&\Ad \lmk V_{\rho\ld_0^c}\rmk\pi\lmk A_1\otimes B_1\rmk=\rho(A_1\otimes B_1),\quad
A_1\in \caA_{\ld_0},\quad
B_1\in \caB_{\ld_0},\\
& V_{\rho\ld_0^c}^* V_{\rho\ld_0^c}
\in \pi\lmk\lmk\caA\otimes\caB\rmk_{\ld_0}\rmk',
\\
&\rho\lmk B_2\rmk=\unit_{\caH_\omega}\otimes\pi_\psi(B_2),\quad
B_2\in \caB_{\ld_0^c}
\end{split}
\end{align}
and the conditions of Lemma \ref{orange} are satisfied.
Applying Lemma \ref{orange} with $p=\unit$,
we obtain a partial isometry
\begin{align}
V\in\pi\lmk\caA\otimes\caB_{\ld_0}\rmk''
=\pi_\omega\lmk \caA\rmk''\bar\otimes \pi_{\psi\vert_{\caB_{\ld_0}}}(\caB_{\ld_0})''
\bar \otimes \bbC\unit_{\caH_{\psi\vert_{\caB_{\ld_0^c}}}}
\end{align}
satisfying
\begin{align}\label{automorphic}
\begin{split}
&V \pi\lmk A_1\otimes B_1\rmk=\rho\lmk A_1\otimes B_1 \rmk V,\quad
A_1\in\caA_{\ld_0},\quad B_1\in\caB_{\ld_0},\\
&F\otimes\unit_{\caH_{\psi}}:=V^*V\in 
\lmk \pi_\omega(\caA_{\ld_0})'\cap \pi_\omega\lmk\caA\rmk''\rmk
\bar\otimes\unit_{\caH_\psi}
=\pi_\omega(\caA_{\ld_0})'\bar\otimes\unit_{\caH_\psi}.
\end{split}
\end{align}
In the last equality, we used that $\omega$ is pure.
We would like to modify $V$ to a unitary.

In order to relate $V^*V$ with $\unit$,
note that  the projection $F\in \pi_\omega\lmk\caA\rmk''\cap  \pi_\omega\lmk\caA_{\ld_0}\rmk'\subset 
\caB(\caH_\omega)$ defined
in (\ref{automorphic}) is infinite in $\caB(\caH_\omega)$,
 because $\pi_\omega(\caA_{\ld_0})''$ is an infinite factor.
Therefore,  by Corollary 6.3.5 of \cite{KR},
$F$ is equivalent to $\unit_{\caH_\omega}$ in $\caB(\caH_\omega)$, i.e.,
there exists an isometry $v\in \caB(\caH_\omega)$ with $vv^*=F$.

Next we relate $VV^*$ and $\unit$.
Because $\omega,\psi$ are pure, $\pi(\caA_{\ld_0})''=\pi_\omega(\caA_{\ld_0})''\bar\otimes \bbC\unit_{\psi}$,
$\pi(\caB_{\ld_0})''=\bbC \unit_{\omega}\bar \otimes \pi_{\psi}(\caB_{\ld_0})''$, and
$\pi(\caA_{\ld_0}\otimes \caB_{\ld_0})''
=\pi_\omega(\caA_{\ld_0})''\bar\otimes \pi_{\psi}(\caB_{\ld_0})''$
are factors.
Note that by $\Ad V_{\rho\ld_0^c}$,
$\rho(\caA_{\ld_0})''$ (resp. $\rho(\caB_{\ld_0})''$, 
$\rho(\caA_{\ld_0}\otimes
\caB_{\ld_0})''$) and the factor
$\pi(\caA_{\ld_0})''$ (resp. $\pi(\caB_{\ld_0})''$,
$\pi(\caA_{\ld_0}\otimes
\caB_{\ld_0})''$) are isomorphic.
By the assumption, they are all infinite factors.
Note that $\rho(\caB_{\ld_0})''\subset \rho\lmk\caB_{\ld_0^c}\rmk'=
\pi \lmk\caB_{\ld_0^c}\rmk'$ is of the form
$\caM_1\bar\otimes\bbC\unit_{\caH_{\psi\vert_{\caB_{\ld_0^c}}}}:=\rho(\caB_{\ld_0})''\subset
\pi\lmk\caA\otimes\caB_{\ld_0}\rmk''$
with $\caM_1\subset \caB(\caH_\omega\otimes\caH_{\psi\vert_{\caB_{\ld_0}}})$.
Because $\rho(\caB_{\ld_0})''$ is a factor, $\caM_1$ is also a factor.
Therefore,
$\caN_1:=\pi\lmk\caA\otimes\caB_{\ld_0}\rmk''\cap \rho(\caB_{\ld_0})'
=\caM_1'\bar\otimes\bbC\unit_{\caH_{\psi\vert_{\caB_{\ld_0^c}}}}$
is a factor.
Note that
\begin{align}
\begin{split}
G:=VV^*\in \pi\lmk\caA\otimes\caB_{\ld_0}\rmk''\cap\rho\lmk
\lmk \caA\otimes\caB\rmk_{\ld_0}\rmk'
\subset 
\pi\lmk\caA\otimes\caB_{\ld_0}\rmk''\cap\rho\lmk
 \caB_{\ld_0}\rmk'=\caN_1.
\end{split}
\end{align}
We claim that $G$ is infinite in $\caN_1$.
In fact, as we noted above, $\rho(\caA_{\ld_0})''\lmk \simeq \pi\lmk \caA_{\ld_0}\rmk''\rmk$
is an infinite factor.
Because of $$\rho(\caA_{\ld_0})''\subset \rho(\caB_{\ld_0^c})'=\pi(\caB_{\ld_0^c})'=\pi(\caA\otimes\caB_{\ld_0})'',$$
$\rho(\caA_{\ld_0})''$ is
 a subalgebra of $\caN_1$.
 Since $\rho(\caA_{\ld_0})''$ is an infinite von Neumann algebra,
there exists an isometry $v'\in \rho(\caA_{\ld_0})''\subset \caN_1$ satisfying $v'v'^*\neq \unit$.
Note that $G\in \rho(\caA_{\ld_0})'$ gives a  $\sigma$-weak continuous homomorphism
$\rho(\caA_{\ld_0})''\ni x\mapsto xG$.
Because $\rho(\caA_{\ld_0})''$ is a factor, this is faithful.
Therefore, $G=(Gv')^*(Gv')$ and $(Gv')(Gv')^*$ are different, and
because $Gv'\in \caN_1$, it means that $G$ is infinite in $\caN_1$.
The same $v'$ shows that $\unit$ is infinite in $\caN_1$.
Therefore, by Corollary 6.3.5 of \cite{KR},
$G$ is equivalent to $\unit$ in $\caN_1$, i.e.,
there is an isometry $w$ in $\caN_1$ such that $ww^*=G$.

Now using  $v\in \caB(\caH_\omega)$, $w\in \caN_1$  as above,
set
\begin{align}
W:=w^* V(v\otimes\unit_\psi)\in
\pi\lmk\caA\otimes\caB_{\ld_0}\rmk''.
\end{align}
Then for any $B_1\in \caB_{\ld_0}$, we have
\begin{align}
\begin{split}
&W\pi(B_1)
=w^* V(v\otimes\unit)\lmk \unit_{\caH_\omega}\otimes \pi_\psi(B_1)\rmk
=w^* V\lmk \unit_{\caH_\omega}\otimes \pi_\psi(B_1)\rmk
(v\otimes\unit)\\
&=w^* V \pi(B_1) (v\otimes\unit)
=w^* \rho(B_1) V(v\otimes\unit)
= \rho(B_1) w^* V(v\otimes\unit)
=\rho(B_1) W,
\end{split}
\end{align}
using $w\in \caN_1\subset \rho(\caB_{\ld_0})'$.
Because $W$ belongs to $\pi\lmk\caA\otimes\caB_{\ld_0}\rmk''$,
it commutes with $\rho(B_2)=\pi(B_2)$ for any $B_2\in \caB_{\ld_0^c}$,
hence we obtain
\begin{align}\label{moss}
\begin{split}
W\pi (B)
=\rho(B) W,\quad B\in \caB.
\end{split}
\end{align}
Furthermore, we have
\begin{align}
\begin{split}
&W^*W=(v^*\otimes\unit) V^* ww^* V(v\otimes\unit)
=(v^*\otimes\unit) V^* G V(v\otimes\unit)
=(v^*\otimes\unit) V^*V(v\otimes\unit)
=v^* Fv \otimes\unit=\unit,\\
& WW^*=w^* V(v\otimes\unit)(v^*\otimes\unit) V^* w
=w^* V(F\otimes\unit) V^* w
=w^* VV^* w
=w^* G w=\unit,
\end{split}
\end{align}
hence $W$ is a unitary in $ \pi\lmk\caA\otimes\caB_{\ld_0}\rmk''$.

Now, because of the unitarity of $W$ and (\ref{moss}) we have
\begin{align}
\begin{split}
\Ad W^*\lmk \rho\lmk\caA\rmk\rmk
\subset \Ad W^* \lmk \rho(\caB)'\rmk
=\pi(\caB)'=\pi_\omega(\caA)''\otimes\bbC\unit_{\caH_\psi}.
\end{split}
\end{align}
Therefore, there exists a linear map $\rho_1:\caA\to\caB(\caH_\omega)$
such that
\begin{align}
\begin{split}
\Ad W^*\rho(A)=\rho_1(A)\otimes\unit_{\caH_\psi},\quad
A\in \caA.
\end{split}
\end{align}
Because $W$ is a unitary, this $\rho_1$ is a representation
with $\rho_1(\unit)=\unit_{\caH_\omega}$.
Using this $\rho_1$ and (\ref{moss}), we obtain
\begin{align}
\begin{split}
\rho_1\otimes\pi_\psi=\Ad W^*\circ\rho.
\end{split}
\end{align}

Next, we claim for any cone $\ld$, there exists a unitary $U_\ld$
on $\caH_\omega$ such that
\begin{align}\label{koke}
\begin{split}
\Ad U_\ld\pi_\omega(A)
=\rho_1(A),\quad A\in \caA_{\ld^c}.
\end{split}
\end{align}
In fact, for any $
A\in \caA_{\ld^c}$,
we have
\begin{align}
\begin{split}
\rho_1(A)\otimes\unit
=\Ad W^*\rho(A\otimes\unit)
=\Ad W^* V_{\rho\ld}\pi(A\otimes\unit)
=\Ad W^* V_{\rho\ld}\lmk \pi_\omega(A)\otimes\unit\rmk.
\end{split}
\end{align}
From this, there is a $*$-isomorphism
$\tau: \pi_\omega(\caA_{\ld^c})'' \to \rho_1(\caA_{\ld^c})''$
such that $\tau\lmk\pi_\omega(A)\rmk=\rho_1(A)$ for all $A\in \caA_{\ld^c}$.
In particular, $\rho_1(\caA_{\ld^c})''$ is an infinite factor.
Similarly, $\rho_1(\caA_{\ld})''$ is an infinite factor.
Note that the commutant
$\pi_\omega(\caA_{\ld^c})'$ of $\pi_\omega(\caA_{\ld^c})''$ 
 is a factor including
an infinite factor $\pi_\omega(\caA_{\ld})''$.
Therefore, $\pi_\omega(\caA_{\ld^c})'$ is an infinite factor.
Similarly, the commutant $\rho_1(\caA_{\ld^c})'$
of $\rho_1(\caA_{\ld^c})''$ is also a factor
including an infinite factor $\rho_1(\caA_\ld)''$.
Therefore, $\rho_1(\caA_{\ld^c})'$ is an infinite factor.
From these, using Corollary 8.12 of \cite{struatilua2019lectures}, the  $*$-isomorphism
$\tau: \pi_\omega(\caA_{\ld^c})'' \to \rho_1(\caA_{\ld^c})''$
is spatial. Namely, there exists a unitary
$U_\ld$ on $\caH_\omega$ such that
\begin{align}
\begin{split}
\Ad U_\ld(x)=\tau(x),\quad x\in  \pi_\omega(\caA_{\ld^c})''.
\end{split}
\end{align}
In particular (\ref{koke}) holds.

Now we claim
\begin{align}
\rho_\omega:=\Ad U_{\ld_0}^*\rho_1 \in\ool.
\end{align}
Because $U_{\ld_0}$ is a unitary, this gives a representation of $\caA$
on $\caH_\omega$.
For any cone $\ld$, we have
\begin{align}
\begin{split}
\rho_\omega(A)=\Ad U_{\ld_0}^*\rho_1(A)
=\Ad U_{\ld_0}^* U_\ld \pi_\omega(A),\quad
A\in \caA_{\ld^c},
\end{split}
\end{align}
with $U_{\ld_0}^* U_\ld\in \caB(\caH_\omega)=\pi_\omega(\caA)''$
a unitary.
Hence, $U_{\ld_0}^* U_\ld$ belongs to $\caV_{\rho_\omega\ld}$,
and $\rho_\omega$ belongs to $\ool$.

Finally, we claim that $\hat\rho_\omega:=\fol{\omega\otimes\psi}{\omega}(\rho_\omega)$
is isomorphic to $\rho$ in $\cols{\omega\otimes\psi}$.
In fact, recalling that
the isomorphism of Lemma \ref{hina} is
$\Theta^{-1}(x)=x\otimes\unit$, $x\in\pi_\omega(\caA)''$ in this setting,
 we have
\begin{align}
\begin{split}
\hat\rho_\omega\lmk A\otimes B\rmk
=\Theta^{-1}\lmk\rho_\omega(A)\rmk\pi(B)
=\rho_\omega(A)\otimes \pi_\psi(B),
\end{split}
\end{align}
for any $A\in\caA$, $B\in \caB$.
Set
\begin{align}
\begin{split}
U:=W(U_{\ld_0}\otimes\unit_{\caH_\psi})
\in \caU\lmk \pi(\caA\otimes\caB)''\rmk
=\caU\lmk \caB\lmk\caH_\omega\otimes\caH_\psi\rmk\rmk.
\end{split}
\end{align}
We have
$\rho(\unit )U\hat\rho_\omega(\unit)=U$
and
\begin{align}
\begin{split}
&U\hat\rho_\omega(A\otimes B)
=U\lmk\rho_\omega(A)\otimes\pi_\psi(B)\rmk
=W(U_{\ld_0}\otimes\unit_{\caH_\psi}) 
\lmk U_{\ld_0}^* \rho_1(A) U_{\ld_0}\otimes\pi_\psi(B)\rmk\\
&=W \lmk \rho_1(A)\otimes\pi_\psi(B)\rmk \lmk U_{\ld_0}\otimes\unit_{\caH_\psi} \rmk
=\rho(A\otimes B) W\lmk U_{\ld_0}\otimes\unit_{\caH_\psi} \rmk
=\rho(A\otimes B) U,
\end{split}
\end{align}
for any $A\in\caA$ and $B\in \caB$.
This proves that the unitary $U$ belongs to
$\Mor_{\cols{\omega\otimes\psi}}(\hat\rho_\omega,\rho)$.
This completes the proof of the essential surjectivity.
\end{proofof}

\section{Proof of Theorem \ref{main}} \label{mainprf}
In this section we prove our main theorem, Theorem \ref{main}.

Let us first recall the definition of approximately-factorizable automorphisms.
\begin{defn}\label{qfdef}
Let $\alpha$ be an automorphism of a two-dimensional quantum spin system $\caA$.
We say that $\alpha$ is {approximately-factorizable}
if the following condition holds.
\begin{description}
\item[(i)]
For any cone $\Lambda$ and $\delta>0$,
there are automorphisms $\alpha_{\Lambda}, \tilde \alpha_{\Lambda}\in\Aut\lmk\caA_{\Lambda}\rmk$,
$\alpha_{\Lambda^c}, \tilde \alpha_{\Lambda^c}\in \Aut\lmk\caA_{\Lambda^c}\rmk$
and $\Xi_{\Lambda, \delta}, \tilde \Xi_{\Lambda, \delta}\in \Aut\lmk\caA_{\Lambda_{\delta}\cap (\Lambda^c)_\delta}\rmk$
and unitaries $v_{\Lambda\delta},\tilde v_{\Lambda\delta}\in\at$
such that 
\begin{align}
\begin{split}
&\alpha=\Ad\lmk v_{\Lambda\delta}\rmk\circ
\Xi_{\Lambda, \delta}\circ \lmk\alpha_{\Lambda}\otimes \alpha_{\Lambda^c}\rmk,\\
&\alpha^{-1}=\Ad\lmk \tilde v_{\Lambda\delta}\rmk\circ
\tilde \Xi_{\Lambda, \delta}\circ \lmk\tilde \alpha_{\Lambda}\otimes \tilde \alpha_{\Lambda^c}\rmk.
\end{split}
\end{align}
\item[(ii)]
For each $\delta,\delta'>0,\varphi\in (0,2\pi)$, there exists a decreasing
function $g_{\varphi,\delta,\delta'}(t)$ on $\bbR_{\ge 0}$
with $\lim_{t\to\infty}g_{\varphi,\delta,\delta'}(t)=0$.
For any cone $\Lambda$ with $\varphi=|\arg\Lambda|$, there are unitaries
 $v_{\Lambda,\delta,\delta',t}', \tilde v_{\Lambda,\delta,\delta',t}'\in \caA_{\Lambda_{\delta+\delta'}-t\bm e_{\Lambda}}$
 satisfying
\begin{align}\label{uappro}
\lV
v_{\Lambda,\delta}-v_{\Lambda,\delta,\delta',t}'
\rV,\lV
\tilde v_{\Lambda,\delta}-\tilde v_{\Lambda,\delta,\delta',t}'
\rV
\le g_{\varphi,\delta,\delta'}(t),
\end{align}
for unitaries $v_{\Lambda\delta},\tilde v_{\Lambda\delta}$ in (i).
\end{description}
\end{defn} 
We note that approximately factorizable automorphisms preserve
the approximate Haag duality and the properly infiniteness of cone algebras.
\begin{lem}
Let $\omega$ be a state on a two-dimensional quantum spin system $\caA$
with properly infinite cone algebras, satisfying the approximate Haag duality.
Let $\alpha$ be an approximately factorizable automorphism on $\caA$.
Then $\omega\alpha$
has properly infinite cone algebras and satisfies the approximate Haag duality.
\end{lem}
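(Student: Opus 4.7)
The plan is to work in the GNS representation $\hat\pi:=\pi_\omega\circ\alpha$ of $\omega\alpha$ on $\caH_\omega$; observe that $\hat\pi(\caA)''=\pi_\omega(\caA)''$ since $\alpha\in\Aut(\caA)$. For proper infiniteness of cone algebras, fix a cone $\ld$ and apply Definition \ref{qfdef}(i) with a small $\delta>0$: $\alpha=\Ad(v_{\ld,\delta})\circ\Xi_{\ld,\delta}\circ(\alpha_\ld\otimes\alpha_{\ld^c})$. Since $\alpha_\ld\in\Aut(\caA_\ld)$ one has $\alpha_\ld(\caA_\ld)=\caA_\ld\supset\caA_{\ld\setminus(\ld^c)_\delta}$, and $\Xi_{\ld,\delta}$ fixes $\caA_{\ld\setminus(\ld^c)_\delta}$ pointwise because its support is disjoint from $\ld_\delta\cap(\ld^c)_\delta$. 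Hence $\alpha(\caA_\ld)\supset\Ad(v_{\ld,\delta})(\caA_{\ld^\flat})$ for any cone $\ld^\flat\subset\ld\setminus(\ld^c)_\delta$. Applying $\pi_\omega$ and taking the bicommutant embeds the properly infinite algebra $\Ad(\pi_\omega(v_{\ld,\delta}))(\pi_\omega(\caA_{\ld^\flat})'')$ into $\hat\pi(\caA_\ld)''$ with the same identity $\unit$; two orthogonal isometries in this subalgebra summing to $\unit$ then certify proper infiniteness of $\hat\pi(\caA_\ld)''$.

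For the approximate Haag duality, fix $\zeta\in(0,\pi)$, $0<\varepsilon<(\pi-\zeta)/4$, and a cone $\ld$ with $|\arg\ld|=2\zeta$. Choose parameters $\delta,\varepsilon_1,\delta'>0$ with $\delta+\varepsilon_1+\delta'<\varepsilon$, and set $\ld':=\ld_\varepsilon(-R')$ for $R'$ to be fixed. The factorization applied to $\caA_{\ld^c}$ yields $\alpha(\caA_{\ld^c})\supset\Ad(v_{\ld,\delta})(\caA_{\ld_\delta^c})$, hence
\[
\hat\pi(\caA_{\ld^c})'\cap\hat\pi(\caA)''\subset\Ad(\pi_\omega(v_{\ld,\delta}))\bigl(\pi_\omega(\caA_{\ld_\delta^c})'\cap\pi_\omega(\caA)''\bigr).
\]
The approximate Haag duality of $\omega$ applied to $\ld_\delta$ at parameter $\varepsilon_1$ supplies a unitary $U_{\ld_\delta,\varepsilon_1}\in\caU(\pi_\omega(\caA)'')$ and $R:=R_{\zeta+\delta,\varepsilon_1}$ so that the right-hand side sits inside $\Ad(\pi_\omega(v_{\ld,\delta})U_{\ld_\delta,\varepsilon_1})\pi_\omega(\caA_{(\ld_\delta)_{\varepsilon_1}(-R)})''$. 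The factorization applied to $\ld'$ gives $\alpha(\caA_{\ld'})\supset\Ad(v_{\ld',\delta'})(\caA_{\ld'\setminus(\ld'^c)_{\delta'}})$; for $R'$ sufficiently large (a geometric condition depending on $\delta,\varepsilon_1,\delta',\varepsilon,\zeta$), $(\ld_\delta)_{\varepsilon_1}(-R)\subset\ld'\setminus(\ld'^c)_{\delta'}$, yielding $\pi_\omega(\caA_{(\ld_\delta)_{\varepsilon_1}(-R)})''\subset\Ad(\pi_\omega(v_{\ld',\delta'})^*)\hat\pi(\caA_{\ld'})''$. Setting $\hat U_{\ld,\varepsilon}:=\pi_\omega(v_{\ld,\delta})U_{\ld_\delta,\varepsilon_1}\pi_\omega(v_{\ld',\delta'})^*$ then delivers condition (i) of Definition \ref{AHdef} with $\hat R_{\zeta,\varepsilon}:=R'$.

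Condition (ii) is obtained by approximating each factor of $\hat U_{\ld,\varepsilon}$ locally. Definition \ref{qfdef}(ii) approximates $v_{\ld,\delta}$ and $v_{\ld',\delta'}$ by local unitaries in $\caA_{\ld_{\delta+\delta''}(-t)}$ and $\caA_{\ld'_{\delta'+\delta''}(-t)}$ to within $g_\cdot(t)$, and the Haag duality of $\omega$ approximates $U_{\ld_\delta,\varepsilon_1}$ by a unitary in $\pi_\omega(\caA_{(\ld_\delta)_{2\varepsilon_1}(-t)})''$ to within $f_{\zeta+\delta,\varepsilon_1}(t)$. With the parameter choices above, each of these regions sits in $\ld_{2\varepsilon}(-t)$, so the combined approximant lies in $\pi_\omega(\caA_{\ld_{2\varepsilon}(-t)})''$ with error a decreasing function of $t$ tending to $0$. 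To convert this into an element of $\hat\pi(\caA_{\ld_{2\varepsilon}(-t)})''$, I apply the factorization of $\alpha^{-1}$ to establish $\pi_\omega(\caA_\Gamma)''\subset\Ad(\hat\pi(\tilde v_{\Gamma,\delta^*}))\hat\pi(\caA_{\Gamma_{\delta^*}})''$ for any cone $\Gamma$, absorbing the fixed correction unitary $\hat\pi(\tilde v_{\ld_{2\varepsilon}(-R'),\delta^*})$ into $\hat U_{\ld,\varepsilon}$ at the cost of slightly enlarging $\varepsilon$.

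The main obstacle is the joint tuning of all the angle parameters ($\delta,\delta',\varepsilon_1,\delta'',\delta^*$) and apex shifts ($R,R'$) so that every cone inclusion holds, the conversion between $\pi_\omega$- and $\hat\pi$-cone algebras preserves the approximation bounds, and the combined error $\hat f_{\zeta,\varepsilon}(t)$ remains a valid decreasing, vanishing function. Once the parameters are chosen compatibly (which the hypotheses permit), the proof reduces to bookkeeping.
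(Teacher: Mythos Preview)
Your proposal is correct and follows essentially the same approach as the paper. The proper infiniteness argument matches the paper's almost verbatim: show $\hat\pi(\caA_\ld)''\supset\Ad(\pi_\omega(v_{\ld,\delta}))\pi_\omega(\caA_{\ld_{-\delta}})''$ and conclude from the properly infinite subalgebra sharing the same unit. For approximate Haag duality the paper simply defers to \cite{MTC}; your outline (sandwich $\pi_\omega(\caA_{\ld_\delta^c})'\cap\pi_\omega(\caA)''$ between $\Ad$'s of the factorization unitaries and the Haag-duality unitary for $\omega$, then tune angles) is exactly the expected argument, and you correctly flag the joint parameter bookkeeping and the $\pi_\omega$-to-$\hat\pi$ cone-algebra conversion as the only real work.
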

\begin{proof}
By the same proof as \cite{MTC}, we can show that
$\omega\alpha$ satisfies the approximate Haag duality.

Let $(\caH,\pi)$ be a GNS representation of $\omega$. For any cone $\ld$, with $\delta>0$ small enough,
we have
\begin{align}
\begin{split}
\pi\alpha(\caA_\ld)
=\Ad\lmk \pi( v_{\ld\delta} )\rmk \lmk \pi\Xi_{\ld\delta}(\caA_\ld)\rmk
\supset 
\Ad\lmk \pi( v_{\ld\delta} )\rmk \lmk
\pi (\caA_{\ld_{-\delta}})\rmk,
\end{split}
\end{align}
and the von Neumann algebra $\pi\alpha(\caA_\ld)''$ includes
a properly infinite von Neumann algebra $\Ad\lmk \pi( v_{\ld\delta} )\rmk \lmk
\pi (\caA_{\ld_{-\delta}})''\rmk$.
Hence $\pi\alpha(\caA_\ld)''$ is properly infinite,
and $\omega\alpha$  has  properly infinite cone algebras.\
\end{proof}

Now we prove Theorem \ref{main}.
\begin{proofof}[Theorem \ref{main}]
We denote by $\alpha_{13}$ the automorphism $\alpha$ acting on the first and third components of
$\caA\otimes\caB_2\otimes\caB_1$.
By the same proof as \cite{MTC}, 
the braided $C^*$-tensor categories $\tcols{\omega_1\otimes\psi_2\otimes\psi_1}$ and
$\tcols{(\omega_1\otimes\psi_2\otimes\psi_1)\alpha_{13}}$ are equivalent.
By Proposition \ref{sava}, this means
$\cols{\omega_1\otimes\psi_2\otimes\psi_1}$ and
$\cols{(\omega_1\otimes\psi_2\otimes\psi_1)\alpha_{13}}$ are equivalent.
By Theorem \ref{stab}, 
$\cols{\omega_1\otimes\psi_2}$ and  $\cols{\omega_1\otimes\psi_2\otimes\psi_1}$ 
are equivalent.
By Theorem \ref{cofe}, there is a faithful braided tensor functor 
from $\cols{\omega_2\otimes\psi_2}=\cols{\lmk \omega_1\otimes\psi_2\otimes\psi_1\rmk\alpha_{13}\vert_{\caA\otimes\caB_2}}$
to $\cols{\lmk \omega_1\otimes\psi_2\otimes\psi_1\rmk\alpha_{13}}$,
hence to $\cols{\omega_1\otimes\psi_2}$.
This proves the Theorem.

\end{proofof}
{\bf Acknowledgment.}\\
The author is grateful to David P\'erez-Garc\' ia and Angelo Lucia for the fruitful discussion.

\appendix
\section{Notation}\label{notation}
The unit of a $C^*$-algebra $\mathfrak A$ will be denoted by $\unit_{\mathfrak A}$,
although we will frequently omit the subscript and write $\unit$.
The set of all unitaries in $\mathfrak A$ is denoted by $\caU(\mathfrak A)$.
For a state $\omega$ on a $C^*$-algebra $\mathfrak A$, when we write
$(\caH_\omega,\pi_\omega,\Omega_\omega)$, it means a GNS triple
of $\omega$.
For two UHF algebras $\mathfrak A$, $\mathfrak B$, we occasionally denote the sub-algebra
$\mathfrak A\otimes\unit_{\mathfrak B}$ (resp. $\unit_{\mathfrak A}\otimes\mathfrak B$) of 
$\mathfrak A\otimes\mathfrak B$ by
$\mathfrak A$ (resp. $\mathfrak B$).
With the same spirit, we frequently denote $A\otimes\unit_{\mathfrak B},
\unit_{\mathfrak A}\otimes  B\in \mathfrak A\otimes\mathfrak B$ by $A$, $B$ respectively.
We denote by $\Aut(\mathfrak A)$ the automorphism group of $\mathfrak A$.

For a von Neumann algebra $\caM$, we denote its center by $Z(\caM)$.
We denote by $Z_{\caM}(x)$ the central carrier of $x\in\caM$ in $\caM$. 
For von Neumann algebras $\caM$, $\caN$, $\caM\bar \otimes\caN$ is the von Neumann algebra
tensor product of them.

 For a Hilbert space $\caH$, $\caB(\caH)$ denotes the algebra of all bounded operators
on $\caH$. The identity map on a Hilbert space $\caH$ is denoted by $\unit_{\caH}$, but the subscript is occationally 
omitted. For a state $\omega$, $\unit_{\caH_\omega}$
is also denoted by $\unit_\omega$ for simplicity.
For operators $V,x$ on a Hilbert space $\caH$, we set $\Ad V(x):=V x V^*$.

\noindent{Funding and/or Conflicts of interests/Competing interests This work was supported by JSPS KAKENHI Grant Number 19K03534 and 22H01127.
It was also supported by JST CREST Grant Number JPMJCR19T2.
The author has no competing interests to declare that are relevant to the content of this article.}\\\\
\noindent{DATA AVAILABILITY
The data that support the findings of this study are available within the article.}

\bibliographystyle{alpha}
\bibliography{MTO}

\end{document}